\documentclass[11pt,letter]{article}
\usepackage{fullpage}

\usepackage{amsmath}
\usepackage{xspace}
\usepackage{amsthm}

\usepackage[T2A,T1]{fontenc}
\usepackage[utf8]{inputenc}
\usepackage{babel}
\usepackage{indentfirst,verbatim} 
\usepackage{listings} 
\usepackage[normalem]{ulem}
\usepackage{soul}
\usepackage{amsfonts}
\usepackage{amsthm}
\usepackage{hyperref}

\usepackage{amsmath,amsfonts,amsthm}
\usepackage{mathbbol}
\usepackage{amsthm}
\usepackage{graphicx,color}
\usepackage{boxedminipage}
\usepackage[linesnumbered, boxed, noend]{algorithm2e}
\usepackage{framed}
\usepackage{thmtools}
\usepackage{thm-restate}
\usepackage{xspace}
\usepackage[disable]{todonotes}
\usepackage{pgfplots}
\usetikzlibrary{calc}
\usetikzlibrary{shapes}
\usetikzlibrary{arrows.meta}
\usepackage{colortbl}
\usepackage{tcolorbox} 

\usepackage{xifthen}
\usepackage{tabularx}
\usepackage{capt-of}
\usepackage{thmtools}
\usepackage{thm-restate}
\usepackage{subcaption}
\usepackage{booktabs}
\usepackage{calc}
\usepackage{cleveref}
\usepackage{xcolor}

\theoremstyle{plain}

\newtheorem{observation}{Observation} 
 
\newtheorem{lemma}{Lemma} 
\newtheorem{claim}{Claim} 
\newtheorem{proposition}{Proposition} 

\theoremstyle{definition}
\newtheorem{definition}{Definition}

\newenvironment{claimproof}{\medskip\noindent \emph{Proof of Claim~\theclaim.}  }{\hfill\cqed\medskip}

\newcommand{\cqed}{\renewcommand{\qedsymbol}{$\lrcorner$}\qed}
\newcommand{\pname}{\textsc}
\newcommand{\polyn}{n^{\Oh(1)}}
\newcommand{\polyI}{|\mathcal{I}|^{\Oh(1)}}
\renewcommand{\subset}{\subseteq}

\newlength{\RoundedBoxWidth}
\newsavebox{\GrayRoundedBox}
\newenvironment{GrayBox}[1]%
{\setlength{\RoundedBoxWidth}{.93\textwidth}
	\def\boxheading{#1}
	\begin{lrbox}{\GrayRoundedBox}
		\begin{minipage}{\RoundedBoxWidth}}%
		{   \end{minipage}
	\end{lrbox}
	\begin{center}
		\begin{tikzpicture}%
			\node(Text)[draw=black!20,fill=white,rounded corners,%
			inner sep=2ex,text width=\RoundedBoxWidth]%
			{\usebox{\GrayRoundedBox}};
			\coordinate(x) at (current bounding box.north west);
			\node [draw=white,rectangle,inner sep=3pt,anchor=north west,fill=white] 
			at ($(x)+(6pt,.75em)$) {\boxheading};
		\end{tikzpicture}
\end{center}}     

\newenvironment{defproblemx}[2][]{\noindent\ignorespaces%
	\FrameSep=6pt%
	\parindent=0pt%
	\vspace*{-1.5em}
	\ifthenelse{\isempty{#1}}{%
		\begin{GrayBox}{#2}%
		}{%
			\begin{GrayBox}{#2 parameterized by~{#1}}%
			}
			\begin{tabular*}{\textwidth}{@{\hspace{.1em}} >{\itshape} p{1.8cm} p{0.8\textwidth} @{}}%
			}{
			\end{tabular*}%
		\end{GrayBox}%
		\ignorespacesafterend
	}  
	
	\newcommand{\defparproblema}[4]{
		\begin{defproblemx}[#3]{#1}
			Input:  & #2 \\
			Task: & #4
		\end{defproblemx}
	}%

	\newcommand{\Oh}{\mathcal{O}}

	\newcommand{\probMASMST}{\pname{Maximum Weight Acyclic Subgraph Above MaxST}\xspace}
	\newcommand{\probMASMSTshort}{\pname{MAS/MaxST}\xspace}
	\newcommand{\probWDFAS}{\pname{Weighted Directed Feedback Arc Set}\xspace}
	\newcommand{\probWDFASshort}{\pname{WDFAS}\xspace}
	
	\newcommand{\probMAS}{\pname{Maximum Acyclic Subgraph}\xspace}		
	
	\DeclareMathOperator{\operatorClassNP}{{\sf NP}}
	\newcommand{\classNP}{\ensuremath{\operatorClassNP}}

	\DeclareMathOperator{\operatorClassFPT}{{\sf FPT}\xspace}
	\newcommand{\classFPT}{\ensuremath{\operatorClassFPT}\xspace}
	\DeclareMathOperator{\operatorClassW}{{\sf W}}
	\newcommand{\classW}[1]{\ensuremath{\operatorClassW[#1]}}

	\DeclareMathOperator{\operatorClassXP}{{\sf X}P\xspace}
	\newcommand{\classXP}{\ensuremath{\operatorClassXP}\xspace}
	\lstnewenvironment{code}{\lstset{language=Haskell,basicstyle=\small}}{}
	\DeclareMathOperator{\MinST}{MinST}
	\DeclareMathOperator{\MaxST}{MaxST}
	\DeclareMathOperator{\MSF}{MaxST}
	
	\DeclareMathOperator{\Inv}{Inv}
	\DeclareMathOperator{\Profit}{\bf p}
	\DeclareMathOperator{\RemainingProfit}{\bf rp}
	
	\DeclareMathOperator{\wt}{{\bf w}}
	
	\newcommand{\intwtdef}{\wt:E(G)\to \mathbb{Z}_{\ge 1}}

	\theoremstyle{plain}
	\newtheorem{rrule}{Reduction Rule}
	\newtheorem{problem}{Open Question}

\title{\textsc{Exploiting Spanning Trees for Directed Acyclicity}\thanks{Sergei Khargeliia was supported by PJSC <<Gazprom Neft>>, aggr. \# \foreignlanguage{russian}{ГПН-26/03000/00502/Р}.} \thanks{Danil Sagunov was supported by the Ministry of Science and Higher Education of the Russian
		Federation (agreement 075-15-2025-344 dated 29/04/2025 for Saint Petersburg Leonhard Euler
		International Mathematical Institute at PDMI RAS).}
		 }
\author{
	Sergei Khargeliia\thanks{
		ITMO University,
		Saint Petersburg State University,
		St. Petersburg,
		Russia.}\\sergey.khargelia@gmail.com
	\and
	Danil Sagunov\thanks{
		Markov Lab,
		Saint Petersburg State University,
		St. Petersburg,
		Russia.
	}\\danilka.pro@gmail.com
}
\date{}
\usetikzlibrary{arrows}

\begin{document}

\maketitle


 \begin{abstract}
 	
 	We study the weighted case of the  \textsc{Maximum Acyclic Subgraph (MAS)} problem, where each edge of a given directed graph has a positive weight assigned, and the task is to find a maximum-weight acyclic edge set.
 	The famous and well-studied random ordering lower bound guarantees the existence of an acyclic set that gives at least the half of the total edge weight.
  
 	The maximum spanning tree (MaxST) guarantee, which is the weight of a maximum-weight acyclic subgraph of the underlying undirected graph of $G$, is another natural lower bound for the weight of an acyclic subgraph.
 	A solution of this weight dominates the random ordering solution on instances where MaxST spans the most of the total edge weight.
 	
 	Our main contribution are two parameterized algorithms that find acyclic subgraphs of total weight larger than the weight of the MaxST of $G$.
 	Both our algorithms find a solution of total weight at least $\MaxST(G)+k$, for a given integer $k\ge 0$, or report that it does not exist, and
 	\begin{itemize}
 		\item First of our algorithms runs in time $2^{k^{\Oh(1)}}\cdot\polyI$ and works when all weights are integers;
 		\item Our second algorithm handles rational weights not less than $1$, and its running time is upper-bounded by $n^{k^{\Oh(1)}}\cdot \polyI$.
 	This positive result is rather surprising since solving \textsc{MAS} above the random ordering lower bound is \classNP-hard in the same rational weights scenario, when $k=1$.
 	 	\end{itemize} 	
 	 	
 	 	Our findings unravel intricate connections between structure of MaxSTs and directed cycles, use perfect graph theorem to tackle rational weights, and raise graph-theoretic questions that are interesting on their own.
 	Of another importance, this is one of the few examples of positive ``above guarantee'' results for a weighted problem on directed graphs, especially for rational weights.
\end{abstract}		

\newpage
\tableofcontents

\newpage

 \section{Introduction}
 
 In the \textsc{Maximum Acyclic Subgraph} problem, \textsc{MAS} for short, we are given a directed graph (digraph) $G$ with $n$ vertices and $m$ edges, and an integer $k\ge 0$, and the task is to find an acyclic subgraph of $G$ that contains at least $k$ edges.
 In the dual version of the problem, alternatively, one can ask to delete at most $k'=m-k$ edges from $G$.
 This formulation is well-known as \textsc{Feedback Arc Set} and comes from the seminal work of Karp \cite{karp1975computational}.\todo{speak about motivation, applications, recent research}
 In this paper, we study the weighted version of \textsc{MAS}, here we discuss the unweighted case first.
 
  One can easily see that any digraph contains an acyclic subgraph with at least $m/2$ edges.
  To construct such a subgraph in polynomial time, take an arbitrary ordering of $V(G)$, and take all edges that go from the left to the right into solution.
  If there are less than $m/2$ such edges, take all edges that go from the right to the left in the solution instead.
  
  This is known as the \emph{random ordering} lower bound and is quite significant to \textsc{MAS} from the perspective of approximation algorithms.
  Obviously, it provides a polynomial-time $2$-approximation algorithm for \textsc{MAS}.
   In \cite{mas-above-half-hardness}, Guruswami, Manokaran and Raghavendra proved that this is tight and no better approximation ratios for \textsc{MAS} are possible, a \emph{``first tight inapproximability result for an ordering problem''}, under the Unique Games Conjecture~\cite{UGC}. 
  One year before that, Charikar, Makarychev and Makarychev \cite{mas-above-half-approx} designed a polynomial-time algorithm that finds an acyclic subgraph with at least $(\frac{1}{2}+\Omega(\delta/\log n))\cdot m$ edges in any given $n$-vertex graph that admits an acyclic subgraph with at least $(\frac{1}{2}+\delta)\cdot m$ edges, for any  $\delta>0$.

  The same lower bound was the original starting point of a quite successful concept in parameterized complexity.
   It was initiated from the search for an efficient algorithm that finds an acyclic subgraph with at least $m/2+k$ edges.
  The question of an existence of such an algorithm was posed as an open problem by Raman and Saurabh in \cite{RamanSaurabhMAS}, who obtained partial results.
  Later, Mahajan, Raman and Sikdar \cite{MahajanAboveGuarantee} addressed this question again among other similar ones.
  Importantly, \cite{MahajanAboveGuarantee} is the first paper that addresses parameterizations above (or below) tight lower (or upper) bounds as a general concept, now known as \emph{above-guarantee} or \emph{below-guarantee} parameterizations (see, e.g.\ a survey by Gutin and Mnich \cite{gutin2024surveygraphproblemsparameterized} on this topic).
  Finally, in \cite{GutinMASAboveHalfWeighted}, Gutin, Kim, Szeider and Yeo gave a positive answer to the question.
  They showed that the problem of finding an acyclic set of at least $m/2+k$ edges admits a $2^{\Oh(k
  	\log k)}\cdot\polyn$-running-time fixed-parameter tractable (\classFPT) algorithm.\todo{improve AG discussion?}

Another tight lower bound for \textsc{MAS} was considered independently by Mnich, Philip, Saurabh and Such{\`y} \cite{MnichPoljakTurzik} and by Crowston, Gutin and Jones \cite{PoljakTurzikCrowston}.
  It comes from the work of Poljak and Turz\'{i}k from 1986 \cite{PoljakTurzikOriginal}, and guarantees that if $G$ is an oriented\footnote{$G$ is directed, and each pair of vertices is connected by at most one directed edge, going in either of the two directions, but not both simultaneously.} weakly-connected\footnote{$G$ is connected if we remove edge orientations.}  graph, then $G$ has an acyclic subgraph with at least $m/2+(n-1)/4$ edges.
  In \cite{MnichPoljakTurzik, PoljakTurzikCrowston} it was shown that the problem of finding an acyclic subgraph with at least $m/2+(n-1)/4+k$ edges in a given oriented graph is fixed-parameter tractable.
  
Later, Etscheid and Mnich \cite{etscheid2018linear} showed that a variety of problems (with \textsc{MAS} and \textsc{Max Cut} among them) admit linear kernels, when parameterized above the Poljak--Turz\'{i}k bound.
  In particular, they showed a kernel with $\mathcal{O}(k)$ vertices for \textsc{MAS} above Poljak--Turz\'{i}k, and improved the running time to $2^{\mathcal{O}(k)}\cdot n^{\Oh(1)}$.
  Whereas the Poljak--Turz\'{i}k bound is applicable only to oriented graphs, this parameterization is superior to the parameterization above $m/2$.
  Finding an acyclic subgraph with at least $m/2+k$ edges in an arbitrary digraph can be simply reduced to the same problem on an oriented graph.
  
  
  There is one particular lower bound for \textsc{MAS} that has been overlooked in this line of research.
  This lower bound is the \emph{spanning tree} guarantee, and it speaks for itself: any weakly-connected digraph admits an acyclic graph with at least $n-1$ edges.
The spanning tree bound is also (algorithmically) inferior to the Poljak--Turz\'{i}k bound.
    
    The situation with these three guarantees changes slightly when we introduce edge weights in $G$.   \todo{ref to the weighted case studies}
    The algorithm of \cite{GutinMASAboveHalfWeighted} for \textsc{MAS} above random ordering works in the multigraph setting when we are allowed to have several copies of the same edge in $G$.
    This can alternatively be seen as a weighted scenario.
    Instead of having several copies of the same edge, we keep $G$ a simple digraph, and we assign a positive integer weight $\wt(e)$ to each edge $e\in E(G)$.
    Then the random ordering guarantee is at least the half of the total edge weight, $\wt(E(G))/2$.
     Within this context, the mentioned algorithm of \cite{GutinMASAboveHalfWeighted} finds an acyclic subgraph of total edge weight at least $\wt(E(G))/2+k$ in $2^{\Oh(k\log k)}\cdot\polyn$ running time.
    
The Poljak--Turz\'{i}k lower bound was originally formulated in  \cite{PoljakTurzikOriginal} for the weighted case, and it essentially guarantees that an edge-weighted \emph{oriented} graph $G$ contains an acyclic subgraph of weight at least $\wt(E(G))/2+\wt(\MinST(G,\wt))/4$, where $\MinST(G,\wt)$ is the \emph{minimum} spanning tree of the underlying undirected graph of $G$ (or its minimum spanning forest, if $G$ is not weakly-connected).
    In  \cite{etscheid2018linear}, the authors posed an open question on whether the efficient \classFPT-algorithms above Poljak--Turz\'{i}k can be generalized to the weighted case.\todo{note that PT generalizes RO?}
    
    But are these two lower bounds that good when the edge weights are not evenly distributed?
    If, for example, the graph has just one heavy edge, whose weight is greater than the sum of all other weights, then this edge alone is better than any of the two guarantees can provide.
    
	The spanning tree guarantee becomes the \emph{maximum} spanning tree (MaxST) guarantee in the weighted setting.
	Clearly, $G$ admits an acyclic subgraph of weight at least $\MaxST(G, \wt)$, a weight of the maximum spanning tree of the underlying undirected graph of $G$.
	This lower bound captures the case in the paragraph above and is better on instances where edge weights are concentrated around a sparse structure.
	For instance, it dominates both the random ordering and the Poljak--Turz\'{i}k guarantees on instances where $\MaxST(G,\wt)$ is at least $\frac{3}{4}\cdot \wt(E(G))$.	
	The MaxST lower bound is the central guarantee for \textsc{MAS} in our work.
We give a formal definition to the corresponding parameterized problem below.

\defparproblema{\probMASMST (\probMASMSTshort)}{
	Weakly-connected $n$-vertex digraph $G$ with integer weight function $\intwtdef$ assigned to its edges, and integer $k\ge 0$.
}{}{	Find an acyclic subgraph $G'$ of $G$ such that $\wt(E(G'))$ is at least $\wt(\MaxST(G,\wt))+k$, or report that $G'$ does not exist.}

This parameterization differs significantly from the Poljak--Turz\'{i}k one.
First, we do not have to enforce that $G$ is oriented for our guarantee to work.
The restriction on the weak connectivity of $G$ can also be lifted, because $G$ can always be made weakly-connected by adding bridges between connected components of $G$.
Second, these bounds are generally incomparable since they behave differently on sparse and dense graphs.

\subparagraph{Our contribution.}
Our first main result is an \classFPT-algorithm for \probMASMSTshort.
It is obtained by a novel approach that combines several constructive ideas and precise analysis of the edges outside the maximum spanning tree.
Integrality of the weight function is crucial, since the edges of MaxST are classified according to their \emph{profits}---the amount of extra weight we could possibly get from edges outside of MaxST if we remove this edge of MaxST from the solution.
Since all profits are integral, and large profits guarantee the solution, we are able to achieve only $k^{\Oh(1)}$ equivalence classes, and we are interested only in the ``most suitable'' edges in each class.
We then perform branching on which edges should be deleted from the MaxST.

\begin{restatable}{theorem}{thmFPTint}\label{thm:main-result}\label{thm:integral-fpt}
	$\probMASMSTshort(\mathbb{Z}_{\ge 1})$ admits an algorithm with $2^{k^{\Oh(1)}}\cdot\polyI$ running time.
\end{restatable}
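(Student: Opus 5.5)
Fix a maximum spanning tree $T$ of the underlying undirected graph of $G$. Since $T$ is acyclic as an undirected graph, its edges, with their original orientations, form a DAG, so $\wt(E(T))$ is always attainable. The goal is to gain $k$ more.

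For a solution $S$, write $D=E(T)\setminus S$ for the deleted tree edges. The excess of $S$ is
\[
\wt(S\setminus E(T))-\wt(D).
\]
This suggests the notion of a profit, defined for a single tree edge $e$. Delete $e$ from $T$ and keep every other tree edge. Then $\Profit(e)$ is the maximum weight of non-tree edges that can be added so that the result stays acyclic, minus $\wt(e)$.

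Consider first the case where no tree edge is deleted ($D=\emptyset$). We then add non-tree edges to the DAG $T$. An edge $uv$ closes a directed cycle with $T$ exactly when the tree path from $v$ to $u$ is directed. So every edge whose reverse tree path is not directed can be added. If the total weight of these, jointly acyclic with $T$, is at least $k$, we are done. Here a greedy acyclic subset works, because the weights are integers at least $1$: if many non-tree edges fit, the total reaches $k$ quickly. I would argue that if there are at least $k$ non-tree edges whose addition keeps $T+e$ acyclic, then some $k$ of them are jointly compatible. Otherwise only fewer than $k$ such edges exist, and they can be branched over.

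The main step is bounding the number of deleted tree edges and the candidates for them. First I would show that if some tree edge has profit at least $k$, we are done immediately. By integrality, every relevant profit then lies in $\{1,\dots,k-1\}$. Next, a solution whose total excess is $k$ can be assumed to delete at most $k^{\Oh(1)}$ tree edges. The reason is that each useful deletion must contribute positive net profit, while deletions with nonpositive net contribution can be reverted. The subtle point is that profits of different deleted edges are not additive, because non-tree edges crossing several deleted cuts interact. Then I would classify tree edges into $k^{\Oh(1)}$ equivalence classes according to their profit value and the structure of the non-tree edges they unlock (for example, the direction of the fundamental cycle). Within each class I would keep only the $k^{\Oh(1)}$ most suitable representatives, such as those of minimum weight or maximal unlocked weight, using an exchange argument: any solution using a dropped edge can be rerouted to a kept representative without losing excess.

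Finally, branch over which $k^{\Oh(1)}$ tree edges to delete from the $k^{\Oh(1)}$ candidates, giving $2^{k^{\Oh(1)}}$ branches. For each guess, compute the best augmentation by non-tree edges in polynomial time. With the tree's deleted edges fixed, the remaining forest components are contracted into a graph of $k^{\Oh(1)}$ vertices, so a brute-force ordering costs $2^{k^{\Oh(1)}}$. I expect the main obstacle to be the exchange/representative argument: proving that restricting to a few kept edges per class loses no solution, given that profits interact non-additively.
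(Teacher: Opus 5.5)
Your proposal has genuine gaps, and the step you treat as easy is wrong. Once the deleted set $D$ is fixed, you cannot contract the weakly connected components of $T-D$ and brute-force an ordering. These components are oriented trees, not strongly connected, so contraction creates spurious cycles. Take $T\colon a\to d\to c\to b$ with tree weights $5$, an allowed edge $ab$ of weight $1$ and a blocked edge $cd$ of weight $4$. Deleting $D=\{dc\}$ and adding $ab,cd$ gives an acyclic graph, but contracting $\{a,d\}$ and $\{b,c\}$ yields a $2$-cycle. Moreover, the relevant non-tree edges (all of $\Inv(D)$) are unboundedly many.

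The paper instead uses two structural facts your plan lacks. First, every acyclic $X$ with $E(T)\setminus E(X)=D$ lies in $T-D+\Inv(D)+A$ (Observation~\ref{obs:blocked-edges-in-acyclic-subgraph}). Second, $T-D+\Inv(D)$ is itself acyclic (Lemma~\ref{lemma:inverse-edges-independence}), so every cycle uses an allowed edge. Once $\wt(A)<3k$ is known, the fixed-$D$ subproblem is therefore a \textsc{Weighted Directed Feedback Arc Set} instance whose budget may be assumed below $\wt(A)$. It is solved by the flow-augmentation algorithm of Kim, Kratsch, Pilipczuk and Wahlstr\"{o}m, which is where the $2^{k^{\Oh(1)}}$ factor comes from. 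It is not polynomial and not a brute force. Also, $\wt(A)<3k$ does not follow from your claim that any $k$ individually allowed edges contain $k$ jointly compatible ones. That claim is false: $uv$ and $vu$ for $T$-incomparable $u,v$ are both allowed but form a $2$-cycle. The paper needs the fact that tree posets have order dimension at most $3$.

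The central step, bounding the deleted tree edges and their candidates, is asserted rather than proved, and the assertions fail when $A\neq\emptyset$. Reverting a deletion with non-positive net contribution can close a cycle through an allowed edge and retained inverse edges. The paper can revert only in two cases: when the remaining profit is at most $-\wt(A)$, at the cost of dropping all of $A$; or when it is a second non-positive deletion on the same directed path. For the same reason, restricting to profits in $\{1,\dots,k-1\}$ silently discards non-positive deletions. A static classification by profit with fixed ``most suitable representatives'' cannot be justified either. Which edge is best depends on what has already been deleted, which is exactly the non-additivity you flag.

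Two ideas are missing.
\begin{enumerate}
\item Safe contraction rules for unprofitable leaves and for all-unprofitable source/sink zigzag segments. After these, either $k$ profitable tree edges with pairwise disjoint inverse sets exist, or $T$ splits into $t=\Oh(k)$ topologically ordered directed paths.
\item An exchange lemma along a $\mathcal{P}$-respecting order. It shows that some solution deletes only edges whose \emph{remaining} profit, relative to earlier deletions, lies in $[-\wt(A),k]$. At most one deletion per path is non-positive, and each deleted edge is the earliest on its path (after the previous deletion) with its current remaining-profit value. The swap preserves acyclicity only because both edges lie on one directed path with nothing deleted between them.
\end{enumerate}
This gives adaptive branching of depth at most $k+t$ with $\Oh(k^2)$ choices per step, not a static candidate set of size $k^{\Oh(1)}$. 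Combined with the feedback-arc-set step, it yields $2^{k^{\Oh(1)}}\cdot|\mathcal{I}|^{\Oh(1)}$.
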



As an example, we can use \Cref{thm:main-result} to find an (unweighted) acyclic subgraph that contains two prescribed edges $e_1, e_2$ (that are not opposites) and at least $(n-3)+k$ other edges.
This can be extended to more than two edges, but only if they form an undirected forest.
Note that the random ordering or Poljak--Turz\'{i}k guarantees cannot provide us such use since they could only ensure that the half of the prescribed edges belong to the solution.


\Cref{thm:main-result} works only in a restricted setting of integer weights.
What can we do for the general case when weights are allowed to be rational numbers?
In this setting, the choice of the parameter $k$ is not very obvious.
If we allow arbitrary rational edge weights in the definition of \probMASMSTshort, then we can reduce an arbitrary weakly-connected instance of (unweighted) \textsc{MAS} to an instance of \probMASMSTshort with $k=1$.
Hence it is \classNP-hard for $k=1$.

To overcome this trivial obstacle and capture the actual parameterized complexity, we don't allow $k$ to be too large compared to edge weights.
We keep the original definition of \probMASMSTshort and restrict weights to be rationals \emph{not less than one}, and denote this version as $\probMASMSTshort(\mathbb{Q}_{\ge 1})$.
Our second main contribution is an \classXP-algorithm for this problem.\todo{don't forget to mention NP-hardness for RO}

\begin{restatable}{theorem}{thmXPRat}\label{thm:main-result-rational}\label{thm:rational-xp}
	$\probMASMSTshort(\mathbb{Q}_{\ge 1})$ admits an algorithm with $n^{k^{\Oh(1)}}\cdot\polyI$ running time.
\end{restatable}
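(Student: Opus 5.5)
The plan is to reduce the problem to guessing a small \emph{symmetric difference} with a fixed maximum spanning tree. Fix a maximum spanning tree $T$ of the underlying undirected graph of $G$, computed in polynomial time. Every orientation of a forest is acyclic, so $T$ itself, with its edges taken in their given directions, is a feasible acyclic subgraph of weight $\wt(\MaxST(G,\wt))$. Any solution $S$ is described by two sets: the tree edges $F=E(T)\setminus S$ it drops and the non-tree edges $A=S\setminus E(T)$ it adds. Its weight exceeds the guarantee by $\wt(A)-\wt(F)$.

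The algorithm rests on a structural dichotomy.
\begin{enumerate}
\item[(i)] Either we can exhibit, in polynomial time, a set of $k$ non-tree edges $X$ such that $E(T)\cup X$ is acyclic. Since every weight is at least $1$, this set gives weight at least $\wt(\MaxST(G,\wt))+k$ directly.
\item[(ii)] Or, if some solution exists, there is one with $|F|+|A|\le k^{\Oh(1)}$.
\end{enumerate}
In case (ii), the algorithm enumerates all pairs $(F,A)$ with $|F|,|A|\le k^{\Oh(1)}$. For each pair it checks in polynomial time whether $(E(T)\setminus F)\cup A$ is acyclic and heavy enough. This gives the claimed $n^{k^{\Oh(1)}}\cdot\polyI$ bound. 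Note that the enumeration does not need integrality, which is exactly why the rational case is XP rather than FPT.

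For (i), I restrict attention to the non-tree edges $e$ whose fundamental cycle in $T+e$ is not a directed cycle; call this set $N$. I define a conflict graph $H$ on $N$: two edges are adjacent when $T$ together with both of them contains a directed cycle. I expect the real condition to be subtler, since a directed cycle may use several non-tree edges, so the first concrete step is to prove that acyclicity of $E(T)\cup X$ is governed by pairwise conflicts of this kind. Directed cycles in $T\cup X$ project to closed walks in $T$, and the natural route is through an ordering argument along tree paths. I then plan to show that $H$, or its complement, is a comparability-type graph and hence perfect. By the perfect graph theorem, $\alpha(H)\cdot\omega(H)\ge |N|$, and a maximum independent set or clique can be found in polynomial time. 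If $\alpha(H)\ge k$, we are in case (i).

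Otherwise, every $k$-element set of candidate additions contains conflicts, and I would turn large cliques into structure. Pairwise conflicting edges must have fundamental cycles that all pass through a common tree path used in a consistent direction. The maximality of $T$ then bounds the weight of each such non-tree edge by the minimum weight of a tree edge on its fundamental cycle. This is the main obstacle: the exchange argument for (ii). Take an optimal solution $S$ that minimizes $|F|+|A|$. Each dropped tree edge must be ``paid for'' by added edges of total weight at least its own. Using the independence bound $\alpha<k$ together with a counting argument over the $k^{\Oh(1)}$ conflict cliques, I would argue that if $|F|$ is large, then some dropped edge $f$ can be reinserted while removing only added edges whose total weight does not exceed $\wt(f)$. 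This contradicts minimality. Once $|F|$ is bounded, $|A|\le |F|\cdot k^{\Oh(1)}+k$ should follow, because the added edges with no conflict through $F$ form an independent set in $H$. I expect the delicate part to be controlling arbitrary rational costs $\wt(f)$. The lower bound $1$ on weights is what lets us count added edges as a budget, so I would try first to charge each dropped tree edge to the maximum-weight added edge crossing its cut.
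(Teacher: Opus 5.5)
\textbf{Gap: claim (ii) is false.} With rational weights, a yes-instance can force a linear number of dropped tree edges. So in general no solution has $|F|+|A|\le k^{\Oh(1)}$, and no exchange argument can produce one.

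Take $q$ gadgets joined in a chain by bridges $c_j\to a_{j+1}$. Gadget $j$ has tree edges $a_j\to x_j$ and $y_j\to c_j$ of weight $2$, a tree edge $t_j=x_j\to y_j$ of weight $2-\varepsilon$, and non-tree edges $y_j\to a_j$ and $c_j\to x_j$ of weight $1$.
\begin{itemize}
\item The path $a_jx_jy_jc_j$ is the unique maximum spanning tree of the gadget.
\item Both non-tree edges close directed cycles with it, so your set $N$ is empty and case (i) never fires.
\item The only two directed cycles of the gadget both pass through $t_j$.
\end{itemize}
Hence an acyclic subgraph gains at most $\varepsilon$ over $T$ per gadget, and gains anything there only if it drops $t_j$. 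With $\varepsilon=1/M$ and $q=kM$, this is a yes-instance in which every solution drops all $q=n/4$ edges $t_j$, already for $k=1$. Reinserting any single $t_j$ leaves surplus $k-\varepsilon<k$, so your minimality argument has nothing to exploit.

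Your remark that enumeration is what makes the rational case XP is therefore backwards. Integrality is exactly what bounds the number of dropped tree edges, since in the integral case every useful removal gains at least $1$. Its absence is the whole difficulty, as the paper notes at the start of its rational section.

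\textbf{The missing idea.} Stop trying to bound $F$; instead optimize over it exactly in polynomial time, and guess only a bounded skeleton.
\begin{itemize}
\item The paper proves that the graph on $E(T)$, with two tree edges adjacent when their sets of inverse edges intersect, is perfect. When the dropped edges have pairwise disjoint inverse sets, the profit is additive. It can then be maximized by a maximum-weight independent set computation (Gr\"otschel--Lov\'asz--Schrijver), even with forced and forbidden tree edges and a permitted overlap set $B'$.
\item An exchange argument on a solution with the fewest dropped tree edges bounds $\wt(B')$, not $|F|$, by $(\wt(N)+k)^2$. This is $\Oh(k^2)$ once $\wt(N)<3k$.
\item The $n^{k^{\Oh(1)}}$ enumeration is over $B'$, over a set $D$ of discarded inverse edges with $\wt(D)\le \wt(N)$, and over forced/forbidden tree-edge sets. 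These sets are described as unions of $k^{\Oh(1)}$ directed tree paths, which may contain many edges, and they encode acyclicity together with the chosen subset of $N$.
\end{itemize}
So you had the right tool, but it must be applied to tree edges and their inverse-edge overlaps, not to the allowed edges.

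\textbf{Case (i) also needs repair.} Pairwise conflicts do not govern acyclicity of $E(T)\cup X$. For the out-star $T$ with edges $c\to u_1$, $c\to u_2$, $c\to u_3$, the edges $u_1\to u_2$, $u_2\to u_3$, $u_3\to u_1$ are pairwise compatible yet jointly cyclic. The paper instead obtains a solution whenever $\wt(N)\ge 3k$. It splits $N$ according to three linear extensions of the reachability order of $T$, which has order dimension at most three.
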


To prove \Cref{thm:main-result-rational}, we use some of the structural profit-related insights that are used to prove \Cref{thm:main-result}.
Main obstacle when transitioning from integer to rational weights is that edge weights differences are not integers anymore, and can be arbitrarily close to $0$.
This raises two new challenges that we have to deal with, even to obtain an \classXP-algorithm.
First, the profits of the edges of MaxST could be very small and even to obtain just ``$+1$'' excess weight, we could have to remove a lot of them.
Second, the profits are not integers and hence cannot be simply split into $f(k)$ equivalence classes by their profit.
We resolve these challenges by providing two new structural insights (alternative to ``profit classification'' used for the integral case) that focus on the structure of directed paths of the MaxST instead.

\subparagraph{Related work.}
\textsc{MAS} and \textsc{Feedback Arc Set} are known for notorious hardness from the exact exponential algorithms perspective.
For both weighted and unweighted versions of these problems, the best known exact algorithms are still the ones that run in $2^n\cdot \polyn$ time.
On the positive side, Kim, Kratsch, Pilipczuk and Wahlstr\"{o}m in their recent work \cite{KimWFAS} introduced the \emph{flow augmentation} technique.
They showed that, if $k$ is the number of edges that we are allowed to remove, then  \textsc{Weighted Feedback Arc Set} is solvable in $2^{k^{\Oh(1)}}\cdot\polyI$ time. 
We use this algorithm in the last step of our proof of \Cref{thm:main-result}.

While the above-guarantee approach is now well developed in parameterized complexity, relatively few works study weighted guarantees specifically. One of the few is the work by Gutin and Patel  \cite{gutin-tsp-above-average}, which investigates \classFPT-algorithms for finding weighted Hamiltonian cycles below the average cycle weight in complete undirected graphs with integer edge weights. The authors leave the complexity for complete digraphs as an open question. There is also a related line of works \cite{Crowston2010, crowston_et_al:LIPIcs.FSTTCS.2011.229, Crowston2012} on weighted CSPs for linear systems over $\mathbb{F}_2$, parameterized above or below guarantees, where the parameter $k$ is also the excess weight.













\subparagraph{Organization of the paper.}
In the next section, we introduce the notation used throughout the paper and provide some preliminary results. 
In \Cref{sec:classification}, we introduce definitions and properties of edges around $\MaxST(G,\wt)$ that are fundamental to our approach.
In \Cref{sec:overview}, we provide our proof of \Cref{thm:main-result}, supporting it with necessary intuition and discussion.
In \Cref{sec:xp-rational-weights}, we give the proof of \Cref{thm:main-result-rational} in the same way.

\section{Preliminaries}

\subparagraph{Notation.}
We use standard graph terminology and notation (we refer the reader to the book of Diestel~\cite{Diestel}).
We also use several additional notions.
We use $G-S$ to denote a graph obtained by deleting all edges of $S$ from $G$.
We write $G+S$ to denote a graph obtained by adding all edges of $S$ to $G$.
We refer to both undirected and directed edges as just \emph{edges}, the actual variant is always clear from the context.

For a graph $(G, \wt)$ with edge weights, we always assume that $\wt$ is additively extended to all subsets of edges. 
That is, for each $S \subseteq E(G)$, we have $\wt(S) = \sum_{e \in S} \wt(e)$. 
For a subgraph $X \subseteq G$, we also shorten the notation $\wt(E(X))$ to just $\wt(X)$. 

For the comprehensive introduction to parameterized algorithms and terminology of parameterized complexity, we refer the reader to the Parameterized Algorithms book~\cite{CyganFKLMPPS15}.

\subparagraph{Partially ordered sets.}

A \emph{partially ordered set} (poset for short) is an ordered pair $\mathcal{P} = (X, \preceq)$ where $X$ is a set and $\preceq$ is a partial order on $X$. 
A relation $\preceq^*$ is called a \emph{linear extension} of $\mathcal{P}$ if it is a total order on $X$ that is compatible with $\preceq$, that is, for all $x, y \in X$,  $x \preceq y$ implies $x \preceq^* y$. 

The \emph{order dimension} of a poset $\mathcal{P} = (X, \preceq)$ is the least integer $t$ for which there exists a family $(\preceq_1, \preceq_2, \ldots, \preceq_t)$ of linear extensions of $\mathcal{P}$ with the following property: for each $x, y \in X$, it holds that $x \preceq y$ if and only if $x \preceq_i y$ for all $i$.
This concept was introduced by Dushnik and Miller in~\cite{Dushnik1941PartiallyOS}. 
We will use the following result on the order dimension. 

\begin{proposition}[\cite{TrotterPlanarPosets}, \cite{abram2026dimensionunicycleposets}]\label{prop:tree-order-dimension}
	Let $T$ be an oriented tree, and let $\mathcal{P} = (V(T), \preceq)$ be a poset such that for each $u, v \in V(T)$, $u \preceq v$ if and only if there exists a directed path from $u$ to $v$ in $T$.
	Then the order dimension of $\mathcal{P}$ is at most three, and the corresponding linear extensions $(\preceq_1, \preceq_2, \preceq_3)$ of $\mathcal{P}$ can be constructed in polynomial time. 
\end{proposition}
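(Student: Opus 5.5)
The claim is that the reachability poset of an oriented tree $T$ has order dimension at most three. I will build three linear extensions explicitly from a planar drawing of $T$, following Trotter's argument for planar posets whose cover graph has a planar embedding. The key fact is that $\preceq$ is exactly reachability in $T$. So $u\preceq v$ iff the unique undirected $u$--$v$ path in $T$ is directed from $u$ to $v$. If $u$ and $v$ are incomparable, then somewhere along that path the edge orientation changes, either at a local source or at a local sink.

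\textbf{Step 1: a fixed embedding.} Root $T$ at an arbitrary vertex $r$ and fix a plane embedding, i.e.\ a cyclic order of children at every vertex. For incomparable $u,v$, let $w$ be their meeting point (the lowest common ancestor in the rooted structure). This gives a ``left/right'' relation: $u$ is left of $v$ if $u$'s branch at $w$ precedes $v$'s branch in the child order. If one of $u,v$ is an ancestor of the other in the rooted tree, I handle it using the edge orientations along the path instead.

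\textbf{Step 2: the first two extensions.} I take $\preceq_1$ to be a linear extension that, among incomparable pairs, puts the left element first whenever possible. I take $\preceq_2$ symmetric to it, using the mirror embedding (children reversed). Concretely, each is produced by a DFS-based topological sort of $T$ as a DAG: repeatedly output a minimal element, and among the available minimal elements always choose the leftmost (respectively rightmost) one under the planar order.

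This works for pairs separated by a ``sideways'' split. Take incomparable $u,v$ in different branches at $w$, with neither path from $w$ forcing an order. Then $\preceq_1$ and $\preceq_2$ order them oppositely, because no directed constraint links them. Any constraint would have to pass through $w$ with consistent orientation, which would contradict incomparability.

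\textbf{Step 3: the third extension.} The remaining incomparable pairs are those where $u$ and $v$ lie on a common root-to-leaf line, or where the path orientations tie them to the same side in both sweeps. The path between them then contains a source or a sink. For these I define $\preceq_3$ by depth: a topological sort that prefers vertices far from $r$ (or near $r$), chosen so that it reverses the order that both $\preceq_1$ and $\preceq_2$ give to such ``vertical'' pairs.

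I then verify that every incomparable pair appears in both orders across $\preceq_1,\preceq_2,\preceq_3$. Together with the fact that all three are linear extensions, this gives $u\preceq v \iff u\preceq_i v$ for all $i$. Each of the three sorts is clearly polynomial to compute.

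\textbf{Main obstacle.} The hard part is the case analysis in Step 3. Vertical pairs whose connecting path has several alternations of orientation must be shown to be reversed by a single third order. This is where I expect to need the full structure of Trotter's planar argument, or the refinement in \cite{abram2026dimensionunicycleposets}. If the direct construction fails, the fallback is to cite those results directly. Oriented trees are planar and their reachability posets have cover graph $T$, or a subgraph of $T$, which is planar. Their argument gives dimension at most three for posets whose cover graph is a tree, with explicit polynomial-time constructions.
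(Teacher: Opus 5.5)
Your fallback matches the paper. The paper gives no proof of this proposition. It cites Trotter and Moore for $\dim(\mathcal{P})\le 3$, remarking that an algorithm is implicit in their proof, and Corollary~6.3 of Abram and Segovia for an explicit polynomial-time construction. Citing these results is a complete justification.

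One correction to how you phrase it: the operative hypothesis is that the cover graph of $\mathcal{P}$ is a tree, not planarity. Here the cover graph is exactly the underlying tree of $T$, since every arc of $T$ is a cover relation and longer directed paths give none. Posets with planar diagrams can have arbitrarily large dimension, and the Trotter--Moore planar bound needs a least or greatest element.

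The direct construction does not stand on its own. The claim in Step~2 is false. A greedy topological sort places $u$ according to when its predecessors are released, not only according to comparabilities between $u$ and $v$. So ``no directed constraint links them'' does not force $\preceq_1$ and $\preceq_2$ to disagree.

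Here is a counterexample. Take root $w$ with children $a,b,z$ from left to right, let $c$ be the child of $b$, and orient $z\to w$, $w\to a$, $w\to b$, $c\to b$. Then $a$ and $c$ are incomparable, with $a$ left of $c$ and the split at $w$.
\begin{itemize}
\item Leftmost-first starts with $c$, since the only other source is $z$, further right. Hence $c$ precedes $a$.
\item Rightmost-first starts with $z$. Once $a$ is released (after $w$), $c$ is either already output or available and to the right of $a$. Hence $c$ again precedes $a$.
\end{itemize}
So even ``sideways'' pairs can be left unreversed by $\preceq_1,\preceq_2$. The leftover pairs are therefore not confined to the ``vertical'' ones that Step~3 is designed for.

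Step~3 is not a construction but a restatement of the goal. The whole content of the theorem is this: for suitably chosen $\preceq_1,\preceq_2$, every incomparable pair that both orders place the same way can be reversed by a single linear extension. Equivalently, the reversals demanded of $\preceq_3$ contain no alternating cycle. A depth-preferring sort carries no such guarantee, and the proposal gives no argument that it does; this is exactly the nontrivial step of Trotter--Moore. So your proposal is correct only through its citation, which coincides with the paper's treatment. As a self-contained proof it has a gap at the one step that matters.
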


Trotter and Moore~\cite{TrotterPlanarPosets} proved that the order dimension of such posets is at most three, but they did not explicitly state the existence of a polynomial-time algorithm that constructs the corresponding linear extensions.
Although such an algorithm follows from their proof, we refer to a recent work of Abram and Segovia for a more direct construction (see Corollary~6.3 in~\cite{abram2026dimensionunicycleposets}).

\section{Classification of directed edges and their basic properties}\label{sec:classification}

In this section, we introduce the basic toolbox that we use extensively throughout our proof.
Our tools are mostly related to classification of edges of $G$ relative to its MaxST and several useful properties surrounding this classification for \emph{arbitrary positive edge weights}.




\subsection{Maximum spanning tree, allowed and blocked edges}
We start with properly defining the ``guaranteed'' subgraph of $G$.
We informally refer to it as a maximum spanning tree, or MaxST for short. 

\begin{definition}[Maximum spanning tree]
	Let $G$ be a weakly-connected $n$-vertex digraph with an edge set $\{e_1, \ldots, e_m\}$ and edge weights $\wt$. 
	By $\MaxST(G,\wt)$ we denote a subgraph $T$ of $G$, such that:
	\begin{itemize}
		\item $|V(T)| = n$ and $|E(T)| = n - 1$;
		\item $T$ is weakly-connected;
		\item $\wt(E(T))$ is maximum possible;
		\item a sorted sequence of indices of edges from the set $E(T)$ is lexicographically smallest possible among all subgraphs satisfying the previous constraints.
	\end{itemize} 
	Throughout the paper, we use $T$ as a shortcut for $\MaxST(G,\wt)$. 
\end{definition}

Note that in the definition above, we put the last property in order that $T$ is defined unambiguously. 
Moreover, $T$ can still be found in polynomial time via the standard argument: we first find the maximum weight $w$ of a spanning tree of $G$. 
Then we construct $T$ iteratively, starting with the empty subgraph. 
In each iteration, we find an edge $e_i$ of minimum index such that $T + e_i$ can be extended to a spanning tree of $G$ of weight $w$, and add this $e_i$ to $T$. 

Naturally, some edges outside $E(T)$ do not form a cycle, if added to $T$.
For example, if $T+e$ is acyclic for $e \in E(G)\setminus E(T)$, then $(G,\wt,k)$ is a yes-instance of \probMASMSTshort for any $k\le \wt(e)$.
We proceed with a proper formal definition that includes the concept of such edges.

\begin{definition}[Classification of graph edges relative to MaxST] 
	For a weakly-connected digraph $G$ with edge weights $\wt$, we say that
	\begin{itemize}
		\item Each edge $e \in E(T)$ is a \textit{tree} edge.
		\item Each edge $e \in E(G) \setminus E(T)$ such that $T + e$ has no directed cycles is an \textbf{\textit{allowed}} edge.
		\item Each edge $e \in E(G) \setminus E(T)$ such that $T + e$ has a directed cycle is a \textbf{\textit{blocked}} edge.
	\end{itemize}
	
	By $A(G, \wt)$ we denote the set of all \textit{allowed} edges of $G$ and by $B(G, \wt)$ we denote the set of all \textit{blocked} edges of $G$.
	We shorten this notation to just $A$ and $B$, unless the arguments are different from $(G,\wt)$.
\end{definition}

\begin{figure}
	\centering
	\begin{tabular}{rccrc}
		(a)&\includegraphics[scale=2]{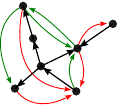} & \hspace{1cm} & (b)&
		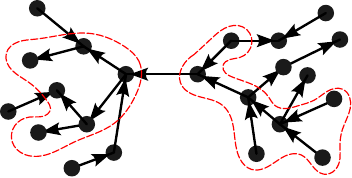
	\end{tabular}
	
	\caption{(a) An instance with six tree edges (thick black), four allowed edges (green) and four blocked edges (red).
		(b) $\Inv(e)$ is the set of all edges that start in $X$ (vertices reachable from $e$ in $T$) and end in $Y$ (vertices $e$ is reachable from in $T$).}\label{fig:inv-edges}
\end{figure}

The main challenge in our two algorithms is to handle both edge types simultaneously.
In the rest of this section, we focus on blocked edges.


\subsection{Inverse edges and their properties}

Note that removing an edge from $T$ can ``unblock'' some edges in $B$.
That is, if $b$ is blocked, and $e\in E(T)$ belongs to a directed cycle in $T+b$, then $(T-e)+b$ is acyclic.
In this situation, we say that $b$ is an \emph{inverse edge} of $e$.
We give a formal definition below.

\begin{definition}[Inverse edges]
	For a weakly-connected digraph with positive edge weights $(G,\wt)$ and a tree edge $e\in E(T)$, by
	 $\Inv(G,\wt,e) = \{b \in B: T - e + b \text{ is acyclic}\}$
	 we denote the set of all \emph{inverse} edges of $e$ in $G$.
	 
	 For a tree edge set $S\subset E(T)$, by
	 $\Inv(G,\wt,S)= \bigcup_{e\in S} \Inv(G,\wt,e)$
	 we denote the union of all inverse edges among all edges in $S$.
	 We shorten the notation to just $\Inv(e)$ and $\Inv(S)$.
\end{definition}

We start by observing basic properties of inverse edges that follow from their definition. 

\begin{observation}\label{obs:basic-inverse-edge-property}
	Let $b \in B$ be a blocked edge that goes from $u$ to $v$.
	Then $T$ contains a directed path $P$ from $v$ to $u$.
	Moreover, for a tree edge $e$, an edge $b \in \Inv(e)$ iff $e$ lies on $P$.
\end{observation}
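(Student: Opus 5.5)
The first part follows from the definition of a blocked edge together with the tree structure of $T$. Since $b=(u,v)\in B$, the graph $T+b$ contains a directed cycle $C$, and because $T$ itself is acyclic (as an undirected forest it has no cycles at all), $C$ must use $b$. Removing $b$ from $C$ leaves a directed path $P$ in $T$ from $v$ to $u$. I would also record that $P$ is unique. The underlying undirected graph of $T$ is a tree, so there is exactly one undirected $v$--$u$ path in $T$, and $P$ is that path with every edge oriented consistently from $v$ towards $u$. For the same reason, every directed cycle of $T+b$ consists of $b$ together with $P$: any such cycle contains $b$, and the rest of it is an undirected $v$--$u$ path in $T$, hence equals $P$.

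For the ``moreover'' part, I would fix a tree edge $e$ and show both directions. Suppose $e$ lies on $P$. Any directed cycle in $T-e+b$ would also be a directed cycle in $T+b$, so by the uniqueness above it would be $P+b$. That cycle contains $e$, which is not in $T-e$, so no such cycle exists. Hence $T-e+b$ is acyclic, and since $b\in B$ we get $b\in\Inv(e)$. Conversely, suppose $e$ is not on $P$. Then $P$ is contained in $T-e$, so $P+b$ is a directed cycle in $T-e+b$, and therefore $b\notin\Inv(e)$.

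There is no real obstacle in this argument. The one point that needs care is the uniqueness of the directed cycle in $T+b$, and this rests entirely on the underlying undirected graph of $T$ being a tree.
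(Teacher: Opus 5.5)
Your proposal is correct and follows essentially the same argument as the paper: a directed cycle of $T+b$ must use $b$ and so yields $P$, uniqueness of that cycle comes from $T$ being a tree, and removing an edge breaks the cycle exactly when the edge lies on $P$. Your justification of the uniqueness step, via the unique undirected $v$--$u$ path in $T$, is a slightly more explicit version of the paper's one-line remark that $C$ is the only cycle in $T+b$.
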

\begin{proof}
	Since $b$ is a \textit{blocked} edge, the subgraph $T + b$ contains a directed cycle $C$, and $b$ lies on it (because $T$ is acyclic).
	Then there is a path $P$ in $T$ from $v$ to $u$.
	
	Since $T$ is a tree, $C$ is the only cycle in $T + b$.
	Hence for each $e$ on $P$, we have that $T - e + b$ is acyclic.
	Conversely, if $e \notin P$, then $P \subseteq T - e + b$ and therefore the subgraph still contains $C$.
\end{proof}

We also note that if $X \subseteq G$ is acyclic and $S$ is the set of all tree edges of $G$ outside $E(X)$, then $X$ can contain only blocked edges from the set $\Inv(S)$.

\begin{observation}\label{obs:blocked-edges-in-acyclic-subgraph}
	If $X \subseteq G$ is acyclic and $S = E(T) \setminus E(X)$, then $X \subseteq T + A - S + \Inv(S)$.
\end{observation}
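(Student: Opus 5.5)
We check each edge of $X$ separately. Since $E(G)$ is the disjoint union of $E(T)$, $A$ and $B$, every edge $x\in E(X)$ is a tree edge, an allowed edge, or a blocked edge. We show that in each case $x$ lies in $T + A - S + \Inv(S)$; as all these graphs have vertex set $V(G)$, this gives $X \subseteq T + A - S + \Inv(S)$.

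\emph{Tree edges.} If $x\in E(T)\cap E(X)$, then $x\notin S$ by the definition $S = E(T)\setminus E(X)$. Hence $x$ survives in $T - S$.

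\emph{Allowed edges.} If $x \in A$, then $x$ is present in $T + A - S + \Inv(S)$ directly. Here $S\subseteq E(T)$ and $A\cap E(T)=\emptyset$, so deleting $S$ does not remove $x$.

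\emph{Blocked edges.} This is the only case that needs an argument. Let $b\in B\cap E(X)$ go from $u$ to $v$. By \Cref{obs:basic-inverse-edge-property}, $T$ contains a directed path $P$ from $v$ to $u$, and $b\in\Inv(e)$ exactly for the tree edges $e$ on $P$.

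Suppose, for contradiction, that no edge of $P$ lies in $S$. Then every edge of $P$ belongs to $E(T)\setminus S \subseteq E(X)$. Together with $b\in E(X)$, this means $P+b$ is a directed cycle in $X$, contradicting the acyclicity of $X$.

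Therefore some edge $e$ of $P$ lies in $S$. By \Cref{obs:basic-inverse-edge-property}, $b\in\Inv(e)\subseteq\Inv(S)$.

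The three cases cover all edges of $X$, which proves the observation.
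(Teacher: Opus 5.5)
Your proof is correct and follows essentially the same argument as the paper. For a blocked edge $b\in E(X)$ you use \Cref{obs:basic-inverse-edge-property}: if no edge of the tree path $P$ closing $b$ into a cycle were in $S$, then $P+b$ would be a directed cycle in $X$. The paper phrases this as a single contradiction on a blocked edge $b\notin\Inv(S)$ and leaves the trivial tree-edge and allowed-edge cases implicit, whereas you spell them out.
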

\begin{proof}
	Assume the contrary. 
	Then $X$ contains some blocked edge $b \notin \Inv(S)$.
	Suppose that $b$ goes from $u$ to $v$.
	By \Cref{obs:basic-inverse-edge-property}, we know that $T$ contains a directed path $P$ from $v$ to $u$.
	Together with $b$, this path yields a cycle $C$ in $T + b$.
	Since $b \notin \Inv(s)$ for every $s \in S$, the same observation guarantees that $S \cap P = \emptyset$ and hence $P \subseteq T - S \subseteq X$.
	Therefore, $X$ still contains $C$ which contradicts its acyclicity.
\end{proof}

Finally, we observe that blocked and inverse edges have to follow specific directions derived from a cut of $T$ (see \Cref{fig:inv-edges} (b)).

\begin{observation}\label{claim:inverse-edges-structure}\label{claim:banned-edges-structure}
 	Let $e\in E(T)$ and let $X, Y$ be the vertex sets of two weakly-connected components of $T-e$ labeled so that $e$ goes from $X$ to $Y$ in $G$.
	Then 
	\begin{enumerate}
		\item For each $b\in B$, $b$ does not go from $X$ to $Y$ in $G$;
		\item For each $b\in \Inv(e)$, $b$ goes from $Y$ to $X$ in $G$.
	\end{enumerate}
\end{observation}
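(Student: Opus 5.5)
Both parts follow from \Cref{obs:basic-inverse-edge-property} together with the way the unique tree path between two vertices interacts with the cut $(X,Y)$ of $T$ induced by $e$. The key fact is that $V(G)=X\cup Y$, $X\cap Y=\emptyset$, and $e$ is the only edge of $T$ with one endpoint in $X$ and the other in $Y$. Consequently, for any $x\in X$ and $y\in Y$, the unique undirected path between $x$ and $y$ in $T$ passes through $e$. Since $e$ is oriented from $X$ to $Y$, any directed path in $T$ that crosses the cut does so from $X$ to $Y$. A directed path from a vertex of $Y$ to a vertex of $X$ would have to use $e$ backwards, so it cannot exist in $T$.

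For item 1, suppose some blocked edge $b=(u,v)$ had $u\in X$ and $v\in Y$. By \Cref{obs:basic-inverse-edge-property}, $T$ contains a directed path $P$ from $v$ to $u$. Such a path starts in $Y$ and ends in $X$, so it crosses the cut from $Y$ to $X$. Its only crossing edge is $e$, and $e$ points from $X$ to $Y$, so this traversal is impossible. This contradiction proves item 1.

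For item 2, let $b=(u,v)\in\Inv(e)$ and let $P$ be the directed $v\to u$ path in $T$ given by \Cref{obs:basic-inverse-edge-property}. The same observation tells us that $e$ lies on $P$. Because $P$ is a path in the tree, it uses $e$ exactly once, and it traverses $e$ in its direction, from $X$ to $Y$. Write $e=(x,y)$ with $x\in X$ and $y\in Y$. The prefix of $P$ from $v$ to $x$ avoids $e$, so it stays inside one component of $T-e$; hence $v\in X$. Similarly, the suffix of $P$ from $y$ to $u$ avoids $e$, so $u\in Y$. Therefore $b$ goes from $u\in Y$ to $v\in X$, that is, from $Y$ to $X$, as claimed.

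No step here is a real obstacle. The only point needing care is the claim that a subpath of $P$ avoiding $e$ stays within a single component of $T-e$. This holds because $T-e$ has no edge between $X$ and $Y$.
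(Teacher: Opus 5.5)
Your proof is correct and follows essentially the same route as the paper: the directed cycle formed by $b$ and its closing tree path $P$ can cross the cut $(X,Y)$ only through $e$, and only in the $X\to Y$ direction. For item 2, the paper argues that a cycle using $e$ from $X$ to $Y$ must return from $Y$ to $X$ and only $b$ can do so, whereas you split $P$ at $e$ to locate the endpoints of $b$; the two arguments are equivalent.
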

\begin{proof}
	We prove the first point of the observation.
	Assume the contrary, and let $b\in B$ be a blocked edge that goes from $X$ to $Y$.
	Then $T+b$ has a cycle that goes through $b$.
	This cycle should contain an edge that goes in the opposite direction, from $Y$ to $X$.
	This is a contradiction, since $T$ does not have such edges.
	
	We now prove the second point of the observation.
	Since $b\in \Inv(e)\subset B$, we have that $T+b$ contains a cycle $C$, and $e$ belongs $C$, due to $b$ is an inverse edge of $e$.
	That is, $C$ contains an edge from $X$ to $Y$.
	Consequently, $C$ contains an edge from $Y$ to $X$.
	By definition of $X$ and $Y$, there is exactly one edge connecting $X$ and $Y$ in $T$, that is, $e$.
	But $E(C)\subseteq E(T+b)$, and no edge can go from $X$ to $Y$ in $E(T)$.
	Then $b$ goes from $Y$ to $X$.
	The proof is complete.
\end{proof}

The most crucial property of inverse edges is that, once unblocked by deleting some edges in $T$, we can add them to the remaining graph together without forming any directed cycle.

\begin{lemma}\label{lemma:inverse-edges-independence}
	Let $(G,\wt)$ be a weakly-connected digraph with positive edge weights.
	Let $S\subset E(T)$ be its tree edge set.
	Then, the graph $T-S+\Inv(S)$ is acyclic.
\end{lemma}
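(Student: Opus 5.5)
The plan is to argue by contradiction using the cut structure from \Cref{claim:inverse-edges-structure}. Suppose that $T-S+\Inv(S)$ contains a directed cycle $C$. The edges of $C$ are of two kinds: tree edges from $E(T)\setminus S$, and blocked edges from $\Inv(S)$. Since $T$ is acyclic, $C$ cannot consist of tree edges only. So $C$ contains at least one edge $b\in\Inv(S)$, and I fix an edge $e\in S$ with $b\in\Inv(e)$.

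Next I take the cut of $T$ determined by $e$. Let $X,Y$ be the vertex sets of the two weakly-connected components of $T-e$, labelled so that $e$ goes from $X$ to $Y$. By the second point of \Cref{claim:inverse-edges-structure}, $b$ goes from $Y$ to $X$. Since $C$ is a closed directed walk that crosses the cut from $Y$ to $X$ along $b$, it must also contain some edge $f$ that goes from $X$ to $Y$.

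It remains to show that no such $f$ is available in $T-S+\Inv(S)$. If $f$ were a tree edge, it would be a tree edge joining $X$ and $Y$, so $f=e$ because $e$ is the unique such edge of $T$. But $e\in S$ has been deleted, so this case is impossible. If $f$ were an edge of $\Inv(S)$, then $f\in\Inv(S)\subseteq B$, and the first point of \Cref{claim:inverse-edges-structure} says that no blocked edge goes from $X$ to $Y$. This is again impossible. Hence $C$ cannot return from $X$ to $Y$, which is the desired contradiction, and $T-S+\Inv(S)$ is acyclic.

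I do not expect a real obstacle here. The only point needing care is to apply the cut argument to the particular edge $e\in S$ that certifies $b\in\Inv(S)$, rather than to an arbitrary edge of $S$. Once the cut is chosen this way, both points of \Cref{claim:inverse-edges-structure} close the argument immediately, and no global ordering argument (for example via \Cref{prop:tree-order-dimension}) is needed.
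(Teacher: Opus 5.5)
Your proof is correct and is essentially the paper's argument: take $b\in\Inv(S)$ on the cycle and $e\in S$ with $b\in\Inv(e)$, then use both points of \Cref{claim:inverse-edges-structure} on the cut of $T-e$ to show that no edge can return from $X$ to $Y$. You are right to insist that $e\in S$, so that $e$ itself is deleted. The paper relies on exactly this point, though its write-up states it loosely.
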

\begin{proof}
	Assume the contrary, and $T-S+\Inv(S)$ contains a cycle $C$.
	While $T-S$ is acyclic, $C$ contains an edge $b\in \Inv(S)$.
	Then, there is $e\in E(T)$ such that $b\in \Inv(e)$.
	
	Consider $X,Y$ as in \Cref{claim:inverse-edges-structure} for $T$ and $e$.
	By the second point of \Cref{claim:inverse-edges-structure}, $b$ goes from $Y$ to $X$.
	Then $C$ contains an edge that goes from $X$ to $Y$.
	We know that $T$ has only one such edge, that is, $e$, and $C$ does not contain $e$ since $e\notin S$.
	
	Then, $\Inv(S)$ contains an edge that goes from $X$ to $Y$.
	We know that $\Inv(S)\subset B$, so this edge is blocked.
	This contradicts the first point of \Cref{claim:inverse-edges-structure} and finishes the proof of the lemma.
\end{proof}

\subsection{Tree edge profits}


The strategy that comes naturally from \Cref{lemma:inverse-edges-independence} can informally be described as trading tree edges for the union of all their inverse edges.
The profit of this trade is a difference between total weights of tree edges and their inverse edge union.
We formalize this intuition.

\begin{definition}[Tree edge profits]
	Let $(G,\wt)$ be a weakly-connected digraph with positive edge weights.
	For a tree edge $e \in E(T)$, by
	 \[\Profit(G, \wt, e)=\wt(\Inv(e)) - \wt(e)\]
	 we denote the \emph{profit} of $e$ in $(G,\wt)$. For a tree edge set $S\subset E(T)$, by
	 \[\Profit(G,\wt,S)=\wt(\Inv(S))-\wt(S)\] we denote the \emph{profit} of $S$ in $(G,\wt)$. 
	 As usual, we use shortened notation $\Profit(e)$ and $\Profit(S)$.
\end{definition}

We observe some straightforward properties of edge profits below.

\begin{observation}\label{obs:profits}
	For a weakly-connected digraph with positive edge weights $(G,\wt)$, and tree edge set $S\subset E(T)$, and an integer $k\ge 0$, we have that
	\begin{itemize}
		\item If $\Profit(S)\ge k$, then $(G,\wt,k)$ is a yes-instance of \probMASMSTshort.
		\item $\Profit(S) \le \sum_{s \in S} \Profit(s)$.
		\item If for each  distinct $e_1, e_2 \in S$ we have  $\Inv(e_1) \cap \Inv(e_2) = \emptyset$, then $\Profit(S) = \sum_{e \in S} \Profit(e)$.
	\end{itemize}
\end{observation}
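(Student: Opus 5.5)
The three items follow almost directly from the definitions, with \Cref{lemma:inverse-edges-independence} doing the only real work; we treat them in order.

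\textbf{First item.} Suppose $\Profit(S)\ge k$. Take the subgraph $X = T - S + \Inv(S)$. By \Cref{lemma:inverse-edges-independence}, $X$ is acyclic. Since $\Inv(S)\subseteq B$ is disjoint from $E(T)$, its weight is
\[
\wt(X)=\wt(T)-\wt(S)+\wt(\Inv(S))=\wt(T)+\Profit(S)\ge \wt(T)+k .
\]
As $\wt(T)=\wt(\MaxST(G,\wt))$, the graph $X$ witnesses that $(G,\wt,k)$ is a yes-instance. No allowed edges are needed here, although they could be added harmlessly.

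\textbf{Second item.} This is subadditivity of the weight of a union. Since $\Inv(S)=\bigcup_{s\in S}\Inv(s)$ and all weights are positive,
\[
\wt(\Inv(S))\le \sum_{s\in S}\wt(\Inv(s)),
\]
and $\wt(S)=\sum_{s\in S}\wt(s)$ holds exactly. Subtracting the second relation from the first gives
\[
\Profit(S)\le \sum_{s\in S}\bigl(\wt(\Inv(s))-\wt(s)\bigr)=\sum_{s\in S}\Profit(s).
\]

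\textbf{Third item.} If the sets $\Inv(e)$, $e\in S$, are pairwise disjoint, then the union bound above holds with equality, because $\wt$ is additive over disjoint sets. The same computation then yields $\Profit(S)=\sum_{e\in S}\Profit(e)$.

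\textbf{Where care is needed.} The only nontrivial step is the acyclicity of $X$ in the first item, and it is exactly \Cref{lemma:inverse-edges-independence}. The one other point to check is that $\Inv(S)$ and $S$ are disjoint, so that the weight of $X$ decomposes as written. This holds because inverse edges are blocked, hence non-tree, while $S\subseteq E(T)$.
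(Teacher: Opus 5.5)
Your proof is correct and is the argument the paper has in mind. The paper states this observation without proof, calling it straightforward, and elsewhere (e.g., in the proof of \Cref{claim:S-is-bounded}) it uses exactly your combination of \Cref{lemma:inverse-edges-independence} with the decomposition $\wt(T-S+\Inv(S))=\wt(T)+\Profit(S)$; your second and third items are likewise just subadditivity and additivity of $\wt$ over unions.
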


Throughout our algorithms for \probMASMSTshort, we  assume that the profit of each single tree edge is at most $k$.
Otherwise, we have a yes-instance that is recognizable in polynomial time. 

\subsection{Remaining profits}

While the basic concept of profits clearly suggests to look for a profitable enough tree edge set, it does not really allow us to construct a such a tree set iteratively.

Assume, for example, that $A=\emptyset$ and that we guess correctly that an edge $e_1\in E(T)$ is not a part of the solution.
We know that if we remove $e_1$ from $T$ and add all its inverse edges, we gain $\Profit(e_1)$ profit from this.
Afterwards, we move on to guessing the second edge $e_2$ that should be removed from $T$.
But the profits are not ``up-to-date'' anymore, since it might be the case that $\Inv(e_1)\cap \Inv(e_2)\neq\emptyset$, and we gain less profit than $\Profit(e_2)$ from removing $e_2$.

We introduce the concept of \emph{remaining profit}, which is aimed to support the additive edge removal and represent the actual profit when some edges are already removed from $T$.

\begin{definition}[Remaining profits]
	Let $(G,\wt)$ be a weakly-connected digraph with positive edge weights.
	For a tree edge $e \in E(T)$, and for a tree edge set $S\subset E(T)$, we denote by
	\[\RemainingProfit(G, \wt, S, e)=\wt(\Inv(e)\setminus \Inv(S)) - \wt(e)\]
	the \emph{remaining profit} of $e$ in $(G,\wt)$ with respect to $S$.
	As usual, we shorten this to $\RemainingProfit(S,e)$.
\end{definition}

The following properties of remaining profits are straightforward from the definition.

\begin{observation}\label{obs:rem-profit-props}
	Let $(G,\wt)$ be a weakly-connected digraph with positive edge weights, let $e\in E(T)$, let $S\subset E(T)$ and let $e_1, e_2, \ldots, e_q\in E(T)$ such that $e_i\neq e_j$ for $i\neq j$.
	Then
	\begin{itemize}
		\item $\RemainingProfit(S, e) \le \Profit(e)$.
		
		\item $\Profit( \{e_1, \ldots, e_q\}) = \sum_{i = 1}^q \RemainingProfit(\bigcup_{j=1}^{i-1} \{e_j\}, e_i)$.
	\end{itemize}
	
\end{observation}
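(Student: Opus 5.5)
Both items come straight from the definitions of $\Profit$ and $\RemainingProfit$, together with the fact that $\Inv(\cdot)$ is defined as a union. Weights are positive and the sets involved are finite, so all we need is elementary set arithmetic.

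For the first item, note that $\Inv(e)\setminus\Inv(S)\subseteq\Inv(e)$. Since $\wt$ is positive and additive on edge sets, this gives $\wt(\Inv(e)\setminus\Inv(S))\le\wt(\Inv(e))$. Subtracting $\wt(e)$ from both sides yields $\RemainingProfit(S,e)\le\Profit(e)$.

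For the second item, write $S_i=\{e_1,\ldots,e_i\}$ with $S_0=\emptyset$. By definition $\Inv(S_i)=\Inv(S_{i-1})\cup\Inv(e_i)$. This union splits disjointly as $\Inv(S_{i-1})$ together with $\Inv(e_i)\setminus\Inv(S_{i-1})$, so
\[\wt(\Inv(S_i))=\wt(\Inv(S_{i-1}))+\wt(\Inv(e_i)\setminus\Inv(S_{i-1})).\]
Telescoping from $i=1$ to $q$ and using $\Inv(\emptyset)=\emptyset$ gives
\[\wt(\Inv(S_q))=\sum_{i=1}^q \wt(\Inv(e_i)\setminus\Inv(S_{i-1})).\]
The $e_i$ are pairwise distinct, so $\wt(S_q)=\sum_{i=1}^q\wt(e_i)$. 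Subtracting these two sums term by term, the $i$-th term is exactly $\RemainingProfit(S_{i-1},e_i)$, with $S_{i-1}=\bigcup_{j=1}^{i-1}\{e_j\}$. The total is $\wt(\Inv(S_q))-\wt(S_q)=\Profit(\{e_1,\ldots,e_q\})$, as claimed.

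None of these steps is a real obstacle. The only point to check is the distinctness hypothesis: it is what guarantees $\wt(S_q)=\sum_i\wt(e_i)$. Without it, a repeated edge would contribute $-\wt(e_i)$ twice on the right-hand side but only once on the left.
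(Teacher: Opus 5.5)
Your proof is correct and is the direct verification from the definitions that the paper has in mind; the paper calls the observation straightforward and gives no proof. You use monotonicity of $\wt$ under inclusion for the first item, and for the second the disjoint splitting $\Inv(S_i)=\Inv(S_{i-1})\sqcup(\Inv(e_i)\setminus\Inv(S_{i-1}))$, telescoping, and distinctness of the $e_i$ to get $\wt(S_q)=\sum_i\wt(e_i)$.
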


In the following observation, we highlight that picking $k$ edges from $E(T)$ consecutively, each time with positive remaining profit, is a winning strategy for \probMASMSTshort.

\begin{observation}\label{obs:win-by-positive-rp}
	Let $\mathcal{I}=(G,\wt,k)$ be an instance of \probMASMSTshort.
	If there exist $e_1, e_2, \ldots, e_k\in E(T)$ with $\RemainingProfit(\bigcup_{j=1}^{i-1} \{e_j\}, e_i)\ge 1$ for each $i\in [k]$, then $\mathcal{I}$ is a yes-instance.
\end{observation}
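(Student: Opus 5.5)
Set $S=\{e_1,\ldots,e_k\}$. We will show that $\Profit(S)\ge k$ and then apply the first point of \Cref{obs:profits}, which tells us that $(G,\wt,k)$ is a yes-instance. The solution witnessing this is $T-S+\Inv(S)$, which is acyclic by \Cref{lemma:inverse-edges-independence}.

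First we check that the $e_i$ are pairwise distinct, since the second point of \Cref{obs:rem-profit-props} requires this. Suppose $e_i=e_j$ for some $j<i$. Then $e_i\in\bigcup_{l<i}\{e_l\}$, so $\Inv(e_i)\subseteq\Inv(\bigcup_{l<i}\{e_l\})$. Hence
\[
\RemainingProfit\Bigl(\bigcup_{l<i}\{e_l\},e_i\Bigr)=0-\wt(e_i)<0,
\]
because weights are positive. This contradicts the hypothesis that this remaining profit is at least $1$, so the edges are distinct.

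Now the second point of \Cref{obs:rem-profit-props} gives
\[
\Profit(S)=\sum_{i=1}^k \RemainingProfit\Bigl(\bigcup_{j<i}\{e_j\},e_i\Bigr)\ge \sum_{i=1}^k 1=k.
\]
Applying \Cref{obs:profits} finishes the proof. Concretely, the subgraph $T-S+\Inv(S)$ has weight $\wt(T)-\wt(S)+\wt(\Inv(S))=\wt(T)+\Profit(S)\ge \wt(T)+k$.

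Two points need care. For the weight computation, $\Inv(S)\subseteq B$ is disjoint from $E(T)$, so there is no double counting. For the case $k=0$, the statement holds trivially with the solution $T$ itself. The only step that needs any thought is the distinctness check, and it is routine.
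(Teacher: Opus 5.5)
Your proof is correct and follows the paper's route exactly: the paper also obtains the claim directly from the telescoping identity in the second point of \Cref{obs:rem-profit-props} combined with the first point of \Cref{obs:profits}. Your explicit distinctness check is a detail the paper leaves implicit. The check works because a repeated edge would have remaining profit $-\wt(e_i)<0$.
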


\begin{proof}
	Straightforward from the second point of \Cref{obs:rem-profit-props} and the first point of \Cref{obs:profits}.
\end{proof}

For $A=\emptyset$, solving an instance is equivalent to finding a set $S\subset E(T)$ with $\Profit(S)\ge k$.

\section{\classFPT-algorithm for integral edge weights}\label{sec:overview}

In this section, we provide four major blocks of the proof of \Cref{thm:main-result}, each block corresponds to a lemma. 

\subsection{Limiting allowed edges}

In \Cref{sec:classification} we gave a lot of insight on blocked edges.
Allowed edges, if given in instance without inverse edges, are somewhat simpler to deal with, even for rational weights.
We are able to show an algorithm that constructs a solution if their total weight is at least $3k$.

\begin{lemma}\label{lemma:allowed-edges}
	Let $\mathcal{I}=(G, \wt, k)$ be an instance of $\probMASMSTshort(\mathbb{Q}_{\ge 1})$. If $\wt(A) \ge 3k$, then $\mathcal{I}$ is a yes-instance.
	The solution to $\mathcal{I}$ can be constructed in polynomial time.
\end{lemma}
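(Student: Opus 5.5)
The plan is to use the bounded order dimension of the reachability poset of the oriented tree $T$ (\Cref{prop:tree-order-dimension}). The solution will be $T$ together with a subset of $A$ of weight at least $\wt(A)/3\ge k$, chosen so that every selected edge goes forward in a single linear order that all tree edges also respect. Such a subgraph is automatically acyclic, and its weight is at least $\wt(T)+k$.

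\emph{Step 1.} Let $\mathcal{P}=(V(T),\preceq)$ be the poset in which $u\preceq v$ if and only if $T$ has a directed path from $u$ to $v$. This is a partial order because $T$ is acyclic. By \Cref{prop:tree-order-dimension}, we compute in polynomial time linear extensions $\preceq_1,\preceq_2,\preceq_3$ of $\mathcal{P}$ whose intersection is exactly $\preceq$. Every tree edge $(x,y)$ satisfies $x\preceq y$, so it goes forward, from smaller to larger, in each of the three orders.

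\emph{Step 2.} Take an allowed edge $a\in A$ going from $u$ to $v$, where $u\neq v$ since $G$ has no loops. Because $T+a$ has no directed cycle, $T$ has no directed path from $v$ to $u$, i.e.\ $v\not\preceq u$. Since $\preceq$ is the intersection of the three extensions, there is an index $i$ with $v\not\preceq_i u$. As $\preceq_i$ is total and $u\neq v$, this means $u$ strictly precedes $v$ in $\preceq_i$. So every allowed edge goes forward in at least one of the three orders.

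\emph{Step 3.} For $i\in\{1,2,3\}$, let $A_i\subseteq A$ be the set of allowed edges going forward in $\preceq_i$. By Step 2, $A_1\cup A_2\cup A_3=A$, so some index $i$ satisfies $\wt(A_i)\ge \wt(A)/3\ge k$. We take $X=T+A_i$. Every edge of $X$ goes strictly forward in the total order $\preceq_i$, so $X$ has no directed cycle. Its weight is $\wt(T)+\wt(A_i)\ge \wt(T)+k$, so $X$ is a solution. Every step is polynomial: computing $T$, the three extensions, and the sets $A_i$.

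The only nontrivial ingredient is Step 2, which relies on the exact ``if and only if'' realizer property of \Cref{prop:tree-order-dimension}. This property is what turns the condition ``$T+a$ is acyclic'' into ``$a$ is forward in some $\preceq_i$''. The rest is bookkeeping. Rational weights cause no difficulty, since only the averaging bound $\wt(A)/3$ is used.
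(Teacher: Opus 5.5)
Your proposal is correct and is essentially the paper's proof: take the three linear extensions from \Cref{prop:tree-order-dimension}, put each allowed edge in a class $A_i$ for an extension in which it goes forward, and output $T+A_i$ for the heaviest class. The only cosmetic difference is that your classes may overlap where the paper uses a partition, which leaves the averaging bound $\max_i \wt(A_i)\ge \wt(A)/3$ unaffected.
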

\begin{proof}
	We first apply \Cref{prop:tree-order-dimension} to the maximum spanning tree $T$ of $(G, \wt)$.
	Let $\mathcal{P} = (V(T), \preceq)$ be the poset defined as in the proposition statement, that is, $u \preceq v$ if and only if there is a directed path in $T$ from $u$ to $v$.  
	Then, in polynomial time, we can construct linear extensions $\preceq_1$, $\preceq_2$ and $\preceq_3$ of $\mathcal{P}$ such that for each $x, y \in V(T)$, $x \preceq y$ if and only if $x \preceq_i y$ for each $i \in \{1, 2, 3\}$.
	
	We now provide a partition $A_1 \sqcup A_2 \sqcup A_3$ of $A$ such that $T + A_i$ is acyclic for each $i \in \{1, 2, 3\}$.
	For each edge $uv \in A$, we determine its part as follows. 
	By the definition of allowed edges, $T + uv$ does not contain directed cycles. 
	In particular, this means that there is no directed path from $v$ to $u$ in $T$.
	Hence, $v \preceq u$ does not hold. 
	Then, by the definition of $(\preceq_1, \preceq_2, \preceq_3)$, we know that $u$ must precede $v$ in one of these total orders. 
	We choose an arbitrary $\preceq_i$  with this property and add $uv$ to the set $A_i$. 
	Note that the construction of $A_1$, $A_2$ and $A_3$ also takes polynomial time. 
	
	We claim that for each $i \in \{1, 2, 3\}$, all edges from $E(T) \cup A_i$ go from smaller to larger elements with respect to $\preceq_i$. For the edges from $A_i$, this property is ensured by the definition of $A_i$. 
	Now consider an edge $uv \in E(T)$. 
	This edge gives a trivial path from $u$ to $v$ in $T$, and thus we have $u \preceq v$, implying $u \preceq_i v$. 
	Because $uv$ is a tree edge, it cannot be a loop, and hence $u$ strictly precedes $v$ in $\preceq_i$. 
	
	It follows that for each $i \in \{1, 2, 3\}$, the graph $T + A_i$ is acyclic. 
	Let $A_j$ be a set of maximum weight among $A_1$, $A_2$ and $A_3$. 
	If $\wt(A) \ge 3k$, then $\wt(A_j) \ge k$, and hence $\wt(T + A_j) \ge \wt(T) + k$. 
	Therefore, $T + A_j$ is a solution to $\mathcal{I}$, and it can be constructed in polynomial time.  
\end{proof}

It is an interesting open problem to determine whether the bound in \Cref{lemma:allowed-edges} is tight or can be improved.

\subsection{Decomposing MST into paths}

Our next block is a polynomial-time algorithm that either constructs a solution, or obtains an equivalent instance of $\probMASMSTshort(\mathbb{Z}_{\ge 1})$ via edge contractions.
In the resulting instance, all but at most $\Oh(k+|A|)$ vertices have exactly one ingoing and exactly one outgoing edge.

\begin{lemma}\label{lemma:directed-paths-cover}
	There an algorithm that, given an instance $\mathcal{I}=(G,\wt,k)$ of $\probMASMSTshort(\mathbb{Z}_{\ge 1})$, works in polynomial time and either
	\begin{itemize}
		\item outputs a solution to $\mathcal{I}$, or
		\item outputs an equivalent instance $\mathcal{I}'=(G',\wt',k)$ of $\probMASMSTshort(\mathbb{Z}_{\ge 1})$ such that in $\MSF(G',\wt')$ all but at most $\Oh(k+|A'|)$ vertices have both in- and out-degree one, where $A'=A(G',\wt')$.
	\end{itemize}
	Additionally, a solution to $\mathcal{I}'$  can be transformed into a solution to $\mathcal{I}$ in polynomial time.
\end{lemma}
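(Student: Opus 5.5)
The plan is to simplify the tree $T=\MaxST(G,\wt)$ with a small set of safe reductions. Each reduction either exhibits a solution or produces an equivalent instance with the same parameter $k$. We then argue that once no reduction applies, the number of ``irregular'' vertices of $T$ is $\Oh(k+|A|)$. Here a vertex is \emph{irregular} if its in-degree or out-degree in $T$ differs from one.

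We start from the observation that in a tree the number of irregular vertices is within a constant factor of the number of leaves plus the number of vertices where direction changes. Direction-change vertices are sources and sinks of $T$ (in the oriented sense), and these include all leaves. So it suffices to bound the number of sources and sinks of $T$ by $\Oh(k+|A|)$.

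\textbf{Rule 1 (non-terminal leaves).} Call a vertex a \emph{terminal} if it is incident to a non-tree edge. Let $v$ be a non-terminal leaf of $T$ with tree edge $e$. No blocked path of \Cref{obs:basic-inverse-edge-property} passes through $e$, so $\Inv(e)=\emptyset$. Adding $e$ to any acyclic subgraph of $G-v$ keeps it acyclic, because $v$ becomes a pendant vertex. Hence we may delete $v$: the guarantee and the optimum both drop by exactly $\wt(e)$, and $k$ is unchanged.

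\textbf{Rule 2 (merging consecutive tree edges).} Let $e=xy$ and $f=yz$ be consecutive tree edges such that $y$ is non-terminal and has in- and out-degree one in $T$. Then every blocked path using one of $e,f$ uses both, so $\Inv(e)=\Inv(f)$. An optimal solution never needs to drop both, and if it drops one, it drops the lighter one. We therefore contract $e$ and put weight $\min(\wt(e),\wt(f))$ on the merged edge. We also add $\max(\wt(e),\wt(f))$ back to the guarantee, by attaching a pendant edge of that weight at the merged vertex, so that the instance stays in $\mathbb{Z}_{\ge 1}$ and $T$ stays the MaxST.

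We must check that contraction preserves maximality of $T$ and the lexicographic tie-breaking. The check is that every non-tree edge whose fundamental cycle used $e$ also uses $f$, and $\wt$ of that edge is at most both weights. The inverse transformation of solutions is immediate in both rules.

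After these rules, every source or sink of $T$ of degree at most two is a terminal. Sources and sinks of higher degree are controlled by leaf counts, since a tree with $\ell$ leaves has fewer than $\ell$ vertices of degree at least three. It therefore remains to bound the number of terminals that are sources or sinks. Terminals incident to allowed edges number at most $2|A|$. For blocked edges we use integrality and the greedy argument of \Cref{obs:win-by-positive-rp}. Repeatedly pick a tree edge with remaining profit at least $1$. If we succeed $k$ times, we output the solution $T-S+\Inv(S)+\emptyset$, which is acyclic by \Cref{lemma:inverse-edges-independence}. Otherwise we obtain a set $S$ with $|S|<k$ such that every other tree edge $e$ satisfies $\wt(\Inv(e)\setminus\Inv(S))\le\wt(e)$.

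\textbf{Main obstacle.} From this certificate we need to deduce that blocked edges ending at sources or sinks of $T$ are few. The intended argument is as follows. A blocked edge $b=uv$ whose tree path goes from $v$ to $u$ has $v$ not a sink and $u$ not a source. When such an endpoint is nonetheless a source or sink of $T$, the tree edge at that endpoint must carry $b$ in its inverse set. Moreover, the MaxST cycle property gives $\wt(b)\le\wt(e)$ for each $b\in\Inv(e)$.

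Now take the tree edges incident to distinct sources or sinks whose blocked paths end there. We want to show that they can be chosen with pairwise disjoint inverse sets that are not covered by $\Inv(S)$. Each such edge would then have remaining profit at least $1$, via a charging argument that uses integrality and the fact that an endpoint edge's inverse set includes the blocked edge itself. This would give more than $k$ further greedy steps, a contradiction.

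If this disjointness fails, I would first try a weaker bound: charge each such source or sink to one of the $\Oh(k)$ edges of $S$ or their inverse paths. I expect this counting to be the delicate part of the proof.
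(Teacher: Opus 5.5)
There is a genuine gap, exactly at the step you flagged as the main obstacle: the statement you want to derive from the greedy certificate is false. Your rules never touch a vertex incident to a non-tree edge. Rule~2 only acts on vertices that already have in- and out-degree one, which the lemma allows in unbounded number. So after your rules, the sources and sinks of $T$ that carry blocked edges can still be unboundedly many in a no-instance.

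Here is a concrete example. Take vertices $v_0,\dots,v_{2m}$ and $w_0,\dots,w_{2m-1}$. Put tree paths $v_i\to w_i\to v_{i+1}$ for even $i$ and $v_{i+1}\to w_i\to v_i$ for odd $i$, all of weight $2$. For each $i$, add one edge of weight $1$ between $v_i$ and $v_{i+1}$ that closes the triangle into a directed cycle.
\begin{itemize}
\item $T$ is the MaxST and $A=\emptyset$.
\item Every $v_i$ is a source or a sink of $T$.
\item Every tree edge has $\Profit(e)=1-2=-1$.
\item Every directed cycle lies inside one triangle, so the optimum is exactly $\wt(T)$ and $(G,\wt,1)$ is a no-instance.
\end{itemize}
Your Rule~1 does not apply, since both leaves are terminals. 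Rule~2 only removes the $w_i$. Hence all $v_i$ stay sources or sinks, the greedy step never succeeds, and your algorithm violates the lemma's second outcome.

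The inference that fails is ``such an edge has remaining profit at least $1$ because its inverse set contains the blocked edge''. The cycle property $\wt(b)\le\wt(e)$ bounds the profit from the wrong side, so these edges can be unprofitable. As a side issue, your claim that every source or sink of degree at most two is a terminal is also not ensured: a non-terminal source of degree two is untouched by both rules.

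The missing idea is that unprofitable tree edges must be contracted even when their endpoints carry blocked edges. The paper's first rule contracts the edge $e$ of any leaf $\ell$ that is not incident to an allowed edge, whenever $\Profit(e)\le 0$. Safety comes from an exchange argument. A solution $X$ avoiding $e$ is replaced by $X+e-\Inv(e)$, which is no lighter. In it, $e$ is the only edge at $\ell$, because every blocked edge at $\ell$ lies in $\Inv(e)$. So the new graph is acyclic and stays acyclic after contracting $e$.

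Integrality is then used differently from your plan. Every remaining leaf is either profitable, hence of profit at least $1$, or incident to $A$. Two source leaves have disjoint inverse sets, since by \Cref{claim:inverse-edges-structure} all their inverse edges end at the leaf. So $2k$ profitable leaves already yield profit at least $k$.

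For degree-two sources and sinks, the paper first shows that a directed segment between two consecutive such vertices shares no inverse edge with the rest of $T$. Its second rule contracts an edge whenever four consecutive alternating such vertices are joined by three entirely unprofitable segments. If the rule does not apply, every block of four contributes a profitable edge with a private inverse set, which bounds their number by $\Oh(k+|A|)$.
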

\begin{proof}
	We call a tree edge $e \in E(T)$ \textit{profitable} if $\Profit(e) > 0$. 
	Otherwise we say that it's \textit{unprofitable}. 
	A leaf vertex in $T$ inherits this classification from its unique incident \textit{tree} edge.
	We provide two reduction rules which contract unprofitable edges. 
	These rules are designed in such a way that if they are not applicable, but $T$ has too many vertices whose in-degree or out-degree differs from one, then $\mathcal{I}$ is a yes-instance.
	We start with a couple of claims that are used in both reduction rules. 
	\begin{claim}\label{claim:contracted-instance-properties}
		Consider the graph $G/e$, where $e \in E(T)$ is a tree edge. Then:
		\begin{enumerate}
			\item Weight of $\MSF$ in $G/e$ is $\wt(T) - \wt(e)$.
			\item If $X' \subseteq G/e$ is acyclic and has weight at least $\wt(T) - \wt(e) + k$, then $X + e$ is also acyclic and $\wt(X + e) \ge \wt(T) + k$, where $X$ is a subgraph of $G$ whose edge set corresponds to $E(X')$.
		\end{enumerate}
	\end{claim}
	\begin{claimproof}
		Note that edge contraction preserves cycles (both directed and undirected). Then, if there are no cycles after contracting the edge, there were none before the contraction. In particular, if $T'$ is a spanning tree in $G/e$, then the same edge set together with $e$ forms a spanning tree in $G$. The same argument applies if $X'$ is an acyclic subgraph in $G/e$. In both cases, the weight of the resulting subgraph in $G$ is increased by $\wt(e)$. This proves the second point of the claim.
		
		It remains to show that $\MSF$ in $G/e$ has weight $\wt(T) - \wt(e)$. Indeed, $T/e$ is a spanning tree of exactly this weight, and any spanning tree in $G/e$ with larger weight would, together with $e$, yield a spanning tree in $G$ heavier than $T$, contradicting the maximality of $T$.
	\end{claimproof}
	
	This claim guarantees that a solution for the reduced instance $(G/e, \wt', k)$ always yields a solution for $\mathcal{I}$, where $\wt'$ is the weight function induced by $\wt$ on $G/e$.
	Consequently, to ensure the safeness of our reduction rules, we only need to prove the converse implication. 
	
	We also make use of the following fact about sets of unprofitable edges:
	\begin{claim}\label{claim:trade-unprofitable-edge-set}
		Let $X$ be a subgraph of $G$ and let $E' \subseteq (E(T) \setminus E(X))$ be a subset of tree edges, each of which is unprofitable. 
		Then $\wt(X + E' - \Inv(E')) \ge \wt(X)$.
	\end{claim}
	\begin{claimproof}
		Since $E' \cap E(X) = \emptyset$, it follows that 
		\begin{equation*}
			\wt(X + E' - \Inv(E')) \ge \wt(X) + \wt(E') - \wt(\Inv(E')) = \wt(X) - \Profit(E')
		\end{equation*}
		By \Cref{obs:profits}, we have 
		$\wt(X) - \Profit(E') \ge \wt(X) - \sum_{e' \in E'} \Profit(e')$. 
		Finally, given that all edges in $E'$ are unprofitable, we conclude that 
		$\wt(X + E' - \Inv(E')) \ge \wt(X)$. 
	\end{claimproof}
	
	We proceed to the first reduction rule, which helps us limit the number of unprofitable leaves in $T$.
	\begin{rrule}[Unprofitable leaf contraction]\label{rrule:unprofitable-leaf-contraction}
		Consider a leaf vertex $\ell$ in $T$ that is not incident to any \textit{allowed} edge in $G$, and denote by $e$ the unique tree edge incident to $\ell$. If $\Profit(e) \le 0$, contract $e$.
	\end{rrule}
	\begin{claim}\label{claim:leaf-contraction-is-safe}
		\Cref{rrule:unprofitable-leaf-contraction} is safe.
	\end{claim}
	\begin{claimproof}
		By \Cref{claim:contracted-instance-properties}, it suffices to show that if $\mathcal{I}$ is a yes-instance, then $\mathcal{I}' := (G/e, \wt', k)$ is also a yes-instance. 
		Suppose we have an acyclic subgraph $X \subseteq G$ such that $\wt(X) \ge \wt(T) + k$. 
		If $e \notin E(X)$, consider a subgraph $\tilde{X} := X + e - \Inv(e)$. Since $\Profit(e) \le 0$, by \Cref{claim:trade-unprofitable-edge-set}, we have $\wt(\tilde{X}) \ge \wt(X) \ge \wt(T) + k$.
		Also, we claim that $\tilde{X}$ is acyclic. 
		Assume the contrary. 
		Since $X$ is acyclic, each cycle in $\tilde{X}$ must contain the edge $e$ and the vertex $\ell$. 
		By the definition of $\ell$, it is not incident to \textit{allowed} edges and to \textit{tree} edges except $e$. 
		Consider a \textit{blocked} edge $e'$ that is incident to $\ell$.
		By the definition of a \textit{blocked} edge, $T + e'$ contains a directed cycle. 
		Since $e$ is the only edge that is incident to $\ell$ in $T$, this cycle must contain $e$.
		Then $e' \in \Inv(e)$ and hence $e' \notin E(\tilde{X})$.
		This means that $e$ is the only edge incident to $\ell$ in $\tilde{X}$ and hence $\ell$ cannot lie on any cycle. 
		Then $\tilde{X}$ is also a valid solution.
		Therefore, if $(G, \wt, k)$ is a yes-instance, we always have a solution that contains $e$. Furthermore, we denote such a solution by $X$.
		
		We aim to show that $X/e$ is a valid solution for $\mathcal{I}'$.
		This subgraph has weight $\wt(X) - \wt(e) \ge (\wt(T) - \wt(e))	+ k$. 
		Thus, it remains to prove that $X/e$ is acyclic. 
		Without loss of generality, assume that $\ell$ is a source leaf. 
		Let $v$ be the ending vertex of $e$, and let $w$ be the vertex in $X/e$ resulting from the merge of $\ell$ and $v$. 
		Recall that $X$ is acyclic.
		This means that if $X/e$ contains a cycle $C$, then $w$ must lie on it.
		Consider an edge set $C'$ that corresponds to $C$ in $X$. 
		Note that it forms either a cycle, a path from $\ell$ to $v$, or a path from $v$ to $\ell$. 
		If it is a path from $v$ to $\ell$, then $C' \cup \{e\}$ is a directed cycle in $X$.
		Thus, taking into account the acyclicity of $X$, the only possible case is that $C'$ is the path from $\ell$ to $v$.
		By construction, $e \notin C'$. 
		Then there must be another edge outgoing from $\ell$ in $X$. 
		But note that this is impossible: by the definition of $\ell$, it is not incident to \textit{allowed} edges and to other \textit{tree} edges except $e$;
		also, applying \Cref{claim:inverse-edges-structure} to the edge $e$, we can conclude that there are no \textit{blocked} edges, going from $\ell$.
		Therefore, $X/e$ is acyclic, which completes the proof.
	\end{claimproof}
	
	In fact, an exhaustive application of \Cref{rrule:unprofitable-leaf-contraction} leads us to a structure of $\MSF$ that is very similar to the required one.
	Let $S \subseteq V$ be a set of vertices that are either incident to $\textit{allowed}$ edges or have degree different from two in $T$.
	We argue that if the size of $S$ is not bounded by a linear function of $k + |A|$, then we can immediately conclude that $\mathcal{I}$ is a yes-instance.
	
	\begin{claim}\label{claim:S-is-bounded}
		If \Cref{rrule:unprofitable-leaf-contraction} is not applicable and $|S| \ge 4(k + |A|)$, then $\mathcal{I}$ is a yes-instance of \probMASMSTshort.
	\end{claim}
	\begin{claimproof}
		First, we divide the vertices from $S$ into three groups, depending on their degree in $T$: $S_1$, $S_2$, and $S_{\ge 3}$ are the sets of vertices of degree one, two, and at least three, respectively.
		Note that $S = S_1 \sqcup S_2 \sqcup S_{\ge 3}$.
		Also, from the definition of $S$, we know that $S_1$ and $S_{\ge 3}$ contain all vertices of $T$ with the corresponding degrees, while $S_2$ consists only of those vertices of degree two that are incident to the \textit{allowed} edges. 
		It is known that in every tree $|S_1| > |S_{\ge 3}|$. Then
		\begin{equation*}
			|S_1| + |S_2| \ge \frac{1}{2}(|S_1| + |S_{\ge 3}|) + |S_2| \ge \frac{1}{2}|S| = 2(k + |A|)
		\end{equation*}
		Among the vertices from the sets $S_1$ and $S_2$, at most $2|A|$ vertices are incident to the \textit{allowed} edges, and the rest are leaves in $T$. 
		Since \Cref{rrule:unprofitable-leaf-contraction} is not applicable, we can conclude that $T$ contains at least $|S_1| + |S_2| - 2|A| = 2k$ profitable leaves. 
		We divide them into sources and sinks, depending on the direction of the only incident edge. 
		Without loss of generality, we assume that at least half of them are sources.
		Let $D$ be the set of tree edges incident to such source leaves.
		
		We aim to show that the subgraph $X := T - D + \Inv(D)$ is a valid solution for the instance $\mathcal{I}$.
		By \Cref{lemma:inverse-edges-independence}, we know that it is acyclic.
		Thus, it remains to show that $\wt(X) \ge \wt(T) + k$.
		Consider two distinct edges $d_1$ and $d_2$ from $D$. 
		Let $\ell_1$ and $\ell_2$ be the corresponding source leaves.
		Applying \Cref{claim:inverse-edges-structure} to the edge $d_1$, we obtain that all its inverse edges end at vertex $\ell_1$. 
		By the same argument, all edges from $\Inv(d_2)$ end at $\ell_2$, and hence $\Inv(d_1) \cap \Inv(d_2) = \emptyset$. 
		Then the last point of \Cref{obs:profits} guarantees that $\Profit(D) = \sum_{d \in D} \Profit(d)$.
		Recall that all edges in $D$ are profitable.
		It follows that $\wt(X) = \wt(T) + \sum_{d \in D} \Profit(d) \ge \wt(T) + |D| \ge \wt(T) + k$.
	\end{claimproof}
	
	Given that the size of $S$ is bounded, the only possible obstacle to proving the lemma is that $T$ may contain many sinks and sources of degree two. 
	Let $S'$ be a set of such sinks and sources that are not incident to the allowed edges. 
	Note that $S \cap S' = \emptyset$.
	Our goal is to limit the size of $S'$ by a linear function of $k + |A|$ as well.
	Intuitively, our aim is to design a reduction rule to guarantee that, after its exhaustive application, any undirected path in $T$ (i.e., a path in the underlying graph of $T$) that avoids $S$ also contains only a bounded number of vertices from $S'$.
	Before we present such a reduction rule, let us understand the structure of inverse edges relative to $S'$. 
	
	\begin{claim}\label{claim:inverse-edges-for-paths-between-S'}
		Let $u, v \in S'$ be a pair of vertices such that there is a directed path from $u$ to $v$ in $T$, with all internal vertices outside both $S$ and $S'$.
		Let $E_{uv}$ be the set of edges on this path. 
		Then $e \in E_{uv}$ cannot share an inverse edge with $e' \in (E(T) \setminus E_{uv})$.
	\end{claim}
	\begin{claimproof}
		Since $u, v \in S'$, we know that each of them is either a sink or a source in $T$.
		Note that, given the existence of a directed path from $u$ to $v$, the vertex $u$ must be a source and $v$ must be a sink.
		Moreover, as all internal vertices of the path lie outside both $S$ and $S'$, their in-degree and out-degree in $T$ are equal to one.
		
		Consider an edge $e \in E_{uv}$ that leads from $u'$ to $v'$ and suppose that it has an inverse edge $\tilde{e} \in \Inv(e)$ that goes from $a$ to $b$.
		By \Cref{obs:basic-inverse-edge-property}, $T$ contains a directed path from $v'$ to $u'$, and this path contains $\tilde{e}$.
		Then there also must be paths in $T$ from $v'$ to $a$ and from $b$ to $u'$.
		If $v'$ is an internal vertex of the path from $u$ to $v$, then its only outgoing edge in $T$ leads to the next vertex on the path. 
		Otherwise, $v' = v$ and its out-degree is zero. 
		This means that all vertices that are reachable from $v'$ in $T$ lie on the path from $v'$ to $v$, and hence $a$ also must be on this path.
		Similarly, we can show that $b$ lies on the path from $u$ to $u'$.
		Thus, a path from $b$ to $a$ in $T$ consists only of edges from $E_{uv}$.
		Applying the \Cref{obs:basic-inverse-edge-property}, we obtain that if $\tilde{e} \in \Inv(e')$, then $e' \in E_{uv}$.
	\end{claimproof}
	
	We move on to the second reduction rule. 
	\begin{rrule}\label{rrule:contraction-between-sinks-and-sources}
		Consider four distinct vertices $v_1, v_2, v_3, v_4 \in S'$ for which $T$ contains edge-disjoint directed paths from $v_1$ to $v_2$, from $v_3$ to $v_2$, and from $v_3$ to $v_4$, all of whose internal vertices lie outside both $S$ and $S'$.
		If all edges on these paths are unprofitable, contract the first edge on the path from $v_1$ to $v_2$. 
	\end{rrule}
	
	\begin{claim}
		\Cref{rrule:contraction-between-sinks-and-sources} is safe.
	\end{claim}
	\begin{claimproof}
		Recall that by \Cref{claim:contracted-instance-properties}, we only need to show that the existence of a solution for $\mathcal{I}$ implies that there is also a solution for $\mathcal{I}' := (G/e_c, \wt', k)$, where $e_c$ is the first edge on the path from $v_1$ to $v_2$, and $\wt'$ is the weight function naturally inherited from $\wt$. 
		
		Let $E_p$ be the set of edges on the paths from $v_1$ to $v_2$, from $v_3$ to $v_2$, and from $v_3$ to $v_4$ in $T$, and let $V_p$ be the set of all vertices on these paths excluding $v_1$ and $v_4$.
		Consider a vertex $v \in V_p$. 
		Note that $v \notin S$. For internal vertices of the paths, this is ensured by the lemma statement; for $v_2$ and $v_3$, it follows from the fact that $S \cap S' = \emptyset$. 
		Consequently, $v$ is incident to no allowed edges, and its degree in $T$ is two.
		Thus, $E_p$ contains all edges that are incident to $v$ in $T$: indeed, if $v$ is an internal vertex of a path, then it has two incident edges on this path. 
		Otherwise, $v$ is either $v_2$ or $v_3$.
		In both cases, $v$ belongs to two paths, and since these paths are edge-disjoint, it also has two incident edges in $E_p$. 
		
		Suppose that $X$ is a solution for $\mathcal{I}$. Let $E_r := E_p \setminus E(X)$.
		We claim that $\tilde{X} := X + E_r - \Inv(E_r)$ is also a solution for $\mathcal{I}$, and every vertex from $V_p$ has exactly two incident edges in $\tilde{X}$, both from $E_p$.
		
		We start with the latter property. Consider a vertex $v \in V_p$.
		As argued above, $v$ is incident to no allowed edges. 
		Moreover, $v$ has two incident edges in $T$, both of which belong to $E_p$.
		Since $E_p \subseteq E(\tilde{X})$, these edges are also present in $\tilde{X}$.
		Suppose, for contradiction, that some blocked edge $e' \in E(\tilde{X})$ is incident to $v$, and let $u$ be its other endpoint.
		By \Cref{obs:basic-inverse-edge-property}, there must be a directed path $P$ between $u$ and $v$ in $T$, and for every edge $e$ on this path, we have $e' \in \Inv(e)$.
		We distinguish two cases. 
		\begin{enumerate}
			\item Assume that $P$ consists only of edges from $E_p$. If it contains at least one edge $e \in E_r$, then $e' \notin \tilde{X}$, because $\Inv(e) \cap E(\tilde{X}) = \emptyset$. 
			Otherwise, all edges on $P$ were also in $X$. 
			Since $X$ is acyclic, it follows that $e' \notin E(X)$.
			And as no new blocked edges are introduced in  $\tilde{X}$, we can conclude that $e' \notin E(\tilde{X})$.
			\item Assume that there is an edge $e_1$ from the set $(E(T) \setminus E_p)$ on the path $P$.
			Recall that all edges incident to $v$ in $T$ lie in $E_p$. 
			Consequently, $P$ must also contain an edge $e_2 \in E_p$. 
			Note that $e' \in (\Inv(e_1) \cap \Inv(e_2))$.
			By the definition of $E_p$, the edge $e_2$ lies on one of three paths: either from $v_1$ to $v_2$, from $v_3$ to $v_2$, or from $v_3$ to $v_4$. 
			In contrast, since $e_1 \notin E_p$, the edge $e_1$ is not part of any of these paths.
			Note that each of these three paths satisfies the condition of \Cref{claim:inverse-edges-for-paths-between-S'}.
			This implies that $e_1$ and $e_2$ cannot have inverse edges in common, yielding a contradiction. 
		\end{enumerate}
		
		Hence, no blocked edge is incident to $v$, and its degree it $\tilde{X}$ is two.
		
		Next, we prove that $\tilde{X}$ is acyclic.
		If there is a cycle in $\tilde{X}$, it must contain at least one edge that was not present in $X$. 
		Suppose that it contains an edge $e \in E_r \subseteq E_p$, and let $u$ and $v$ be the starting and ending vertices of $e$, respectively. 
		By \Cref{obs:basic-inverse-edge-property}, there must be a directed path from $v$ to $u$ in $\tilde{X}$.
		We aim to obtain a contradiction with this fact.
		Recall that $E_p$ consists of edges from the three paths mentioned in the lemma statement.
		We consider two cases, depending on which path contains $e$.
		\begin{enumerate}
			\item Assume that $e$ lies either on the path from $v_1$ to $v_2$ or on the path from $v_3$ to $v_2$.
			Then $v \in V_p$, so $v$ has two incident edges in $\tilde{X}$, both from $E_p$.
			If $v \neq v_2$, it is an internal vertex of the path and thus has exactly one outgoing edge in $\tilde{X}$.
			Moreover, the vertex at the opposite end of this edge also belongs to $V_p$, and we can apply the same reasoning to it.
			Traversing such edges repeatedly, we eventually reach the vertex $v_2$, while $u$ is never visited along this walk. 
			Note that $v_2$ is a sink in $\tilde{X}$: it belongs to $V_p$ and is therefore incident only to tree edges. Furthermore, $v_2 \in S'$ and is hence either a sink or a source in $T$. 
			Given the directed path from $v_1$ to $v_2$ in $T$, we infer that $v_2$ is a sink.
			Consequently, $u$ is not reachable from $v$ in $\tilde{X}$, which leads us to a contradiction. 
			
			\item The second case is analogous. 
			The only difference is that we consider the path from $v$ to $u$ from the end.
			If the edge $e$ lies on the path from $v_3$ to $v_4$, then $u \in V_p$. 
			If $u \neq v_3$, then it has exactly one incoming edge in $\tilde{X}$; let $u'$ be the starting vertex of this edge. Note that $u' \in V_p$ as well.
			By repeatedly moving to the starting vertex of the unique incoming edge, we eventually reach $v_3$. The vertex $v_3$ is a source in $\tilde{X}$, and $v$ is never visited along this backward walk.
			Consequently, a directed path from $v$ to $u$ cannot exist in $\tilde{X}$, yielding a contradiction.
		\end{enumerate}
		
		It remains to provide an argument about the weight of $\tilde{X}$.
		Since $E_r \subseteq E_p$, all edges of $E_r$ are unprofitable. 
		Thus, by \Cref{claim:trade-unprofitable-edge-set}, we have $\wt(\tilde{X}) \ge \wt(X) \ge \wt(T) + k$.
		Therefore, $\tilde{X}$ is also a solution for $\mathcal{I}$.
		
		Consider the subgraph $\tilde{X}/e_c$.
		It has weight $\wt(\tilde{X}) - \wt(e_c) \ge (\wt(T) - \wt(e_c)) + k$. 
		Consequently, we only need to show that $\tilde{X}/e_c$ is acyclic.
		Let $u$ be the ending vertex of the edge $e_c$, and let $w$ be the vertex in $\tilde{X}/e_c$ resulting from the merge of $v_1$ and $u$.
		Similarly to the proof of \Cref{claim:leaf-contraction-is-safe}, we can derive that if $\tilde{X}/e_c$ contains a cycle, $w$ must lie on it, and this implies the existence of a path from $v_1$ to $u$ in $\tilde{X}$ without the edge $e_c$. We prove that there cannot be such a path in $\tilde{X}$.
		
		Note that $u \in V_p$. Hence, its degree in $\tilde{X}$ is two, and both incident edges lie in $E_p$. 
		If $u \neq v_2$, then it is an internal vertex of the path from $v_1$ to $v_2$, and hence it has only one incoming edge in $\tilde{X}$, which is $e_c$. In this case, any path from $v_1$ to $u$ must contain the edge $e_c$.
		If $u = v_2$, then it has two incoming edges: $e_c$ and the last edge on the path from $v_3$ to $v_2$, let $u'$ be the opposite end of this edge. 
		The vertex $u'$ is also in $V_p$, and if $u' \neq v_3$, then it is an internal vertex of the path from $v_3$ to $v_2$. 
		Repeating this argument, we eventually arrive at the vertex $v_3$, which is a source in $\tilde{X}$.
		Therefore, in this case there is also no path from $v_1$ to $u$ that does not contain $e_c$.
		Therefore, the subgraph $\tilde{X}/e_c$ is acyclic, which completes the proof.
	\end{claimproof}
	
	It remains to show that if neither reduction rule is applicable, then the size of $S'$ is bounded by $\mathcal{O}(k + |A|)$. 
	Intuitively, the argument is as follows: since every vertex in $T$ whose degree differs from two belongs to the set $S$, the tree can be decomposed into undirected paths whose endpoints lie in $S$ and whose internal vertices are all outside $S$. 
	The number of such paths is $|S| - 1$, which is bounded by \Cref{claim:S-is-bounded}.
	After that, for any four consecutive vertices from $S'$ on a single path, we can show that if \Cref{rrule:contraction-between-sinks-and-sources} is not applicable, then there is a profitable edge on this subpath. 
	Moreover, \Cref{claim:inverse-edges-for-paths-between-S'} guarantees that profitable edges from different subpaths (and from different paths) cannot share inverse edges. Hence, by \Cref{obs:profits}, we can sum up their profits. 
	Therefore, if the size of $S'$ is too large, it implies that $\mathcal{I}$ is a yes-instance.
	
	\begin{claim}\label{claim:S'-is-bounded}
		If \Cref{rrule:unprofitable-leaf-contraction} and \Cref{rrule:contraction-between-sinks-and-sources} are not applicable and $|S'| \ge 12|A| + 16k$, then $\mathcal{I}$ is a yes-instance.
	\end{claim}
	\begin{claimproof}
		First, observe that the underlying graph of $T$ can be represented as a set of paths $\mathcal{P} = \{P_1, \ldots, P_t\}$ such that:
		\begin{itemize}
			\item $t = |S| - 1$;
			\item for every path $P_i$, its endpoints belong to $S$, while all its internal vertices lie outside $S$;
			\item for every pair $u, v$ of consecutive vertices on a path, there is an edge in $T$ between $u$ and $v$ in one of the directions;
			\item every edge of $T$ is covered by exactly one path from $\mathcal{P}$.
		\end{itemize}
		To achieve this, we root the underlying graph of $T$ at an arbitrary vertex $r \in S$. Then, for each $s \in S$ with $s \neq r$, we add to $\mathcal{P}$ the path from $s$ to its closest ancestor from $S$. 
		Since $r \in S$, such an ancestor always exists.
		Note that this construction guarantees that the first three required properties are satisfied. 
		For the last property, recall that every leaf of $T$ is in $S$, and hence every edge of the tree is covered at least once. 
		Conversely, suppose that two paths $P_i$ and $P_j$, starting at vertices $s_1$ and $s_2$ respectively, have a common edge. 
		If $s_1$ is an ancestor of $s_2$ or vice versa, this is impossible by our construction.
		Otherwise, let $s$ be the least common ancestor of $s_1$ and $s_2$. 
		Since $P_i$ and $P_j$ share an edge, $s$ must be an internal vertex of both paths. 
		But in this case, $s$ has degree at least three in $T$, and hence it must lie in $S$, which contradicts the structure of $P_i$ and $P_j$, because they must have ended at $s$.
		Notice that if $t \ge 4(|A| + k)$, then by \Cref{claim:S-is-bounded}, $\mathcal{I}$ is a yes-instance. 
		Thus, we may assume that $t < 4(|A| + k)$.
		
		Consider a path $P_i \in \mathcal{P}$, and let $v_1, \ldots, v_{p_i} \in P_i$ be the vertices from $S'$ listed in the order they appear on the path.
		Note that for each pair $v_j$ and $v_{j + 1}$, the subsegment of $P_i$ between them must form a directed path in $T$. Indeed, if it did not, there would be a vertex $v$ between them that has either two incident incoming edges or two incident outgoing edges on the path.
		But we know that $v \notin (S \cup S')$, and hence it must have in-degree one and out-degree one in $T$.
		We denote the directed path between $v_j$ and $v_{j + 1}$ by $P_{i, j}$. 
		
		Recall that, by the definition of $S'$, each $v_j$ is either a sink or a source in $T$. 
		The argument above guarantees that their types must alternate: i.e., if $v_j$ is a sink, then $v_{j + 1}$ must be a source, and vice versa. 
		Since \Cref{rrule:contraction-between-sinks-and-sources} is not applicable, for each $j$ with $0 \le j \le p_i - 3$ there must be a profitable edge on the subsegment of $P_i$ between $v_j$ and $v_{j + 3}$. Otherwise, the four vertices $v_j, v_{j + 1}, v_{j + 2}, v_{j + 3}$ (or the reverse order $v_{j + 3}, v_{j + 2}, v_{j + 1}, v_j$) would satisfy the conditions of this reduction rule. 
		Applying this argument to each block of four vertices of the form $v_{4x + 1}, v_{4x + 2}, v_{4x + 3}, v_{4x + 4}$, we obtain $\lfloor p_i / 4 \rfloor$ profitable edges on the path $P_i$. 
		Moreover, each pair $v_j$, $v_{j + 1}$ satisfies the conditions of \Cref{claim:inverse-edges-for-paths-between-S'}. Hence, an edge $e \in P_{i, j}$ cannot share an inverse edge with any $e' \in (E(T) \setminus P_{i, j})$.
		
		Consequently, we can construct a set of profitable tree edges $R := \{e_1, \ldots, e_q\}$ such that $q = \sum_{i = 1}^t \lfloor p_i / 4 \rfloor$ and $\Inv(e_i) \cap \Inv(e_j) = \emptyset$. 
		Note that if $q \ge k$, then $X := T - R + \Inv(R)$ is a solution for $\mathcal{I}$: indeed, it is acyclic by \Cref{lemma:inverse-edges-independence}, and by \Cref{obs:profits}, we have
		\begin{equation*}
			\wt(X) = \wt(T) + \Profit(R) = \wt(T) + \sum_{i = 1}^q \Profit(e_i) \ge \wt(T) + q \ge \wt(T) + k
		\end{equation*}
		If, on the other hand, $q < k$, then we can conclude that $|S'| = \sum_{i = 1}^t p_i \le \sum_{i = 1}^t (4\lfloor p_i / 4 \rfloor + 3) = 3t + 4q \le 3 \cdot 4(|A| + k) + 4k = 12|A| + 16k$. 
	\end{claimproof}
	
	Let $\mathcal{I}' := (G', \wt', k)$ be the equivalent instance obtained by exhaustive applications of \Cref{rrule:unprofitable-leaf-contraction} and \Cref{rrule:contraction-between-sinks-and-sources}.
	Recall that all vertices outside both $S$ and $S'$ have one incoming tree edge and one outgoing tree edge.
	Together with \Cref{claim:S-is-bounded} and \Cref{claim:S'-is-bounded}, this implies that either $\mathcal{I}'$ is a yes-instance or all but at most $\mathcal{O}(k + |A|)$ vertices in $\MSF(G', \wt')$ have in-degree one and out-degree one.
	Moreover, \Cref{claim:contracted-instance-properties}, provide a way to transform a solution for $\mathcal{I}'$ into a solution for $\mathcal{I}$. 
\end{proof}


\Cref{lemma:directed-paths-cover} does not provide any upper bound on the size of $G'$ in $\mathcal{I}'$.
Nevertheless, the MaxST of $(G',\wt')$ exhibits a simple structure, since it consists of $\mathcal{O}(k+|A|)$ pivotal vertices connected by pairwise-disjoint directed paths.

\subsection{Greedy-like approach to remaining profits}

Our third building block relies fundamentally on remaining profits and directed paths in MaxST.
Somewhat surprisingly, there is a strategy that provides a limited number of choices of removing a single edge from $E(T)$ iteratively, depending on its current remaining profit, given that $T$ is partitioned into a limited number of directed paths (guaranteed by \Cref{lemma:directed-paths-cover}).

This strategy works even if $A$ is not empty, and it highly relies on that $E(T)$ is ordered in a topological manner.
To describe it, we have to introduce the notion of \emph{proper path covers}.

\begin{definition}[Proper path cover]\label{def:path-cover}
	A \emph{proper path cover} $\mathcal{P}$ of $T = \MaxST(G,\wt)$ is a sequence $P_1, P_2, \ldots, P_t$ of directed paths of $T$, such that
	\begin{itemize}
		\item each edge $e\in E(T)$ belongs to exactly one path in $T$, and
		\item each internal vertex of each $P_i$ has in-degree and out-degree one in $T$, and
		\item if  $v\in V(G)$ is incident to an allowed edge $a\in A$, then $v$ is not an internal vertex in any $P_i$, and
		\item if there is a directed path in $T$, that starts with $s\in E(T)$ and ends with $e\in E(T)$, then $i\le j$, where $i,j\in[t]$ are such that $e\in P_i$ and $s\in P_j$.
	\end{itemize}
\end{definition}

Note that the last point in the definition above is a topological-like ordering of paths required for our strategy.
We also have to introduce the notion of edge orderings that agree with proper path covers. 
Order the edges inside the paths, so the last edge goes first, and the first edge goes last.
We call the resulting total edge ordering the \emph{$\mathcal{P}$-respecting ordering}. 
The formal definition is given below. 

\begin{definition}[$\mathcal{P}$-respecting orderings]
	Let $(G,\wt)$ be a weakly-connected digraph with positive edge weights, and let $\mathcal{P}$ be a proper path cover of $G$.
	We say that an ordering $e_1, e_2, \ldots, e_{n-1}$ of $E(T)$ is \emph{$\mathcal{P}$-respecting} if
	\begin{itemize}
		\item if there is a directed path in $T$ that starts with $e_j$ and ends with $e_i$, then $i\le j$ holds, and
		\item for each $1\le i \le j < n$, it holds that $\ell \le \ell'$, where $e_i\in P_\ell$ and $e_j\in P_{\ell'}$.
	\end{itemize}
\end{definition}

\todo{maybe put the picture illustrating proper path covers and $\mathcal{P}$-respecting orderings}

The definition of $\mathcal{P}$-respecting orderings implies that edges of a single path of $\mathcal{P}$ form consecutive segments in the ordering.
Such a segment starts with the last edge of a path and ends with a first edge of a path.
We are now ready to formulate the lemma itself. 

\begin{lemma}\label{lemma:good-solution-set}
	Let $\mathcal{I}=(G,\wt,k)$ be an instance of $\probMASMSTshort(\mathbb{Z}_{\ge 1})$, and let $\mathcal{P}$ be a proper path cover of $T$.
	Let $e_1, e_2, \ldots, e_{n-1}$ be a $\mathcal{P}$-respecting ordering of $E(T)$.
	If $\mathcal{I}$ is a yes-instance, then there exists a solution $G'$ to $\mathcal{I}$ with $S = E(T) \setminus E(G')$ such that
	 \begin{itemize}
		\item If $e_i\in S$, then $\RemainingProfit(S_{i-1},e_i)\ge -\wt(A)$, and
		\item If $e_i, e_j \in S$ for $i<j$, and $e_i,e_j$ belong to the same path in $\mathcal{P}$, then $\RemainingProfit( S_{j-1}, e_j)> 0$, and
		\item If $e_i\notin S$ and $e_j\in S$ for $i<j$, $S_{j - 1} = S_{i - 1}$ and $e_i, e_j$ belong to the same path in $\mathcal{P}$, then $\RemainingProfit(S_{i-1},e_i)<\RemainingProfit(S_{i-1},e_j)$,
	\end{itemize}
	where $S_0=\emptyset$ and $S_i=S_{i-1}\cup (S\cap \{e_i\})$ for each $i\in [n-1]$.
\end{lemma}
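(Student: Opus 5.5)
We will use an extremal argument: among all solutions, choose the one that optimizes a suitable potential, and show that any violation of the three properties lets us build another solution that is at least as heavy and strictly better in the potential. First we normalize. For a solution $G'$ with $S=E(T)\setminus E(G')$, \Cref{obs:blocked-edges-in-acyclic-subgraph} gives $G'\subseteq T+A-S+\Inv(S)$, and \Cref{lemma:inverse-edges-independence} says $T-S+\Inv(S)$ is acyclic. Hence every $S$ satisfies $\wt(T)+\Profit(S)\ge \wt(G')-\wt(A)$. Moreover, for a fixed $S$, the best solution is $T-S+\Inv(S)+A_S$, where $A_S\subseteq A$ is a maximum-weight subset of allowed edges keeping the graph acyclic. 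We therefore identify solutions with the set $S$ and the value $f(S)=\wt(T)+\Profit(S)+\wt(A_S)$, which is a solution iff $f(S)\ge \wt(T)+k$.

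Among all $S$ with $f(S)\ge\wt(T)+k$, pick one that minimizes $|S|$. Among those, pick the one that is lexicographically smallest with respect to the $\mathcal{P}$-respecting ordering, i.e.\ preferring indices that are as large as possible. The exact tie-break will be fixed by what the exchange arguments below need.

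\emph{Property 1.} Suppose $e_i\in S$ has $\RemainingProfit(S_{i-1},e_i)<-\wt(A)$. Let $S'=S\setminus\{e_i\}$. Removing $e_i$ from $S$ loses at most $\wt(\Inv(e_i)\setminus\Inv(S\setminus\{e_i\}))-\wt(e_i)\le \RemainingProfit(S_{i-1},e_i)$ of profit, since $S_{i-1}\subseteq S\setminus\{e_i\}$. It also changes $\wt(A_S)$ by at most $\wt(A)$. So $f(S')>f(S)$ fails to drop, which contradicts the minimality of $|S|$.

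\emph{Property 2.} Take $e_i,e_j\in S$ on the same path with $i<j$ and $\RemainingProfit(S_{j-1},e_j)\le 0$. Recall that in the ordering, later edges are earlier on the path. The key structural fact to establish is this: if $e_i$ is already cut on the directed path, then cutting $e_j$ on the same path does not help allowed edges. No allowed edge is incident to an internal vertex, and both edges lie on the same segment, so the vertex reachability relations relevant to allowed edges are already broken by $e_i$. Then dropping $e_j$ from $S$ keeps $A_S$ feasible and loses at most $\RemainingProfit(S_{j-1},e_j)\le0$ in profit. Here I would use that inverse edges of $e_j$ that are not in $\Inv(S_{j-1})$ are also not in $\Inv$ of the later edges of $S$, via the topological ordering and \Cref{obs:basic-inverse-edge-property}. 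Justifying this carefully is a subtle point.

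\emph{Property 3.} Take $e_i\notin S$ and $e_j\in S$ on the same path, with no $S$-edges strictly between them in the ordering and $\RemainingProfit(S_{i-1},e_i)\ge\RemainingProfit(S_{i-1},e_j)$. Swap them: $S'=S-e_j+e_i$. Since $e_i,e_j$ are on the same path with no cuts between them, the allowed-edge structure is identical. For profit, the edges of $S$ after $j$ see the inverse sets through the same path, and \Cref{obs:basic-inverse-edge-property} (inverse sets of path edges are intervals along the path) shows that the later remaining-profit terms do not decrease. This gives $f(S')\ge f(S)$ with the same size and a lexicographically preferred $S'$, a contradiction.

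The main obstacle is the exchange bookkeeping in Properties 2 and 3. We must show that changing a single edge of $S$ on one path changes the total $\Profit(S)$ exactly by the claimed remaining-profit difference, despite interactions with later edges of $S$ on other paths. I would handle this by writing $\Profit(S)$ via the telescoping identity of \Cref{obs:rem-profit-props} along the $\mathcal{P}$-respecting order. I would also show that for a blocked edge $b$, the set of tree edges $e$ with $b\in\Inv(e)$ is a directed path, so its earliest element in the order is its last edge. Consequently, each $b\in\Inv(S)$ is charged to the first $S$-edge on its path, and local swaps only shift charges within one path segment.
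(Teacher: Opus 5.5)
\textbf{There is a genuine gap.} Your extremal framing (minimize $|S|$, then break ties by index) is harmless, and your Property~1 argument survives: it only uses $\wt(X)\le\wt(T)+\Profit(S)+\wt(A)$ and the acyclicity of $T-S'+\Inv(S')$. One small fix there: the tie-break must prefer \emph{smaller} indices, since the Property~3 swap moves the removed edge to a smaller index.

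The real problem is the normalization. It is false that, for fixed $S$, an optimal solution has the form $T-S+\Inv(S)+A_S$. A solution may have to omit some edges of $\Inv(S)$ in order to keep a heavy allowed edge, and a yes-instance may have no solution of this form at all. For example, take:
\begin{itemize}
\item tree edges $r_t\to w$ ($t=1,2,3$), $w\to x$, $q\to x$ of weight $10$, and $x\to y$ of weight $2$;
\item blocked edges $y\to r_1$ of weight $1$, and $x\to r_2$, $x\to r_3$ of weight $9$;
\item the allowed edge $r_1\to q$ of weight $5$.
\end{itemize}
The graph has exactly four directed cycles, all through $x$. The unique cheapest set of edges hitting them is $\{w\to x,\,y\to r_1\}$, of weight $11$. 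So for $k=13$ the only solution has $S=\{w\to x\}$ but omits $y\to r_1\in\Inv(S)$. Meanwhile $f(S')\le\wt(T)+12$ for every $S'$, so your extremal family is empty on a yes-instance.

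The same phenomenon defeats your Property~2 step. Your auxiliary claim is false: edges of $\Inv(e_j)\setminus\Inv(S_{j-1})$ can also be inverse to later edges of $S$. Any blocked edge whose tree path runs through some $e_a\in S$ with $a>j$ and then through $e_j$ lies in both sets. Such an edge stays in $\Inv(S\setminus\{e_j\})$, and once $e_j$ is restored it can close a cycle with $A_S$. Concretely:
\begin{itemize}
\item tree edges $w\to x$, $q\to x$, $x\to y$, $y\to z$ of weight $2$;
\item blocked edge $y\to w$ and allowed edge $w\to q$, both of weight $1$;
\item $S=\{y\to z,\,x\to y,\,w\to x\}$.
\end{itemize}
Property~2 fails for $e_i=y\to z$, $e_j=x\to y$. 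The graph $T-S+\Inv(S)+\{w\to q\}$ is acyclic, but after dropping $x\to y$ from $S$ it contains the cycle $x\to y\to w\to q\to x$. The profit side is never the issue, since monotonicity of $\RemainingProfit$ handles it; acyclicity is.

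The paper avoids this by working with an arbitrary solution $X$. For Property~2 it sets $X'=X+e_j-(\Inv(e_j)\setminus\Inv(S_{j-1}))$. This discards even those inverse edges of $e_j$ still covered by later $S$-edges, and gives $\wt(X')\ge\wt(X)-\RemainingProfit(S_{j-1},e_j)\ge\wt(X)$. Acyclicity comes from showing that the vertex set $L$ of $P$, from the head of $e_j$ to the tail of $e_i$, has no out-edges leaving it in $X'$:
\begin{itemize}
\item its only tree out-edge is $e_i\in S$;
\item internal vertices of $P$ carry no allowed edges;
\item a blocked edge leaving $L$ lies in $\Inv(e_j)$ but, by the $\mathcal{P}$-respecting order, not in $\Inv(S_{j-1})$.
\end{itemize}
Property~3 needs the analogous explicit swap: add $\Inv(e_i)\setminus\Inv(e_j)\setminus\Inv(S_{i-1})$ and remove $\Inv(e_j)\setminus\Inv(e_i)\setminus\Inv(S_{i-1})$. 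It uses the same closed-set argument plus a check that the added edges were absent from $X$. Your remark that the allowed-edge structure is unchanged says nothing about these blocked edges. The resulting solutions are generally not of the form $T-S+\Inv(S)+A'$, which is why the argument has to range over arbitrary solutions.
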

\begin{proof}
	Suppose that $\mathcal{I}$ is a yes-instance. 
	We start with an arbitrary solution $X$ and aim to obtain a solution $G'$ that satisfies all the required properties by applying a sequence of local modifications.
	Each modification is of one of three types, corresponding to each property in the lemma statement.
	
	Let $S$ be the set of tree edges that do not belong to the current solution. 
	Initially, $S = E(T) \setminus E(X)$. 
	Modifications of the first type help us eliminate edges in $S$ that are \textit{very unprofitable} (in terms of remaining profit). 
	We make use of the following claim:	
	\begin{claim}\label{claim:drop-all-allowed-edges-instead-of-very-unprofitable}
		If there is an edge $e_i \in S$ such that $\RemainingProfit(S_{i - 1}, e_i) \le -\wt(A)$, then $X' := X + e_i - A - (\Inv(e_i) \setminus \Inv(S_{i - 1}))$ is also a solution for $\mathcal{I}$.
	\end{claim}
	\begin{claimproof}
		Since $X$ is acyclic, we have $X \subseteq T - S + \Inv(S) + A$.
		Let $S' := S \setminus \{e_i\}$. 
		From the definition of $X'$, it follows that $X' \subseteq T - S' + \Inv(S)$. 
		Moreover, notice that if $X'$ contains an inverse edge from $\Inv(e_i)$, then this edge also belongs to $\Inv(S_{i - 1}) \subseteq \Inv(S')$. 
		Consequently, we obtain the stronger inclusion: $X' \subseteq T - S' + \Inv(S')$. 
		Then $X'$ is acyclic by \Cref{lemma:inverse-edges-independence}.
		Recall that $\RemainingProfit(S_{i - 1}, e_i)$ is defined as $\wt(\Inv(e_i) \setminus \Inv(S_{i - 1})) - \wt(e_i)$. Hence,
		\begin{equation*}
			\begin{split}
				\wt(X') &\ge \wt(X) - \wt(A) + (\wt(e_i) - \wt(\Inv(e_i) \setminus \Inv(S_{i - 1}))) \\ &=\wt(X) - \wt(A) - \RemainingProfit(S_{i - 1}, e_i) \ge \wt(X)
			\end{split}
		\end{equation*}
		Thus, $X'$ is indeed a valid solution for $\mathcal{I}$. 
	\end{claimproof}
	
	Next, we analyze the structure of $S$ inside a single path from $\mathcal{P}$. 
	We show that at most one unprofitable edge per path is needed.
	\begin{claim}\label{claim:drop-second-unprofitable-edge-on-path}
		If two edges $e_i$ and $e_j$ from $S$ with $i < j$ lie on the same path $P$ in $\mathcal{P}$ and $\RemainingProfit(S_{j - 1}, e_j) \le 0$, then $X' := X + e_j - (\Inv(e_j) \setminus \Inv(S_{j - 1}))$ is also a solution for $\mathcal{I}$. 
	\end{claim}
	\begin{claimproof}
		Consider an edge $e_j \in S$ for which there exists an edge $e_i \in S$ with $i < j$ from the same path $P$.
		Choose $e_i$ with the maximum possible index. 
		Note that in this case $S_{j - 1} = S_i$: since $e_1, \ldots, e_{n - 1}$ is a $\mathcal{P}$-respecting ordering, the edges of the path $P$ form a contiguous subsegment. 
		Hence, for every index $i'$ with  $i < i' < j$, the edge $e_{i'}$ also lies on the path $P$ and it cannot belong to $S$; otherwise, this would contradict the maximality of $i$.
		
		Suppose that $e_j$ goes from $x$ to $y$, and let $L$ be the set of vertices reachable from $y$ via \textit{tree} edges in $X'$. 
		Denote the starting vertex of $e_i$ by $z$.
		Observe that $L$ is a subpath of $P$ between $y$ and $z$ (inclusive):
		since $e_1, \ldots, e_{n - 1}$ is a $\mathcal{P}$-respecting ordering, all edges on the tree path from $y$ to $z$ have indices between $i + 1$ and $j - 1$, and are therefore present in $X'$, because $S_{j - 1} = S_i$. 
		Consequently, every vertex on this path belongs $L$.
		On the other hand, all these vertices are internal vertices of $P$ and hence each of them has exactly one outgoing edge in $T$. 
		Combining with the fact that $e_i \notin E(X')$, this implies that $L$ contains no other vertices. 
		
		We aim to show that $N_{X'}[L] = L$:
		\begin{itemize}
			\item For every \textit{tree} edge $\ell \to v$ with $\ell \in L$, the ending vertex $v$ must also belong to $L$ by the definition of $L$.
			
			\item Since $L$ consists only of internal vertices of $P$, no vertex in $L$ is incident to any \textit{allowed} edge; this follows directly from the definition of a proper path cover.
			
			\item Consider a \textit{blocked} edge $e = \ell \to v$ in $X'$ such that $\ell \in L$. 
			By \Cref{obs:basic-inverse-edge-property}, $T$ contains a directed path $P'$ from $v$ to $\ell$.
			First, we claim that $e \notin \Inv(S_{j - 1})$.
			Assume the contrary. 
			Then there exists an edge $e_t \in S_{j - 1}$ such that $e \in \Inv(e_t)$. 
			Since $S_{j - 1} = S_i$, we have $t \le i$. 
			By \Cref{obs:basic-inverse-edge-property}, $e_t$ lies on $P'$.
			Thus there is also a directed path in $T$ that starts at $e_t$ and ends in $\ell$.
			Observe that there is only one tree edge $e_s$ that leads to $\ell$ and it lies between $e_j$ and $e_i$ (including $e_j$, but not $e_i$) on the path $P$. 
			Hence, we have that $i < s \le j$.
			Consequently, there must be a path in $T$ that starts at $e_t$ and ends at $e_s$, which contradicts the definition of $\mathcal{P}$-respecting ordering, because $t \le i < s$. 
			
			Next, suppose that $v \notin L$. 
			Recall that every vertex in $L$ has exactly one incoming tree edge (as they are internal vertices of $P$).
			Moreover, $e_j$ is the only tree edge that goes from $V \setminus L$ to $L$. 
			Since $v \in (V \setminus L)$ and $\ell \in L$, $P'$ must contain $e_j$. 
			By \Cref{obs:basic-inverse-edge-property}, it follows that $e \in \Inv(e_j)$. 
			But this is impossible, because together with $e \notin \Inv(S_{j - 1})$, we would have $e \in (\Inv(e_j) \setminus S_{j - 1})$.
			But $E(X') \cap (\Inv(e_j) \setminus \Inv(S_{j - 1})) = \emptyset$, and hence the edge $e$ cannot be in $X'$. 
		\end{itemize}
		
		Using this fact, it is easy to see that $X'$ is acyclic. 
		If $X'$ contained a cycle, this cycle would have to include the edge $e_j = x \to y$, since it is the only edge in $X'$ that is not present in $X$.
		It would then follow that $x$ is reachable from $y$ in $X'$.
		But note that $x \notin L$ and $y \in L$.
		This leads us to the contradiction, because $N_{X'}[L] = L$.
		
		As for weight of $X'$, we have $\wt(X') \ge \wt(X) + \wt(e_j) - \wt(\Inv(e_j) \setminus \Inv(S_{j - 1})) = \wt(X) - \RemainingProfit(S_{j - 1}, e_j) \ge \wt(X)$. Therefore, $X'$ is indeed a valid solution for $\mathcal{I}$. 
	\end{claimproof}
	
	Finally, modifications of the third type allow us to replace an edge from $S$ by another edge outside $S$ that lies on the same path in $\mathcal{P}$, appears earlier in the ordering $e_1, \ldots, e_{n - 1}$, and has a remaining profit that is not worse.
	
	\begin{claim}\label{claim:flip-removed-edge}
		If two edges $e_i \notin S$ and $e_j \in S$ such that $i < j$ and $S_{j - 1} = S_{i - 1}$ belong to the same path in $\mathcal{P}$ and $\RemainingProfit(S_{i - 1}, e_i) \ge \RemainingProfit(S_{i - 1}, e_j)$, then $X' := X - e_i + e_j + (\Inv(e_i) \setminus \Inv(e_j) \setminus \Inv(S_{i - 1})) - (\Inv(e_j) \setminus \Inv(e_i) \setminus \Inv(S_{i - 1}))$ is also a solution for $\mathcal{I}$. 
	\end{claim}
	\begin{claimproof}
		Let $y$ be the ending vertex of $e_j$ and $z$ be the starting vertex of $e_i$. 
		Define $L$ as the set of all vertices reachable from $y$ in $X'$ via tree edges. 
		Since $S_{j - 1} = S_{i - 1}$, the edges $e_{i + 1}, \ldots, e_{j - 1}$ are present in $X'$, while $e_i \notin E(X')$. 
		Consequently, $L$ is exactly the set of vertices on the tree path from $y$ to $z$.
		We start with a part very similar to the proof of \Cref{claim:drop-second-unprofitable-edge-on-path}. Namely, we show that $N_{X'}[L] = L$. 
		
		For the \textit{tree} and \textit{allowed} edges, the situation is identical to the previous claim: tree edges cannot lead outside $L$ by the definition of this set, and $L$ is incident to no allowed edges.
		It remains to consider blocked edges.
		Suppose, for contradiction, that there exists a blocked edge $e = \ell \to v$ in $X'$ with $\ell \in L$ and $v \notin L$. 
		Recall that in this case $e \in \Inv(e_j)$ and $e \notin \Inv(S_{j - 1}) = \Inv(S_i)$. 
		Hence, $e \in (\Inv(e_j) \setminus \Inv(e_i) \setminus \Inv(S_{i - 1}))$, which contradicts the fact that $e$ belongs to $X'$. 
		
		Our next goal is to show that every edge in $E(X') \setminus E(X)$ goes between $L$ and $V \setminus L$. 
		We know that $e_j$ leads from $V \setminus L$ to $L$. 
		Therefore, it suffices to verify this for edges $e = u \to v$ from $(\Inv(e_i) \setminus \Inv(e_j) \setminus \Inv(S_{i - 1}))$.
		Recall that by \Cref{obs:basic-inverse-edge-property}, $T$ contains a directed path $P'$ from $v$ to $u$.
		
		We first show that $u \notin L$. 
		Since $e \in \Inv(e_i)$, by \Cref{obs:basic-inverse-edge-property} $e_i$ belongs to $P'$.
		Hence there also exists a path in $T$ starting at $e_i$ and ending at $u$. 
		Then $z$ is not reachable from $u$ in $T$, as otherwise $T$ would contain a cycle. 
		On the other hand, for every $\ell \in L$ there is a tree path from $\ell$ to $z$.
		Consequently, $u \notin L$. 
		It remains to prove that $v \in L$.
		Assume the contrary. 
		Recall that each vertex in $L$ has exactly one incoming edge in $T$, and $e_j$ is the only edge that goes from $V \setminus L$ to $L$.
		Since $e_i$ belongs to $P'$, its endpoint $z$ also lies on $P'$.
		Given that $v \notin L$ and $z \in L$, $P'$ must also contain the edge $e_j$, which implies $e \in \Inv(e_j)$ (by \Cref{obs:basic-inverse-edge-property}).
		But this contradicts the assumption that $e \in (\Inv(e_i) \setminus \Inv(e_j) \setminus \Inv(S_{i - 1}))$. Therefore, $v \in L$. 
		
		We now prove that $X'$ is acyclic. 
		Assume the contrary. 
		Note that every cycle must contain at least one edge that is not present in $X$. Let $a$ and $b$ be an endpoints of this edge. 
		Then there are paths from $a$ to $b$ and from $b$ to $a$ in $X'$.
		As shown earlier, $a$ and $b$ lie in different parts of the partition $(V, V \setminus L)$.
		This implies that there is a path from $L$ to $V \setminus L$ in $X'$.
		But this is impossible, because $N_{X'}[L] = L$.
		
		It remains to compare the weights of $X$ and $X'$.
		First, we claim that all edges from the set $(\Inv(e_i) \setminus \Inv(e_j) \setminus \Inv(S_{i - 1}))$ are not present in $X$.
		Assume, to the contrary, that some edge $e = u \to v$ belongs to both $E(X)$ and this set. 
		From the discussion above, we have $u \in (V \setminus L)$ and $v \in L$.
		Since $X$ is acyclic, there exists an edge $e_t \in S$ such that $e \in \Inv(e_t)$.
		Recall that $S_j = S_{i - 1} \cup e_j$. 
		Because $e \notin \Inv(e_j) \cup \Inv(S_{i-1})$, we have $e \notin \Inv(S_j)$, which forces $t > j$.
		Now, \Cref{obs:basic-inverse-edge-property} combined with $e \in \Inv(e_i)$ and $e \in \Inv(e_t)$, implies that the directed path in $T$ from $v$ to $u$ contains both $e_i$ and $e_t$.
		Since $v \in L$ and every vertex in $L$ has exactly one outgoing edge in $T$, the prefix of this path must consist of edges $e_s, s_{s - 1}, \ldots, e_i$ for some $s$ with $i \le s < j$. 
		Hence, $e_i$ goes earlier than $e_t$ on this path.
		Consequently, $e_t$ is reachable from $e_i$ in $T$, contradicting the definition of the proper path cover, because $i < j < t$. 
		Therefore, $E(X) \cap (\Inv(e_i) \setminus \Inv(e_j) \setminus \Inv(S_{i - 1})) = \emptyset$. Consequently:
		\begin{equation*}
			\begin{split}
				\wt(X') &\ge \wt(X) - \wt(e_i) + \wt(e_j) \\& + \wt(\Inv(e_i) \setminus \Inv(e_j) \setminus \Inv(S_{i - 1})) \\& -\wt(\Inv(e_j) \setminus \Inv(e_i) \setminus \Inv(S_{i - 1}))
			\end{split}
		\end{equation*} 
		Let $I := \Inv(e_i) \setminus \Inv(S_{i - 1})$ and $J := \Inv(e_j) \setminus \Inv(S_{i - 1})$. 
		From the lemma statement we have
		\begin{equation*}
			0 \le \RemainingProfit(S_{i - 1}, e_i) - \RemainingProfit(S_{i - 1}, e_j) = \left(\wt(I) - \wt(e_i)\right) - \left(\wt(J) - \wt(e_j)\right)
		\end{equation*}
		Define $W$ as $I \cap J$. 
		Since $W \subseteq I$ and $W \subseteq J$, we have $\wt(I) - \wt(J) = \wt(I \setminus W) - \wt(J \setminus W)$. 
		Substituting this into the inequality yields:
		\begin{equation*}
			(\wt(I \setminus W) - \wt(e_i)) - (\wt(J \setminus W) - e_j) \ge 0
		\end{equation*}
		Observe that $I \setminus W = \Inv(e_i) \setminus \Inv(e_j) \setminus \Inv(S_{i - 1})$ and $J \setminus W = \Inv(e_j) \setminus \Inv(e_i) \setminus \Inv(S_{i - 1})$. 
		Plugging this into the inequality for $\wt(X')$, we obtain:
		\begin{equation*}
			\wt(X') \ge \wt(X) + (\wt(I \setminus W) - \wt(e_i)) - (\wt(J \setminus W) - e_j) \ge \wt(X)
		\end{equation*}
		Therefore, $X'$ is also a valid solution for $\mathcal{I}$.
	\end{claimproof}
	
	We exhaustively apply modifications of these three types to $X$ in an arbitrary order. 
	Note that this process is finite, because each modification strictly decreases the value of $\sum_{i = 1}^{n - 1} i \cdot [e_i \in S]$.
	Indeed, modifications of the first two types remove an edge from $S$, and modifications of the third type replace an edge from $e_j \in S$ with $e_i \notin S$, where $i < j$.
	In the end, we obtain a solution for $\mathcal{I}$ that satisfies all the required properties. 
\end{proof}

Let us demonstrate the power of \Cref{lemma:good-solution-set}.
First, it follows that there is a solution to $\mathcal{I}$, where only at most $|\mathcal{P}|$ tree edges with non-positive remaining profits are removed iteratively.
Second, \Cref{obs:win-by-positive-rp} guarantees that $k$ positive remaining profits are enough to obtain a solution.
Therefore, it is enough to consider edge subsets $S\subset E(T)$ of size at most $|\mathcal{P}|+k$.

On the other hand, \Cref{lemma:good-solution-set} gives us a limited number of choices of what edge to remove next.
Indeed, when we have to remove an edge, we know that its remaining profit is in $[-\wt(A), k]$. 
The third point of \Cref{lemma:good-solution-set} guarantees that we can always choose the earliest (with respect to the ordering) edge that belongs to $P_\ell$ with the value of remaining profit equal to $p$, for small integer values of $p$ and $\ell$.

\subsection{Solution when the removed tree edges are fixed}

It comes naturally from \Cref{lemma:good-solution-set} that we have to solve the following special version of the problem.
We have a fixed edge set $S\subset E(T)$ of bounded size, and we have to find a solution $G'$ to $(G,\wt,k)$ that contains all edges in $E(T-S)$ and contains no edge of $S$.
The following lemma shows that we are able to solve this subproblem efficiently. 
%

\begin{lemma}\label{lemma-reduce-to-WFAS}
	There is an algorithm with the running time 
	$2^{(\wt(A))^{\Oh(1)}}\cdot\polyn$ that, given an instance $\mathcal{I}=(G,\wt,k)$ of $\probMASMSTshort(\mathbb{Z}_{\ge 1})$, and given a tree edge set $S\subset E(T)$,
	either
	\begin{itemize}
		\item reports that $\mathcal{I}$ is a no-instance, if $\mathcal{I}$ is a no-instance, or
		\item outputs a solution to $\mathcal{I}$, if there is a solution $G'$ to $\mathcal{I}$ such that $E(T) \setminus E(G') = S$, or
		\item outputs any of the two outcomes above, otherwise.
	\end{itemize}
\end{lemma}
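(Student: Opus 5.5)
We fix $S$ and study $H := T - S + \Inv(S) + A$. By \Cref{obs:blocked-edges-in-acyclic-subgraph}, every acyclic $G'$ with $E(T)\setminus E(G')=S$ is a subgraph of $H$. Conversely, any acyclic subgraph of $H$ is a candidate solution whenever its weight is at least $\wt(T)+k$, since it is acyclic in $G$. So the task is to find a maximum-weight acyclic subgraph of $H$ that contains all of $E(T-S)$ (and automatically avoids $S$), and to compare its weight with $\wt(T)+k$. If this weight reaches the threshold, we output the subgraph. Otherwise no solution with removed set exactly $S$ exists, and we answer ``no-instance''. This answer is permitted by the third bullet of the statement.

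The key structural fact is \Cref{lemma:inverse-edges-independence}: $H_0 := T - S + \Inv(S)$ is acyclic. Every directed cycle of $H$ therefore uses at least one allowed edge. The problem is now a weighted feedback arc set problem on $H$ in which only edges of $\Inv(S)\cup A$ may be deleted, tree edges of $T-S$ being undeletable. Undeletable edges are handled by giving them weight larger than the total weight of all deletable edges, or by working with the arc-weighted version directly. We want a minimum-weight deletion set $F$ that makes $H$ acyclic, and then set $G' := H - F$, which has weight $\wt(H)-\wt(F)$.

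To bound the parameter, note that deleting all of $A$ already gives an acyclic graph, namely $H_0$. Hence the optimum satisfies $\wt(F)\le \wt(A)$. Since weights are integers $\ge 1$, $|F|\le \wt(A)$. We then run the \probWDFAS algorithm of Kim, Kratsch, Pilipczuk and Wahlstr\"{o}m~\cite{KimWFAS}, which finds a feedback arc set of at most $p$ edges and minimum weight among such sets in time $2^{p^{\Oh(1)}}\cdot\polyI$. We invoke it with $p=\wt(A)$. Every optimal $F$ has at most $\wt(A)$ edges, so the minimum-weight set under this cardinality bound is a globally minimum-weight one, and the running time is $2^{(\wt(A))^{\Oh(1)}}\cdot\polyn$ as required.

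The main obstacle is the interface with the WDFAS result, since its exact form (minimizing weight subject to a cardinality budget, or a weight budget) must match our use. If that result only decides whether a set of at most $p$ edges with weight at most $W$ exists, we iterate $W$ over $0,\ldots,\wt(A)$, which is a polynomial number of calls once $\wt(A)$ is bounded. Undeletable tree edges can also be encoded by subdividing them into many parallel paths; the heavy-weight encoding is simpler and keeps the cardinality bound intact. Beyond this, we check that no solution is missed: any solution $G'$ with removed set $S$ yields $F = E(H)\setminus E(G')$, a feasible deletion set of weight at most $\wt(H)-\wt(T)-k$, so the optimum also meets the threshold.
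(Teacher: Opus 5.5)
Your proposal is correct and follows essentially the paper's route. You restrict to $T - S + \Inv(S) + A$, use \Cref{lemma:inverse-edges-independence} to see that deleting $A$ already yields an acyclic graph (so the deletion budget can be taken at most $\wt(A)$), and invoke the \probWDFASshort{} algorithm of \cite{KimWFAS}. The differences are cosmetic: the paper lets tree edges be deletable (any acyclic subgraph of $X$ of sufficient weight is a solution anyway), makes a single call with both budgets equal to $k'=\wt(X)-\wt(T)-k$, and outputs $T-S+\Inv(S)$ directly when $k'\ge\wt(A)$. Your undeletable-edge encoding and your iteration over $W$ are harmless extra steps absorbed in the $2^{(\wt(A))^{\Oh(1)}}$ bound.
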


When solving an instance $(G, \wt, k)$ of $\probMASMSTshort$, we can always look at the problem from a different perspective. 
Instead of finding an acyclic subgraph of $G$ of weight at least $\wt(T) + k$, we can equivalently seek an edge set of weight at most $\wt(G) - \wt(T) - k$ whose removal makes $G$ acyclic. 
This dual problem is called $\probWDFAS$.

Formally, an instance of $\probWDFASshort$ is described by a tuple $(G, \wt, k, W)$, where $G$ is a directed graph with positive integer weights on edges given by the function $\wt: E(G) \to \mathbb{Z}_{\ge 1}$, and our goal is to find a set $S \subseteq E(G)$ of cardinality at most $k$ and weight at most $W$ such that $G - S$ is acyclic. 
In their recent work on flow augmentation in directed graphs \cite{KimWFAS}, Kim, Kratsch, Pilipczuk and Wahlstr\"{o}m have shown that $\probWDFASshort$ admits an $\classFPT$ algorithm when parameterized by $k$.
\begin{proposition}[\cite{KimWFAS}, Theorem $4.3$]\label{proposition:wdfas-fpt}
	$\probWDFASshort$ can be solved in time $2^{\mathcal{O}(k^8 \log k)} n^{\mathcal{O}(1)}$.
\end{proposition}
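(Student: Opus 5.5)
The proposition is quoted from Kim, Kratsch, Pilipczuk and Wahlstr\"{o}m~\cite{KimWFAS}, and the plan below reconstructs their route rather than giving a self-contained argument; in this paper I would simply cite it. The approach is iterative compression, a reduction to a weighted skew multicut problem, and directed flow augmentation to make the weighted cut tractable.

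\emph{Step 1: compression.} Order the vertices as $v_1,\dots,v_n$ and process the prefixes $G_i=G[\{v_1,\dots,v_i\}]$. A solution for $G_{i-1}$, plus all edges incident to $v_i$ that go back into $G_{i-1}$, hits every cycle of $G_i$. To keep this set small, I would first pass to the standard vertex/arc-subdivided form of the problem. In that form a feedback set $Z$ with $|Z|\le k+1$ and a budget of $k$ hitting arcs is available at each step.

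\emph{Step 2: guessing and reduction.} I would guess, in $2^{\Oh(k\log k)}$ ways, two things: which elements of $Z$ survive in the optimal solution $F$, and the order in which the surviving elements appear in a topological order of $G-F$. Split each surviving element $z_j$ of $Z$ into a source $s_j$ and a sink $t_j$; as in the Chen--Liu--Lu--O'Sullivan--Razgon reduction, this produces an equivalent instance of \emph{weighted Skew Edge Multicut}. The task there is to delete at most $k$ edges of total weight at most $W$ so that no path goes from $s_i$ to $t_j$ for $i\le j$. Correctness is the usual argument that the chosen topological order witnesses exactly which pairs must be separated.

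\emph{Step 3: weighted skew multicut.} In the unweighted case a pushing argument yields important-separator branching, but weights break that argument; this step is the main obstacle. I would handle it with directed flow augmentation. Add $s$ (joined to the sources) and $t$ (joined from the sinks) so the instance becomes an $st$-cut problem. The flow-augmentation theorem computes, in time $2^{\Oh(k^4\log k)}n^{\Oh(1)}$ (randomized, derandomizable), a set $A$ of arcs to add. With the stated probability, the unknown solution $F$ then becomes a \emph{minimum} $st$-cut of $G+A$ while still satisfying the skew constraints. Given such an augmentation, compute a maximum flow of value $\lambda\le k$. Each of the $\lambda$ flow paths must contain exactly one edge of $F$, so $F$ becomes a choice of one edge per path. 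The skew requirements translate into monotone 2-variable constraints between positions along the paths; this is a chain-like/bundled cut structure, and a min-weight solution for it is found by one further weighted min-cut computation in an auxiliary graph.

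Multiplying the $2^{\Oh(k\log k)}$ guesses by the augmentation success probability, and paying the additional overhead needed to handle the edges that are bundled under the weight constraint, yields the stated $2^{\Oh(k^8\log k)}n^{\Oh(1)}$ bound. I expect the hardest part to be proving the flow-augmentation guarantee itself: that a random sampling of arcs respecting the ``star'' structure of $F$ turns $F$ into a minimum cut with probability $2^{-\mathrm{poly}(k)}$. I would attempt it by recursion on the cut's closest-set structure, which is essentially how~\cite{KimWFAS} proceed.
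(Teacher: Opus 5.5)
The paper gives no proof of this proposition: it imports it as a black box from \cite{KimWFAS} (Theorem~4.3) and uses it only inside \Cref{lemma-reduce-to-WFAS}. So citing it, as you propose, is exactly the paper's approach.

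Your reconstruction is not needed, and it should not be read as a proof. In Step~3, joining every $s_j$ to $s$ and every $t_j$ to $t$ does not make the skew multicut an $st$-cut, because the solution may leave paths from $s_j$ to $t_i$ with $j>i$. The flow-augmentation guarantee therefore does not apply to $F$ in that graph. You would need an encoding in which the solution genuinely is an $st$-cut, for example a layered graph with one copy per index $i\le\ell$ and nested reachable sets, with each deleted arc bundled across copies. Such a cut has size up to $k(k+1)$, which is presumably where $k^8=(k^2)^4$ comes from. Your own accounting yields only $2^{\Oh(k^4\log k)}$ plus an unexplained overhead.
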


We employ this result in our proof of \Cref{lemma-reduce-to-WFAS}. 
The core idea is that given a fixed set of removed tree edges $S$, we can reduce our problem to a $\probWDFASshort$ on the graph $T - S + \Inv(S) + A$.
Moreover, we can bound $W$ and $k$ by $\wt(A)$ as otherwise the subgraph $T - S + \Inv(S)$ already has a sufficiently large weight.
We now move on to the formal proof of the lemma.
\begin{proof}[Proof of \Cref{lemma-reduce-to-WFAS}]
	Consider a tree edge set $S$. 
	Denote $X := T - S + A + \Inv(S)$ and let $k' := \wt(X) - \wt(T) - k$.
	Note that if there exists a solution $G'$ to $\mathcal{I}$ such that $E(T) \setminus E(G') = S$, then we can remove a set of edges from $X$ of total weight and cardinality at most $k'$ to obtain an acyclic graph: since $G'$ is a solution, it is acyclic, and \Cref{obs:blocked-edges-in-acyclic-subgraph} guarantees that $G' \subseteq X$. 
	Moreover, $\wt(G') \ge \wt(T) + k$. 
	Consequently, $\wt(X) - \wt(G') \le \wt(X) - \wt(T) - k = k'$.
	Additionally, observe that since all weights are at least one, from $\wt(X) - \wt(G') \le k'$ it also follows that $|E(X \setminus G')| \le k'$.
	
	Conversely, if $X'$ is an acyclic subgraph of $X$ obtained by removing edges of total weight at most $k'$, then $X'$ is also a valid solution for $\mathcal{I}$, because $\wt(X') \ge \wt(X) - k' = \wt(X) - (\wt(X) - \wt(T) - k) = \wt(T) + k$.
	
	Therefore, we obtain the algorithm required in the lemma statement as follows: solve the \probWDFASshort problem for the instance $\mathcal{I}' := (X, \wt', k', k')$, where the weight bound and the cardinality bound are both set to $k'$, and $\wt'$ is the restriction of $\wt$ to $X$.
	If $X'$ is a solution for $\mathcal{I}'$, then it is also a valid solution for $\mathcal{I}$, and we can output $X'$.
	Otherwise, we report that $\mathcal{I}$ is a no-instance.
	
	From the discussion above, if $\mathcal{I}$ is a no-instance, we correctly report it as such, because any solution produced by the algorithm would be feasible by construction.
	Moreover, if there exists a solution $G'$ for $\mathcal{I}$ such that $E(T) \setminus E(G') = S$, then there exists a solution for $\mathcal{I'}$, and our algorithm will output it as a solution for $\mathcal{I}$.
	Thus, the described algorithm satisfies all the required properties.
	
	A potential concern is that the value of $k'$ may be very large.
	To address this, we distinguish two cases.
	If $k' \ge \wt(A)$, then we can immediately produce a solution for $\mathcal{I'}$: consider the subgraph $X' := T - S + \Inv(S)$. 
	It is acyclic by \Cref{lemma:inverse-edges-independence} and $\wt(X') \ge \wt(X) - \wt(A) \ge \wt(X) - k'$.
	Otherwise, we have $k' < \wt(A)$. 
	By \Cref{proposition:wdfas-fpt}, in this case the instance $\mathcal{I}'$ can be solved in time
	\begin{equation*}
		2^{(k')^8 \log k'} \cdot \polyn \subseteq 2^{\wt(A)^8 \log \wt(A)} \cdot \polyn
	\end{equation*}
	This completes the proof.
\end{proof}

Note that the algorithm from \Cref{lemma-reduce-to-WFAS} can report false-negatives, if the choice of $S$ is incorrect, but always outputs a solution if the choice of $S$ is correct.
Therefore, we can run the algorithm of \Cref{lemma-reduce-to-WFAS} with many choices of $S$, and we will get a solution if at least one of these choices was correct.


\subsection{Putting all together}


\begin{algorithm}[ht]
  \SetKwFunction{FMain}{extend}
\SetKwProg{Fn}{function}{}{}
\Fn{\FMain{$\mathcal{I}$, $\mathcal{P}=(P_1, P_2,\ldots, P_t)$, $e_1,e_2,\ldots, e_{n-1}$, $S$, $i$}}{
	\If{$|S|\ge k+t$}{
		\Return{\textsc{Yes}}
	}
	\If{\Cref{lemma-reduce-to-WFAS} applied to $\mathcal{I}$ and $S$ returns \textsc{Yes}}{
		\Return{\textsc{Yes}}
		}
	\If{$i\ge n-1$}{
		\Return{\textsc{No}}
	}
	\eIf{$i=0$}
	{$\ell\gets 0$}{$\ell\gets $ integer in $[t]$ such that $e_i\in P_\ell$}
	\If(\tcp*[f]{next edge is in the same path}){$i>0$}{
		\ForEach(\tcp*[f]{it is the earliest edge with rem.\ profit $p$}){$p\in[1,k]$}{
			$j\gets $ smallest integer in $[i+1,n-1]$ with $e_j\in P_\ell$ and $\RemainingProfit(S, e_j)=p$, or $n$\;
			\If{\FMain($\mathcal{I}$, $\mathcal{P}$, $e_1,e_2,\ldots, e_{n-1}$, $S\cup\{e_j\}$, $j$) is \textsc{Yes}}
			{
				\Return{\textsc{Yes}}
			}
		}
	}
	\ForEach(\tcp*[f]{next edge is in another path $P_{\ell'}$}){$\ell'\in[\ell+1,t]$ } {
		\ForEach(\tcp*[f]{ it has remaining profit $p$}){$p\in [-\wt(A), k]$}{
			$j\gets $ smallest integer in $[i+1,n-1]$ with $e_j\in P_{\ell'}$ and $\RemainingProfit(S, e_j)=p$, or $n$\;
			
			\If{\FMain($\mathcal{I}$, $\mathcal{P}$, $e_1,e_2,\ldots, e_{n-1}$, $S\cup\{e_j\}$, $j$) is \textsc{Yes}}{
				\Return{\textsc{Yes}}
			}
		}
	}
	
	\Return{\textsc{No}}\;

}
	\SetKwIF{IfOr}{OrIf}{ElseOr}{if}{or}{or}{}{}			 
	\caption{The recursive subroutine \texttt{extend} of the algorithm of \Cref{thm:main-result}.
	It takes (as an input) an instance $\mathcal{I}$, a proper path cover $\mathcal{P}$ of $T$, and a $\mathcal{P}$-respecting ordering $e_1,\ldots, e_n$ of $E(T)$.
	\texttt{extend}($\mathcal{I}$, $\mathcal{P}$, $e_1,\ldots, e_n$, $\emptyset$, $0$) returns \textsc{Yes} iff $\mathcal{I}$ is a yes-instance.}
	\label{alg:main-recursive}

\end{algorithm}

To prove our first main result, \Cref{thm:main-result}, we pipeline the lemmas into an algorithm that solves $\probMASMSTshort(\mathbb{Z}_{\ge 1})$ in $2^{k^{\Oh(1)}}\cdot\polyn$ running time.
The core internal subroutine of this algorithm works with a proper path cover and outputs a solution in \classFPT-time (see Alg.~\ref{alg:main-recursive}).
For convenience, we restate \Cref{thm:main-result} here right before the proof. 

\thmFPTint*
\begin{proof}
	Given an instance $\mathcal{I}=(G,\wt,k)$, 
	the algorithm first applies \Cref{lemma:directed-paths-cover} to $\mathcal{I}$. 
	If it reports a solution to $\mathcal{I}$, the algorithm outputs it and stops.
	Otherwise, our algorithm obtains an equivalent instance $\mathcal{I}'$.
	\Cref{lemma:directed-paths-cover} guarantees that solving $\mathcal{I}'$ is equivalent to solving $\mathcal{I}$, as the resulting solution to $\mathcal{I}'$ can be transformed into a solution to $\mathcal{I}$.
	Without loss of generality, we put $\mathcal{I}:=\mathcal{I}'$, assuming that the algorithm always transforms the solution as required.
	
	Then, our algorithm identifies $T$, $A$ and $B$ in polynomial time.
	Then, for each $e\in E(T)$ the algorithm evaluates $\Profit(e)$.
	If $\Profit(e)\ge k$, the algorithm reports $T-e+\Inv(e)$ as a correct solution (see \Cref{obs:profits}) and stops.
	Then, the algorithm evaluates $\wt(A)$.
	If $\wt(A)\ge 3k$, the algorithm constructs a solution to $\mathcal{I}$ using \Cref{lemma:allowed-edges}, outputs it and stops.
	
	The algorithm now targets towards an application of \Cref{lemma:good-solution-set}.
	Let $D$ be the set of all vertices $v\in V(T)$ such that in-degree and out-degree of $v$ in $T$ are both exactly one.
	From \Cref{lemma:directed-paths-cover} we know that $|V(T)\setminus D|$ is at most $\Oh(k + |A|) = \Oh(k)$, by $|A|\le \wt(A)< 3k$.
	Let $V_A$ be the set of vertices incident to $A$ in $G$.
	Denote $C:=(V(T)\setminus D)\cup V_A$, $|C|=\Oh(k)$.
	The algorithm evaluates $C$ in polynomial time.
	
	Then, in polynomial time, the algorithm computes a set of directed paths of $T$ with both endpoints in $C$.
	This set is defined uniquely since each vertex in $V(T)\setminus C$ has in-degree and out-degree one in $T$, and has exactly $t=|C|-1$ paths.
	Clearly, this set of paths satisfies the first three points of the \Cref{def:path-cover} of proper path covers.
	The algorithm then arranges (in polynomial time) the paths in an order $P_1, P_2, \ldots, P_t$ according to the last point in \Cref{def:path-cover}.
	The formed sequence is a proper path cover $\mathcal{P}$ of $T$ with $t=\Oh(k)$ paths.
	After that, the algorithm construct a $\mathcal{P}$-respecting ordering $e_1, \ldots, e_{n - 1}$ of $E(T)$. 
	
	We finally move on to the ``heart'' of our algorithm.
	The algorithm performs a recursive branching subroutine \texttt{extend} (see \Cref{alg:main-recursive}) to guess a correct choice of $S$ given by \Cref{lemma:good-solution-set}.
	For the sake of clarity, \texttt{extend} does not return the solution (only reports \textsc{Yes}) in the pseudo-code, but it is straightforward to change it so it returns the solution.
	The algorithm runs \texttt{extend}$(\mathcal{I}, \mathcal{P}, e_1,\ldots, e_{n - 1}, \emptyset, 0)$ as an entry-point to this recursive procedure.
	This finishes the description of the algorithm.
	
	\medskip\noindent\textbf{Correctness.} 
	Note that \texttt{extend} cannot give false-positives, since it returns \textsc{Yes} only if $|S|\ge k+t$ or \Cref{lemma-reduce-to-WFAS} returns a solution (which cannot be false-positive).
	In the first case, $|S|$ should have at least $k$ edges with positive remaining profit (which gives a solution by \Cref{obs:win-by-positive-rp}), because \texttt{extend} takes at most one edge with non-positive $\RemainingProfit$ per each path in $\mathcal{P}$.
	
	Now assume that \texttt{extend} does not return \textsc{Yes} on $\mathcal{I}$.
	By \Cref{lemma:good-solution-set}, there is a set $S$ that satisfies all three properties from the statement, and applying \Cref{lemma-reduce-to-WFAS} to $\mathcal{I}$ and $S$ gives a solution to $\mathcal{I}$.
	\texttt{extend} never returned \textsc{Yes}, that is, \texttt{extend} was never called with this value of $S$.
	But the properties of $S$ imply that, if $e_i,e_j\in S$ are consecutive (with respect to the order $e_1, \ldots, e_{n - 1}$) edges in $S$, then either
	\begin{itemize}
		\item $e_j$ belongs to the same path as $e_i$, its remaining profit is positive, and there is no $k$ with $i<k<j$ such that $\RemainingProfit(S_i,e_k)=\RemainingProfit(S_i,e_j)$, or
		\item $e_j$ belongs to path $P_{\ell'}$ that goes after the path that $e_i$ belongs to, and there is no $k$ such that $e_k\in P_{\ell'}$ and $\RemainingProfit(S_i,e_k)=\RemainingProfit(S_i,e_j)$.
	\end{itemize}
	
	In the first case, \texttt{extend} will consider the correct choice of $p\in [k]$ that leads to this choice of $j$, since there is no other edge inbetween $e_i$ and $e_j$ with this value of remaining profit.
	In the second case, \texttt{extend} will consider the correct choice of both $\ell'\in [\ell+1,t]$ and $p\in [-\wt(A),k]$.
	This will lead to the correct choice of $j$ similarly to the first case.
	Therefore, we can prove by induction that \texttt{extend} will construct $S$ correctly in one of its recursion paths if never returned \textsc{Yes} in process.
	This contradiction finishes the correctness discussion.
	
	\medskip\noindent\textbf{Running time.}
	All parts of our algorithm are polynomial, except for the recursive branching subroutine.
	It invokes \Cref{lemma-reduce-to-WFAS}, which gives $2^{k^{\Oh(1)}}\cdot\polyn$ multiplier in the running time.
	To bound the number of recursive calls, note that the depth of recursion is at most $k+t+\Oh(1)$, that is, at most $\Oh(k)$, 
	while the number of possible recursive calls produced at one call is bounded by $k + (k+\wt(A)+1)\cdot |\mathcal{P}|=\Oh(k^2)$.
	Therefore, the total number of recursive calls is $(k^2)^{\Oh(k)}=2^{k^{\Oh(1)}}$. 
	The total running time is upper bounded with $2^{k^{\Oh(1)}}\cdot\polyn$.
	
	The proof is complete.
\end{proof}

\section{\classXP-algorithm for rational edge weights}\label{sec:xp-rational-weights}
This section is dedicated to the proof of \Cref{thm:main-result-rational}: given a graph $(G, \wt)$ with rational edge weights not less than one, an acyclic subgraph of $G$ of weight at least $\wt(T) + k$ can be found in time $n^{k^{\mathcal{O}(1)}} \cdot |\mathcal{I}|^{\Oh(1)}$.

We highlight that with rational edge weights, the problem becomes non-trivial even when $k = 1$ and $A = \emptyset$.
In this case, we need to check whether a set $S \subseteq E(T)$ with $\Profit(S) \ge 1$ exists.
Every such $S$ must contain an element with positive profit. 
Hence, for integer weights this implies that $\Profit(s) \ge 1$ for some $s \in S$, and we can consider only single-element sets.
In contrast, when edge weights are rational, each $\Profit(s)$ may be arbitrarily close to zero. 
Thus, to reach $\Profit(S) \ge 1$, we may now need to combine many small profits.
Consequently, there are no longer any natural bounds on the size of $S$, which significantly increases the difficulty of the problem.
Nevertheless, an \classXP algorithm must handle the case $k = 1$ in polynomial time.

We overcome this obstacle in \Cref{sec:maximizing-profit}.
Then, in \Cref{sec:comb-with-allowed-edges}, we address the additional challenges arising from the presence of allowed edges in $G$.

\subsection{Maximizing profit under restrictions}\label{sec:maximizing-profit}

A natural way to decide whether some set reaches a target profit is to find a set of maximum profit.
It turns out that if the elements are restricted to not share inverse edges, this can be done efficiently.
This subproblem is a key building block of our $\classXP$ algorithm.

For a digraph $(G,\wt)$, let $H_{G, \wt}$ be the undirected graph on $E(T)$ where each vertex $e$ has weight $\Profit(e)$ and an edge connects $e_1$ and $e_2$ iff $\Inv(e_1) \cap \Inv(e_2) \neq \emptyset$.
Recall that if the sets $\Inv(s)$ for $s \in S$ are pairwise disjoint, then the profit is additive. 
Thus, maximizing the profit of $S$ under this restriction is equivalent to finding the maximum-weight independent set in $H_{G, \wt}$, which is $\classNP$-hard on general graphs. 
Fortunately, $H_{G, \wt}$ has a special structure. 
\begin{lemma}\label{lemma:perfect-graph}
	For each edge-weighted digraph $(G, \wt)$, $H_{G, \wt}$ is a perfect graph. 
\end{lemma}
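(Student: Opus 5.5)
The plan is to go through the strong perfect graph theorem, since the small example below shows $H_{G,\wt}$ need not be chordal. By \Cref{obs:basic-inverse-edge-property}, every blocked edge $b$ determines a directed path $P_b$ of $T$, and $b\in\Inv(e)$ iff $e\in E(P_b)$. Hence $e_1e_2\in E(H_{G,\wt})$ iff some $P_b$ contains both $e_1$ and $e_2$. In other words, $H_{G,\wt}$ is the union over $b\in B$ of the cliques $E(P_b)$.

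Two simple facts drive everything. First, adjacent vertices $e_1,e_2$ are comparable in the reachability order $\le_T$ on $E(T)$, where $e\le_T e'$ iff some directed path of $T$ starts with $e$ and ends with $e'$. Second, a convexity property holds: if $e_1\le_T e\le_T e_2$ and $e_1e_2\in E(H)$, then $e$ is adjacent to both. This is because the directed path between $e_1$ and $e_2$ in a tree is unique, so any $P_b$ containing $e_1,e_2$ contains $e$. The non-chordal example is a centre vertex with two in-edges $a,b$ and two out-edges $c,d$, where the paths $ac, bc, bd, ad$ give an induced $C_4$. This suggests that all obstructions come from the "zig-zag" behaviour of $\le_T$.

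For odd holes, take an induced cycle $x_1x_2\cdots x_q$ with $q\ge5$ and orient each hole edge $x_ix_{i+1}$ by $\le_T$. I will show the orientation must alternate, i.e.\ there is no monotone triple $x_{i-1}<_T x_i<_T x_{i+1}$, which forces $q$ to be even. Suppose such a triple exists. All three edges lie on one directed path $Q$ of $T$, and $x_{i-1},x_{i+1}$ are non-adjacent. The rest of the hole, $x_{i+1},\dots,x_{i-1}$, is a path in $H$ that avoids the neighbourhood of $x_i$ except at its ends. I would follow this path in the tree. Each step stays within a directed path of $T$, and to get from the part of $T$ "after" $x_i$ on $Q$ back to the part "before" $x_i$, some step $x_jx_{j+1}$ must have its covering path $P_b$ pass through $x_i$. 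This uses that $T$ is a tree, so $x_i$ separates $T$ into two components, combined with the convexity fact. Then $x_i$ is adjacent to $x_j$ or $x_{j+1}$, which is a chord. Making this separation argument precise is the step I expect to be the main obstacle. One must handle the branching of $T$ at the endpoints of $x_i$, where a path can re-enter $Q$ from a side branch; the key point should be that a directed path enters the "head side" of $x_i$ only through the unique edge $x_i$.

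For odd antiholes $\overline{C_q}$ with $q\ge7$ (the case $q=5$ is a hole), I would use that every vertex then has many neighbours forming large structures. Consider the vertices $x_1,x_3,x_5$, which are pairwise adjacent, and the non-edges $x_1x_2$, $x_2x_3$. Using comparability along $\le_T$ and the Helly property of subpaths of a tree (pairwise intersecting subpaths share a common edge or vertex), I would show that $x_2$ is forced to be adjacent to one of $x_1$ or $x_3$, which is a contradiction. If this becomes messy, a fallback is to prove directly that $\overline{H}$ has no odd hole of length $\ge5$ by the same separation argument applied in the complement. Once neither odd holes nor odd antiholes occur, $H_{G,\wt}$ is perfect by the strong perfect graph theorem.
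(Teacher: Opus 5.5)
Your odd-hole argument is correct, and it is a cleaner route than the paper's. The paper roots $T$ at an endpoint of $e_1$, classifies tree edges as going up or down, and shows that the first and last edges of a directed path point the same way iff one is an ancestor of the other. You instead get a monotone triple directly from parity.

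The step you flag as the obstacle goes away once you work with the two components $X,Y$ of $T-x_i$ rather than with positions on $Q$. A monotone triple puts $x_{i-1}$ inside $X$ and $x_{i+1}$ inside $Y$. So some consecutive pair $x_j,x_{j+1}$ on the rest of the hole straddles the cut. Any $P_b$ containing both is a connected subgraph of $T$ meeting $X$ and $Y$, so it uses $x_i$, the only tree edge between them. Hence $x_i$ is adjacent to both, and one of them is interior to the path $x_{i+1},\dots,x_{i-1}$, which gives a chord (this is \Cref{claim:H-structure}). Side branches play no role.

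The antihole case, however, is not proved, and the configuration you name cannot yield a contradiction by itself. Suppose $T$ has a directed path with consecutive edges $x_1,x_3,x_5,x_2$, and the blocked edges give $P_b=\{x_1,x_3,x_5\}$ and $P_{b'}=\{x_5,x_2\}$. Then $x_1x_3x_5$ is a triangle, $x_2x_5\in E(H)$, and $x_2$ is non-adjacent to $x_1$ and $x_3$, while comparability and convexity hold. So you must use further antihole adjacencies, and your sketch gives no mechanism for that.

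The Helly property does not supply one. Adjacency in $H$ means two tree edges lie in a single common $P_b$. That is not an intersection property of the paths $P_b$, so Helly says nothing about the non-edges $x_1x_2$, $x_2x_3$. The fallback is circular. The separation argument produces $H$-edges from a straddling $H$-edge, but the edges of a hole in $\overline{H}$ are non-edges of $H$ and give no tree path to cut.

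What is missing is an extremal choice followed by a diamond, and your Facts 1 and 2 are enough for it.
\begin{enumerate}
\item Fix a linear extension $f$ of $\le_T$ and let $x_1$ be the $f$-minimal antihole vertex. Then $x_1<_T y$ for every antihole neighbour $y$ of $x_1$.
\item Since $q\ge 7$, all of $x_3,\dots,x_6$ are neighbours of $x_1$. Let $x_i$ be $f$-minimal among them, and pick $j_1,j_2\in\{3,\dots,6\}$ with $|i-j_1|=1$ and $|i-j_2|,|j_1-j_2|\ge 2$.
\item Then $x_1<_T x_i<_T x_{j_2}$ (adjacency plus the order of $f$), $x_1<_T x_{j_1}$, and $x_{j_1},x_{j_2}$ are comparable.
\item If $x_{j_1}<_T x_{j_2}$, the path $P_b$ witnessing $x_1x_{j_2}$ contains both $x_i$ and $x_{j_1}$.
\item Otherwise $x_1<_T x_i<_T x_{j_2}<_T x_{j_1}$, and the path witnessing $x_1x_{j_1}$ contains $x_i$.
\end{enumerate}
Either way $x_ix_{j_1}\in E(H)$, contradicting that $x_i$ and $x_{j_1}$ are consecutive in the antihole. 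This is how the paper closes the antihole case.
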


In our proof, we rely on the Strong Perfect Graph Theorem due to Chudnovsky, Robertson, Seymour and Thomas \cite{StrongPerfectGraphTheorem}. 
Recall that a hole is an induced cycle of length at least five, and an antihole is a hole in the complement of a graph. 
A hole (or an antihole) is called odd if the corresponding cycle has an odd length.
Perfect graphs are then characterized as follows.
\begin{proposition}[\cite{StrongPerfectGraphTheorem}]\label{prop:strong-perfect-graph-theorem}
	A graph is perfect if and only if it contains neither odd holes nor odd antiholes.
\end{proposition}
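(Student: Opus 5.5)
The easy direction comes first. Perfection is closed under taking induced subgraphs, so it suffices to check that odd holes and odd antiholes are themselves imperfect. For an odd hole $C_{2k+1}$ with $k\ge 2$, the clique number is $2$ while the chromatic number is $3$. For an odd antihole $\overline{C_{2k+1}}$, a clique corresponds to an independent set of $C_{2k+1}$, so the clique number is $k$. A colour class corresponds to a clique of $C_{2k+1}$ and so has size at most $2$, which forces at least $\lceil (2k+1)/2\rceil = k+1$ colours. Hence any graph containing an odd hole or an odd antihole as an induced subgraph is not perfect.

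For the hard direction I would follow the structural route of Chudnovsky, Robertson, Seymour and Thomas. Call a graph \emph{Berge} if it has no odd hole and no odd antihole, and suppose there is a Berge graph that is not perfect. Let $G$ be such a graph with as few vertices as possible; then every proper induced subgraph of $G$ is perfect. The Berge property is closed under complementation. By Lov\'asz's (weak) Perfect Graph Theorem, perfection is also closed under complementation. So $\overline{G}$ is a minimum counterexample as well, and every argument may be run on whichever of $G,\overline{G}$ is more convenient. The plan is then two independent steps.
\begin{enumerate}
\item \emph{Decomposition theorem.} Every Berge graph either belongs to one of five basic classes (bipartite graphs, their complements, line graphs of bipartite graphs, their complements, and double split graphs), or admits one of the following: a proper $2$-join in $G$ or in $\overline{G}$, or a balanced skew partition.
\item \emph{Non-decomposability of minimal counterexamples.} All basic graphs are perfect; this is classical via K\H{o}nig's theorems and a direct check for double split graphs. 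In addition, a minimum imperfect Berge graph admits neither a $2$-join (in $G$ or $\overline{G}$) nor a balanced skew partition.
\end{enumerate}
Together these steps contradict the existence of $G$.

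For the second step I would use standard tools. For $2$-joins, the plan is to glue colourings and cliques of the two smaller ``blocks'' obtained by replacing each side with a short path. These blocks are Berge and are proper induced subgraphs up to parity gadgets, so they are perfect, and their perfection lifts to $G$ (Cornu\'ejols--Cunningham). For balanced skew partitions, the plan is to refine Chv\'atal's Star Cutset Lemma. The balancedness condition is that every induced path with ends in the cutset and interior in a component has even length, and symmetrically in the complement. This condition lets one recombine colourings of the pieces with the parity needed to contradict minimality.

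The main obstacle is the decomposition theorem itself. I would prove it by analysing which special induced configurations $G$ contains, in decreasing order of richness. These are: the line graph of a bipartite subdivision of $K_4$; then a ``prism'' or an ``even wheel''; then a long odd prism; and finally the case where $G$ and $\overline{G}$ contain none of these, where one shows $G$ is bipartite-like or has a skew partition. In each case one would take a maximal such configuration and show that every remaining vertex attaches in a controlled way, unless a $2$-join or a balanced skew partition appears. The delicate part is upgrading an arbitrary skew partition to a \emph{balanced} one. Here I would first try the ``Roussel--Rubio lemma'' on how an anticonnected set of vertices sees an odd path; this lemma is the key parity tool that keeps the case analysis finite.
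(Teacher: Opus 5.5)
The paper gives no proof of this statement; it only cites Chudnovsky--Robertson--Seymour--Thomas, so your proposal follows the same route. Your easy direction (heredity of perfection, plus $\chi=\omega+1$ for $C_{2k+1}$ and $\overline{C_{2k+1}}$) is correct. Your hard direction is a faithful outline of the CRST argument, and, like the paper, it must ultimately rest on citing their decomposition theorem together with the facts that minimum imperfect graphs admit no $2$-join (Cornu\'ejols--Cunningham) and no balanced skew partition (CRST).

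One precision: the three-outcome decomposition theorem you state is Chudnovsky's later strengthening (via trigraphs). The original CRST decomposition also allows a proper homogeneous pair, which a minimum imperfect graph cannot admit by the Chv\'atal--Sbihi theorem.
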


We proceed to the proof of \Cref{lemma:perfect-graph}.
\begin{proof}[Proof of \Cref{lemma:perfect-graph}]
	Let $H := H_{G, \wt}$. 
	Suppose that $H$ contains an odd hole $e_1, e_2, \ldots, e_{2t + 1}$, where $t \ge 2$.
	We aim to find a triangle formed by three vertices of this hole, and thus derive a contradiction.
	Our crucial tool is the following claim about the structure of $H$. 
	\begin{claim}\label{claim:H-structure}
		If $h_1h_2 \in E(H)$, then $T$ contains a directed path between $h_1$ and $h_2$.
		Moreover, if for a tree edge $h_3 \in E(T)$, $h_1$ and $h_2$ lie in different weakly connected components of $T - h_3$, then $\{h_1, h_2, h_3\}$ is a triangle in $H$.
	\end{claim}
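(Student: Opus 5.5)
The plan is to read everything off \Cref{obs:basic-inverse-edge-property}. If $h_1h_2\in E(H)$, pick a blocked edge $b\in\Inv(h_1)\cap\Inv(h_2)$, say from $u$ to $v$. By \Cref{obs:basic-inverse-edge-property}, $T$ contains a directed path $P$ from $v$ to $u$, and both $h_1$ and $h_2$ lie on $P$. Let $h_1$ occur before $h_2$ on $P$ (the other case is symmetric). Then the subpath of $P$ that starts with $h_1$ and ends with $h_2$ is a directed path of $T$ between the two edges. This is the first assertion.

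For the second assertion, let $h_3\in E(T)$ separate $h_1$ from $h_2$ in $T-h_3$. In the tree $T$, the unique undirected path joining $h_1$ and $h_2$ is exactly the subpath of $P$ between them. Since $h_1$ and $h_2$ are in different components of $T-h_3$, this unique connecting path must use $h_3$. So $h_3$ lies on $P$, strictly between $h_1$ and $h_2$.

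Applying \Cref{obs:basic-inverse-edge-property} once more, $b\in\Inv(h_3)$. Hence $b$ lies in all three sets $\Inv(h_1)$, $\Inv(h_2)$, $\Inv(h_3)$, and so $h_1h_3$, $h_2h_3$ and $h_1h_2$ are all edges of $H$. Thus $\{h_1,h_2,h_3\}$ is a triangle.

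The only step needing care is the claim that the separating edge $h_3$ lies on $P$. I would argue it through uniqueness of paths in a tree: the edge sets of $h_1$ and $h_2$, viewed as vertices of the line structure, are connected within $T$ only through the tree path between them, and that path is contained in $P$. I expect this bookkeeping — treating edges as components of $T-h_3$, including the degenerate possibility that $h_3$ equals $h_1$ or $h_2$, which is excluded since both must survive in $T-h_3$ — to be the only mild obstacle.
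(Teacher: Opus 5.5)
Your proposal is correct and follows essentially the same route as the paper's proof. The paper likewise takes a common inverse edge, uses \Cref{obs:basic-inverse-edge-property} to place $h_1,h_2$ on one directed tree path, argues that a separating $h_3$ must lie on the subpath between them, and concludes that the same blocked edge is inverse to $h_3$.
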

	\begin{claimproof}
		If $h_1h_2 \in E(H)$, then $\Inv(h_1) \cap \Inv(h_2) \neq \emptyset$. 
		Consider an edge $e$ in this intersection, and let $u$ and $v$ be the starting and ending vertices of $e$, respectively.
		By \Cref{obs:basic-inverse-edge-property}, there exists a directed path $L$ from $v$ to $u$ in $T$,
		and since $e \in \Inv(h_1) \cap \Inv(h_2)$, both $h_1$ and $h_2$ lie on $L$. 
		Hence the directed path $P$ between $h_1$ and $h_2$ can be obtained as the subpath of $L$ between them.
		
		Now assume that $h_1$ and $h_2$ lie in different weakly connected components of $T - h_3$.
		Since the removal of $h_3$ from $T$ separates $h_1$ and $h_2$, $h_3$ must belong to $P$ (and hence to $L$).
		Thus by \Cref{obs:basic-inverse-edge-property}, $e \in \Inv(h_3)$, and therefore $e \in \Inv(h_1) \cap \Inv(h_2) \cap \Inv(h_3)$. 
		Consequently, $\{h_1, h_2, h_3\}$ is a triangle in $H$.
	\end{claimproof}
	
	Let $X$ and $Y$ be the weakly connected components of $T - e_1$, and let $X' := X + e_1$. 
	Without loss of generality, assume that $e_2$ belongs to $X$.
	We claim that every $e_i$ must lie inside $X'$. 
	Assume the contrary, and let $i$ be the first index such that $e_i$ does not belong to $X'$.
	By construction, $i > 2$, and hence $e_{i - 1}, e_i \neq e_1$.
	Then $e_{i - 1}$ lies in $X$ and $e_i$ lies in $Y$.
	Applying \Cref{claim:H-structure} with $h_1 = e_{i - 1}$, $h_2 = e_i$ and $h_3 = e_1$, we obtain that $\{e_1, e_{i - 1}, e_i\}$ is a triangle in $H$, which is a contradiction.
	Therefore, every $e_i$ belongs to $X'$.
	
	Denote the endpoint of $e_1$ in $Y$ by $v$, and let $T'$ be the underlying graph of $T$, rooted at $v$.
	Consider a tree edge $e \in E(T)$ and suppose that it leads from $a$ to $b$.
	We say that it goes up if $b$ is closer to $v$ than $a$ in $T'$.
	Otherwise, $e$ goes down.
	We also say that $e$ is an ancestor of $e'$ if $e$ lies on the path between $e'$ and $v$ in $T'$.
	Note that $e_1$ is an ancestor of every $e_i$: since $v \in Y$ and each $e_i \neq e_1$ lies in $X$, the path between $e_i$ and $v$ in $T'$ must contain $e_1$.
	
	Suppose that $T$ contains a directed path $P$, and let $e$ and $e'$ be its first and last edges, respectively.
	We claim that $e$ and $e'$ go in the same direction if and only if one is an ancestor of the other in $T'$.
	Let $u$ be the vertex of $P$ that is closest to $v$ in $T'$.
	Observe that $u$ divides $P$ into two parts in the following sense: edges encountered on the path before reaching $u$ go in one direction, and the remaining edges of $P$ go in the opposite direction.
	Therefore, $e$ and $e'$ go in the same direction if and only if $u$ is either the first or the last vertex of $P$.
	Assume first that $u$ is an inner vertex of $P$.
	In this case, $e$ and $e'$ lie in subtrees rooted at different children of $u$ in $T'$, and hence neither edge is an ancestor of the other.
	Conversely, assume that $u$ is an endpoint of $P$. 
	Without loss of generality, let $u$ be the first vertex of the path. 
	Then $e$ is the only edge on $P$ incident to $u$.
	Since $u$ is closer to $v$ than any vertex of $P - e$, $e$ separates $P - e$ and $v$ in $T'$.
	Therefore, $e$ is an ancestor of $e'$.
	
	Applying the statement above to the pair $e_1$ and $e_2$, we obtain that these edges go in the same direction. 
	Indeed, $e_1$ is an ancestor of $e_2$ in $T'$, and since $e_1e_2 \in E(H)$, by \Cref{claim:H-structure} there is a directed path between them in $T$.
	Similarly, $e_1$ and $e_{2t+1}$ go in the same direction, and therefore so do $e_2$ and $e_{2t + 1}$.
	Since indices of $e_2$ and $e_{2t + 1}$ have different parities, the directions of $e_2, e_3, \ldots, e_{2t + 1}$ cannot strictly alternate. In other words, there must be some $i \in [2, 2t]$ such that $e_i$ and $e_{i + 1}$ go in the same direction.
	We know that $e_i e_{i + 1} \in E(H)$, and hence there is a directed path between them in $T$. 
	As discussed above, combined with the same direction of these edges, this implies that one of them is an ancestor of the other.
	We consider two cases.
	
	\textbf{Case $e_i$ is an ancestor of $e_{i + 1}$.}
	By definition, $e_i$ lies on the path from $e_{i + 1}$ to $v$ in $T'$. 
	Since $e_{i + 1} \in X'$, the last edge of this path is $e_1$.
	Notice that $e_1 \neq e_i, e_{i + 1}$ because $i \in [2, 2t]$.
	Consequently, $e_i$ separates $e_{i + 1}$ and $e_1$ in $T'$, and hence they lie in different weakly connected components of $T - e_i$. 
	Let $e_{2t + 2} := e_1$. 
	Then some pair $e_j, e_{j + 1}$ with $j \in [i + 1, 2t + 1]$ also lies in different weakly connected components of $T - e_i$.
	Observe that $e_j e_{j + 1} \in E(H)$.
	Hence by \Cref{claim:H-structure}, $\{e_j, e_{j + 1}, e_i\}$ is a triangle in $H$.
	
	\textbf{Case $e_{i + 1}$ is an ancestor of $e_i$.}
	Similarly to the first case, we obtain that $e_i$ and $e_1$ lie in different weakly connected components of $T - e_{i + 1}$.
	Then there exists a pair of edges $e_j, e_{j+1}$ with the same property, where $j \in [1, i - 1]$.
	Again, together with \Cref{claim:H-structure}, this yields a triangle $\{e_j, e_{j + 1}, e_{i + 1}\}$ in $H$.
	
	In both cases, we found a triangle formed by three vertices of the hole, which leads us to a contradiction.
	Therefore, $H$ does not contain an odd hole.
	
	Next, suppose that $H$ contains an odd antihole $e_1, \ldots, e_{2t + 1}$. 
	Since an antihole of size $5$ is isomorphic to a hole of size $5$, we may assume that $t \ge 3$.
	Define a partial order on tree edges: $e$ precedes $e'$ if $T$ contains a directed path from $e$ to $e'$.
	Because $T$ is acyclic, we can topologically sort the edges with respect to this order.
	For a tree edge $e \in E(T)$, by $f(e)$ we denote its position in the sorted order.
	Without loss of generality, let $f(e_1)$ be the minimum among $f(e_1), f(e_2), \ldots, f(e_{2t + 1})$.
	Also, let $e_i$ be the edge with the minimum $f$ among $e_3, \ldots, e_6$.
	Observe that for each $i \in \{3, 4, 5, 6\}$ there exist $j_1, j_2 \in \{3, 4, 5, 6\}$ such that $|i - j_1| = 1$, $|i - j_2| > 1$ and $|j_1 - j_2| > 1$:
	\begin{itemize}
		\item for $i = 3$, take $j_1 = 4$ and $j_2 = 6$;
		\item for $i = 4$, take $j_1 = 3$ and $j_2 = 6$;
		\item two remaining cases are symmetrical.
	\end{itemize}
	Notice that the given properties imply that $i$, $j_1$ and $j_2$ are distinct. 
	Since $i, j_1, j_2 \in \{3, 4, 5, 6\}$ and $t \ge 3$, we also have $|i - 1|, |j_1 - 1|, |j_2 - 1| > 1$, and $i, j_1, j_2 < 2t + 1$. 
	Recall that the antihole contains an edge between any two non-consecutive vertices.
	Consequently, $e_1 e_i, e_1 e_{j_1}, e_1 e_{j_2}, e_i e_{j_2}, e_{j_1} e_{j_2} \in E(H)$, but $e_i e_{j_1} \notin E(H)$.
	We next make use of the following claim.
	\begin{claim}\label{claim:ordered-triple-in-H}
		Let $h_1, h_2, h_3 \in E(T)$ be a triple of tree edges such that $h_1 h_2, h_2 h_3 \in E(H)$ and $f(h_1) < f(h_2) < f(h_3)$.
		Then there is a directed path in $T$ that starts at $h_1$, ends at $h_3$ and goes through $h_2$. 
	\end{claim}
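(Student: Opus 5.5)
By \Cref{claim:H-structure}, the edges $h_1h_2\in E(H)$ and $h_2h_3\in E(H)$ give us two directed paths in $T$: one between $h_1$ and $h_2$, and one between $h_2$ and $h_3$. The plan is to first fix the direction of each path using $f$, and then glue them at $h_2$.

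\textbf{Step 1: orientation.} Since $f$ is a topological order of the relation ``there is a directed path from $e$ to $e'$'', a directed path between two tree edges must go from the one with smaller $f$ to the one with larger $f$. As $f(h_1)<f(h_2)<f(h_3)$, we get a directed path $Q_1$ that starts with $h_1$ and ends with $h_2$, and a directed path $Q_2$ that starts with $h_2$ and ends with $h_3$.

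\textbf{Step 2: gluing.} Let $h_2$ go from $x$ to $y$. The path $Q_1$ enters $h_2$ through $x$, and $Q_2$ leaves $h_2$ through $y$. So $W := (Q_1 - h_2)\cdot h_2\cdot (Q_2 - h_2)$ is a directed walk in $T$ that starts with $h_1$, passes through $h_2$, and ends with $h_3$.

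\textbf{Step 3: $W$ is a path.} Since $T$ is acyclic, a directed walk cannot revisit a vertex: a repeated vertex would yield a closed directed walk, and hence a directed cycle. So $W$ is a directed path with the required properties.

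This is the only step needing any care, and it reduces to acyclicity of $T$ alone, so I expect no real obstacle. The one potential subtlety is Step 1 in the degenerate case where $h_1$ and $h_3$ share an endpoint with $h_2$. There the paths are just single edges and the same argument still applies.
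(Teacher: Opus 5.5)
Your proposal is correct and is the paper's argument: use \Cref{claim:H-structure} to get the two directed paths, orient each one via the topological order $f$, and concatenate them at $h_2$. Your Step~3, which checks that the concatenated walk is a simple path because $T$ is acyclic, only spells out a point the paper leaves implicit.
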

	\begin{claimproof}
		By \Cref{claim:H-structure}, $T$ contains a directed path $P_1$ between $h_1$ and $h_2$. 
		Since $f(h_1) < f(h_2)$, $P_1$ must go from $h_1$ to $h_2$.
		Similarly, $T$ contains a directed path $P_2$ that goes from $h_2$ to $h_3$.
		Then the concatenation of $P_1$ and $P_2$ yields a needed path. 
	\end{claimproof}
	
	By the choice of $e_1$ and $e_i$, we have $f(e_1) < f(e_i) < f(e_{j_2})$.
	Since $e_1 e_i \in E(H)$ and $e_i e_{j_2} \in E(H)$, \Cref{claim:ordered-triple-in-H} guarantees that there is a directed path $P$ that starts at $e_1$, ends at $e_{j_2}$ and goes through $e_i$. 
	Now consider a triple $e_1$, $e_{j_1}$ and $e_{j_2}$.
	By the choice of $e_1$, we know that $f(e_1) < f(e_{j_1})$ and $f(e_1) < f(e_{j_2})$. 
	Also, $\{e_1, e_{j_1}, e_{j_2}\}$ is a triangle in $H$.
	Therefore, we can apply \Cref{claim:ordered-triple-in-H} to this triple either in the order $e_1, e_{j_1}, e_{j_2}$ or in the order $e_1, e_{j_2}, e_{j_1}$, depending on the result of comparison of $f(e_{j_1})$ and $f(e_{j_2})$.
	As a result, we obtain that $T$ contains a directed path $P'$ that goes through these three edges in one of the orders. 
	We consider two cases.
	
	\textbf{Case $f(e_{j_1}) < f(e_{j_2})$}. 
	In this case $e_{j_1}$ lies between $e_1$ and $e_{j_2}$ on $P'$.
	Since the tree paths $P$ and $P'$ both go from $e_1$ to $e_{j_2}$, we have $P = P'$.
	Recall that $e_i$ also lies on $P$.
	From $e_1 e_{j_2} \in E(H)$ it follows that there exists an edge $e$ in the intersection $\Inv(e_1) \cap \Inv(e_{j_2})$.
	Suppose that it goes from $u$ to $v$.
	By \Cref{obs:basic-inverse-edge-property}, there is a directed path $L$ from $v$ to $u$ in $T$, and both $e_1$ and $e_{j_2}$ lie on $L$. 
	Hence $P \subseteq L$.
	In particular, $e_i, e_{j_1} \in L$. 
	Then by the same observation, we have $e \in \Inv(e_i)$ and $e \in \Inv(e_{j_1})$.
	Therefore, $\Inv(e_i) \cap \Inv(e_{j_1}) \neq \emptyset$ and thus $e_i e_{j_1} \in E(H)$.
	
	\textbf{Case $f(e_{j_1}) > f(e_{j_2})$.} 
	Since $P$ goes from $e_1$ to $e_{j_2}$ and $P'$ contains $e_{j_2}$ on the path from $e_1$ to $e_{j_1}$, we have $P \subseteq P'$.
	Hence $e_i \in P'$ because it lies on $P$. 
	Recall that $e_1 e_{j_1} \in E(H)$.
	Consider an edge $e \in \Inv(e_1) \cap \Inv(e_{j_1})$.
	By the same reasoning as in the first case, we obtain that $e \in \Inv(e_i)$.
	Consequently, $e \in \Inv(e_i) \cap \Inv(e_{j_1})$ and $e_i e_{j_1} \in E(H)$ again. 
	
	In both cases, we have $e_i e_{j_1} \in E(H)$, which contradicts the definition of an antihole because $|i - j_1| = 1$. 
	Therefore, $H$ contains neither odd holes nor odd antiholes. 
	Consequently, \Cref{prop:strong-perfect-graph-theorem} guarantees that $H$ is a perfect graph. 
\end{proof}

As shown by Grötschel, Lovász, and Schrijver in \cite{Grötschel1981}, one can find a maximum-weight independent set in a perfect graph with positive integer vertex weights in polynomial time.
We remark that this result can be extended to rational weights of an arbitrary sign.
\begin{proposition}[\cite{Grötschel1981}, Section $6$]\label{proposition:IS-in-perfect-graph}
	For a perfect graph $G$ with rational vertex weights, a maximum-weight independent set can be found in time polynomial in $|V(G)| + \log C$, where $C$ is the maximum among the numerators and denominators of the weights.
\end{proposition}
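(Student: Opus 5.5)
The plan is to reduce the general rational case to the positive-integer case that Grötschel, Lovász and Schrijver handle directly in \cite{Grötschel1981}. That case is the one quoted just before the statement: a maximum-weight independent set in a perfect graph with positive integer vertex weights can be found in time polynomial in $|V(G)|$ and the encoding length of the weights. The reduction has two steps: first discard vertices of non-positive weight, then clear denominators.

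\emph{Step 1: discard non-positive vertices.} Let $V^+\subseteq V(G)$ be the set of vertices with strictly positive weight, and let $G^+ := G[V^+]$. Any independent set $I$ of $G$ yields an independent set $I\cap V^+$ of $G^+$, since subsets of independent sets are independent. Its weight is at least that of $I$, because we only removed vertices of weight $\le 0$. Conversely, every independent set of $G^+$ is independent in $G$. Hence a maximum-weight independent set of $G^+$ is also one of $G$; if $V^+=\emptyset$ the answer is $\emptyset$. Perfection is preserved, since it is hereditary: an induced subgraph of a perfect graph is perfect by definition, or alternatively by \Cref{prop:strong-perfect-graph-theorem}, as odd holes and odd antiholes of $G[V^+]$ would be induced in $G$.

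\emph{Step 2: clear denominators.} Write each remaining weight as $p_v/q_v$ with $1\le p_v,q_v\le C$. Let $L$ be the least common multiple of the $q_v$, and set $w'(v) := L\cdot p_v/q_v$, which is a positive integer. Multiplying all weights by the positive constant $L$ does not change which independent sets are optimal. The only concern is size: $L\le \prod_v q_v \le C^{|V(G)|}$, so every $w'(v)\le C^{|V(G)|+1}$ and $\log w'(v) = \Oh(|V(G)|\log C)$. Computing $L$ and the $w'(v)$ takes time polynomial in $|V(G)|+\log C$.

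\emph{Step 3: apply the integer algorithm.} Run the positive-integer algorithm of \cite{Grötschel1981} on $(G^+,w')$. Its running time is polynomial in $|V^+|$ and $\max_v \log w'(v)$, hence polynomial in $|V(G)|+\log C$, and its output is returned as the answer for $G$.

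The only step needing real care is the encoding-length bound in Step 2. The ellipsoid-based algorithm is polynomial in the bit length of the weights, not their magnitude, so we must check that clearing denominators inflates bit lengths by at most a factor $|V(G)|$; the bound on $L$ above gives exactly this. Everything else is the hereditary nature of perfection and monotonicity of independent sets under deleting vertices.
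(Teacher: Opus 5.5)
Your proposal is correct and follows the paper's proof essentially step for step: restrict to the positive-weight vertices (perfection is hereditary), scale by the least common multiple of the denominators with an $\Oh(|V(G)|\log C)$ bound on the bit length of the new weights, and invoke the Gr\"otschel--Lov\'asz--Schrijver algorithm for positive integer weights. Your bound $C^{|V(G)|+1}$ on the scaled weights is slightly looser than the paper's $C^{|V(G)|}$, which does not matter.
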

\begin{proof}
	First, let $V' \subseteq V(G)$ be the set of vertices with positive weights. 
	Observe that there always exists a maximum-weight independent set that consists only of vertices from $V'$.
	Additionally, the induced subgraph $G[V']$ is also perfect.
	Therefore, we may restrict our problem to $G[V']$ and assume that all weights are positive.
	
	Scaling all weights by a constant does not change which independent set has greater total weight.
	Thus we can multiply all weights by the least common multiple of their denominators, making them integers.
	Let $W$ be the maximum resulting weight.
	Note that $W$ is bounded by $C^{|V(G)|}$ because, in the worst case, each modified weight equals its initial numerator multiplied by the product of the denominators of the other $|V(G)| - 1$ weights.
	Hence $\log W \le |V(G)| \log C$.
	Therefore, on the modified instance the algorithm for positive integer weights runs in time polynomial in $|V(G)| + \log W$, which remains polynomial in $|V(G)| + \log C$.
\end{proof}

Even when certain tree edges are forbidden from $S$ (set $F$), others are required to be in $S$ (set $R$), and a set of blocked edges $B' \subseteq \Inv(R)$ may be the inverse of multiple edges in $S$, the maximum-profit $S$ can still be found in polynomial time. 

\begin{lemma}\label{lemma:restricted-max-IS}
	There is a polynomial-time algorithm that given a digraph $(G, \wt)$ with rational edge weights, sets $F, R \subseteq E(T)$ and $B' \subseteq \Inv(R)$, finds a set with the maximum $\Profit(S)$ over all $S \subseteq E(T)$ such that:
	\begin{itemize}
		\item $F \cap S = \emptyset$ and $R \subseteq S$;
		\item for every pair of distinct elements $s_1, s_2 \in S$, it holds that $\Inv(s_1) \cap \Inv(s_2) \subseteq B'$. 
	\end{itemize}
\end{lemma}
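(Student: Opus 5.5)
The plan is to reduce the problem to a maximum-weight independent set computation in an induced subgraph of a modified version of $H_{G,\wt}$, and then apply \Cref{proposition:IS-in-perfect-graph}. The natural device is to delete the edges of $B'$ from $G$. This makes intersections inside $B'$ irrelevant. It also charges the profit of the forced edges $R$ once, as a constant offset.

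\textbf{Step 1: remove $B'$ and define the auxiliary graph.} Let $G' := G - B'$. Deleting blocked edges changes neither $T$ nor the tree edges: a spanning tree of $G$ stays maximum in $G'$, and with the lexicographic tie-breaking the same $T$ is chosen, because $B'$ contains no tree edges. For every $e\in E(T)$ we have $\Inv(G',\wt,e)=\Inv(e)\setminus B'$, since blocked/inverse status depends only on $T$ by \Cref{obs:basic-inverse-edge-property}.

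The pairwise condition in the statement says exactly that $S$ is independent in $H' := H_{G',\wt}$. By \Cref{lemma:perfect-graph}, $H'$ is perfect. For any such $S\supseteq R$ we have $B'\subseteq \Inv(R)\subseteq \Inv(S)$. Hence
\[\Profit(S)=\wt(B') + \wt\Bigl(\bigcup_{s\in S}\Inv(s)\setminus B'\Bigr) - \wt(S) = \wt(B') + \sum_{s\in S}\Profit(G',\wt,s),\]
where the last equality uses that the sets $\Inv(s)\setminus B'$ are pairwise disjoint (\Cref{obs:profits}). So $\Profit(S)$ equals a constant plus the $H'$-weight of $S$, with vertex weights $\Profit(G',\wt,\cdot)$.

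\textbf{Step 2: impose $F$ and $R$.} If $R$ is not independent in $H'$, or if $R\cap F\neq\emptyset$, the feasible family is empty, and we report this. Otherwise, let $N$ be the set of vertices of $H'$ outside $R$ that are adjacent to some vertex of $R$. Consider the induced subgraph $H'' := H' - F - R - N$. Feasible sets $S$ correspond bijectively to $R\cup I$, where $I$ ranges over independent sets of $H''$, and the objective becomes $\wt(B') + \sum_{r\in R}\Profit(G',\wt,r) + \sum_{s\in I}\Profit(G',\wt,s)$.

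Induced subgraphs of perfect graphs are perfect, so \Cref{proposition:IS-in-perfect-graph} yields a maximum-weight $I$ in time polynomial in the input size, even with rational weights of arbitrary sign. Building $T$, the sets $\Inv(\cdot)$ and $H'$ also takes polynomial time.

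\textbf{Main obstacle.} The delicate point is the claim in Step 1 that removing $B'$ leaves $T$, and therefore the inverse-edge structure, intact. It is the tie-breaking index ordering that could in principle change which tree is chosen, so this must be checked carefully. If it becomes awkward, I will avoid modifying $G$ altogether. Instead I will define the graph $H'$ directly on $E(T)$, with $e_1e_2$ an edge iff $(\Inv(e_1)\cap\Inv(e_2))\setminus B'\neq\emptyset$, and with weights $\wt(\Inv(e)\setminus B')-\wt(e)$. I will then redo the proof of \Cref{lemma:perfect-graph} for this graph. That proof only uses witnessing inverse edges together with \Cref{obs:basic-inverse-edge-property}, and it goes through verbatim when witnesses are restricted to $B\setminus B'$.
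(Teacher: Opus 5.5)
Your proposal is correct and is essentially the paper's proof. In both, you delete $B'$ (keeping $T$ and replacing each $\Inv(e)$ by $\Inv(e)\setminus B'$), observe that feasibility means $R\cup I$ is independent in the perfect graph $H_{G-B',\wt}$, discard $F$, $R$ and the neighbours of $R$, and solve maximum-weight independent set via Gr\"{o}tschel--Lov\'{a}sz--Schrijver. The paper handles the neighbours of $R$ by enlarging $F$ to $F'$ and rejecting if $F'\cap R\neq\emptyset$, which amounts to your two checks. Your identity $\Profit(S)=\wt(B')+\sum_{s\in S}\Profit(G-B',\wt,s)$ is equivalent to the paper's decomposition $\Profit(G,\wt,R)+\Profit(G',\wt',I)$, and your own Step~1 argument already settles the tie-breaking worry, so the fallback is unnecessary.
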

\begin{proof}
	Consider a tree edge $e \in E(T)$ and suppose there exists $r \in R$ with $r \neq e$ such that $\Inv(e) \cap \Inv(r) \not\subseteq B'$.
	Then $e$ cannot belong to any $S$ because given $R \subseteq S$, including $e$ in $S$ would violate the last constraint.
	Therefore, we can add all such $e$ to $F$ without changing the set of feasible solutions. 
	Denote the extended set of forbidden edges by $F'$.
	If $F' \cap R \neq \emptyset$, then the first two constraints cannot be satisfied together, and hence there are no feasible sets $S$. 
	From now on, we assume that $F'$ and $R$ are disjoint.
	
	Consider the graph $G' := G - B'$, and let $\wt'$ be the restriction of $\wt$ to $E(G')$. 
	Since $B'$ consists only of \textit{blocked} edges, we know that $E(T) \subseteq E(G')$ and therefore $\MSF(G', \wt') = T$.
	Thus from the definition of inverse edges, we have $\Inv(G', \wt', e) = \Inv(G, \wt, e) \setminus B'$ for each $e$. 
	Consequently, there is an edge between $e_1$ and $e_2$ in $H_{G', \wt'}$ if and only if $\Inv(G, \wt, e_1) \cap \Inv(G, \wt, e_2) \not\subseteq B'$.
	Let $E' := E(T) \setminus (F' \cup R)$ and let $I$ be the maximum-weight independent set in $H_{G', \wt'}[E']$.  
	We aim to show that $S := I \sqcup R$ satisfies all the required properties and maximizes the profit.
	
	First, by construction of $S$, it holds that $F' \cap S = \emptyset$ and $R \subseteq S$.
	Suppose that for some distinct $s_1, s_2 \in S$, we have $\Inv(G, \wt, s_1) \cap \Inv(G, \wt, s_2) \not\subseteq B'$. 
	Then there is an edge in $H_{G', \wt'}$ between $s_1$ and $s_2$.
	Since $I$ is an independent set in $H_{G', \wt'}$, it cannot contain both $s_1$ and $s_2$.
	Without loss of generality, assume that $e_1 \notin I$ and thus $e_1 \in R$.
	By the definition of $F'$, from $e_1 \in R$ and $\Inv(G, \wt, e_1) \cap \Inv(G, \wt, e_2) \not\subseteq B'$ it follows that $e_2 \in F'$, which is a contradiction because $S \cap F' = \emptyset$.
	Therefore, $S$ satisfies all three constraints.
	
	Next, we establish a connection between $\Profit(G, \wt, S)$ and the weight of $I$ in $H_{G', \wt'}[E']$.
	By definition, we have $\Profit(G, \wt, S) = \wt(\Inv(G, \wt, S)) - \wt(S)$.
	We express $\Inv(G, \wt, S)$ as $\Inv(G, \wt, R) \sqcup (\Inv(G, \wt, I) \setminus \Inv(G, \wt, R))$ and $S$ as $R \sqcup I$.
	Then we have
	\begin{equation*}
		\Profit(G, \wt, S) = (\wt(\Inv(G, \wt, R)) - \wt(R)) + (\wt(\Inv(G, \wt, I) \setminus \Inv(G, \wt, R)) - \wt(I))
	\end{equation*}
	Since $I \cap F' = \emptyset$, $\Inv(G, \wt, e)$ and $\Inv(G, \wt, r)$ with $e \in I$ and $r \in R$ can intersect only by the elements of $B'$.
	Therefore, $\Inv(G, \wt, I) \setminus \Inv(G, \wt, R) = \Inv(G, \wt, I) \setminus B' = \Inv(G', \wt', I)$.
	Plugging this into the equation above, we obtain $\Profit(G, \wt, S) = \Profit(G, \wt, R) + \Profit(G', \wt', I)$.
	Moreover, since $I$ is an independent set in $H_{G', \wt'}$, its elements do not share inverse edges in $G'$. 
	Thus, by the third point of \Cref{obs:profits}, we have 
	\begin{equation*}
		\Profit(G, \wt, S) = \Profit(G, \wt, R) + \sum_{e \in I} \Profit(G', \wt', e)
	\end{equation*}
	Notice that sum of profits in $G'$ over all elements of $I$ is exactly the weight of $I$ in $H_{G', \wt'}[E']$.
	
	Now, assume that there is another set $S' \subseteq E(T)$ that satisfies all the required constraints, and has a larger profit in $G$ than $S$. 
	We know that $S' \cap F' = \emptyset$ and $R \subseteq S'$.
	Let $I' := S' \setminus R$. 
	Observe that $I' \subseteq E'$.
	The third constraint guarantees that for each distinct $e_1, e_2 \in I'$, we have $(\Inv(G, \wt, e_1) \cap \Inv(G, \wt, e_2)) \subseteq B'$. 
	Consequently, $I'$ is an independent set in $H_{G', \wt'}[E']$.
	Then for analogous reasons we have that 
	\begin{equation*}
		\Profit(G, \wt, S') = \Profit(G, \wt, R) + \sum_{e \in I'} \Profit(G', \wt', e)
	\end{equation*}
	Combining this with $\Profit(G, \wt, S') > \Profit(G, \wt, S)$, we obtain that 
	\begin{equation*}
		\sum_{e \in I'} \Profit(G', \wt', e) > \sum_{e \in I} \Profit(G', \wt', e)
	\end{equation*}
	In other words, the weight of $I'$ in $H_{G', \wt'}[E']$ exceeds the weight of $I$, contradicting the maximality of $I$. 
	Therefore, $S$ maximizes the profit.
	
	Since $H_{G', \wt'}$ is a perfect graph by \Cref{lemma:perfect-graph}, its induced subgraph $H_{G', \wt'}[E']$ is also perfect. Therefore, by \Cref{proposition:IS-in-perfect-graph}, the set $I$ (and hence $S$) can be found in polynomial time. 
\end{proof}

Observe that no single triple $(F, R ,B')$ allows $S$ to have an arbitrary structure. 
Intuitively, if $B'$ is small, we forbid the elements of $S$ from having strongly overlapping sets of inverse edges. 
Conversely, if $B'$ is large, then each of its elements has a corresponding edge in $R \subseteq S$, and hence any feasible $S$ must be fairly large as well. 
Therefore, this lemma alone is not sufficient to solve even instances with $A = \emptyset$.
The following result completes the picture. 
\begin{lemma}\label{lemma:bounded-weight-of-B'}
	If $\mathcal{I} = (G, \wt, k)$ is a yes-instance of $\probMASMSTshort(\mathbb{Q}_{\ge 1})$, then there exists a solution $X$ to $\mathcal{I}$ with $S = E(T) \setminus E(X)$ such that the weight of 
	$$B' = \bigcup\limits_{\substack{s_1, s_2 \in S \\ s_1 \neq s_2}} (\Inv(s_1) \cap \Inv(s_2))$$
	is at most $(\wt(A) + k)^2$.
	Additionally, if $\mathcal{I}$ admits a solution of the form $T - S + \Inv(S) + A'$ for some $S \subseteq E(T)$ and $A' \subseteq A$, then $X$ can be required to have the same form. 
\end{lemma}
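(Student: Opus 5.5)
I would use an extremal choice of solution and argue by exchange, much as in the proof of \Cref{lemma:good-solution-set}. Among all solutions $X$ to $\mathcal{I}$, I would pick one whose removed set $S = E(T)\setminus E(X)$ has minimum cardinality. Within that, I would prefer solutions of the form $T - S + \Inv(S) + A'$ whenever such solutions exist, which takes care of the second assertion.

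\textbf{Step 1: normalise $X$.} By \Cref{obs:blocked-edges-in-acyclic-subgraph} we have $X \subseteq T - S + \Inv(S) + A$. Replacing $X$ by a maximum-weight acyclic subgraph of this graph that contains $T-S$ does not change $S$. By \Cref{lemma:inverse-edges-independence}, $T - S + \Inv(S)$ is itself acyclic. Hence the blocked edges of $\Inv(S)$ that we lose are worth at most $\wt(A)$ in total, that is, $\wt(X) \ge \wt(T) + \Profit(S)$.

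\textbf{Step 2: local optimality of each $s\in S$.} For each $s \in S$, put $s$ back into the tree and delete $A$ together with the edges of $\Inv(s)\setminus\Inv(S\setminus\{s\})$, exactly as in \Cref{claim:drop-all-allowed-edges-instead-of-very-unprofitable}. The result is acyclic by \Cref{lemma:inverse-edges-independence} and has a strictly smaller removed set. By minimality it must fall below $\wt(T)+k$. This should give, for every $s\in S$,
\[
\wt\bigl(\Inv(s)\setminus \Inv(S\setminus\{s\})\bigr) - \wt(s) \;>\; \wt(X) - \wt(T) - k - \wt(A).
\]
Combined with weights being at least one and with the single-edge profit bound $\Profit(s)\le k$ (which we may assume), I expect this to show that $|S| \le \wt(A)+k$. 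Writing $\Profit(S) = \sum_s \bigl(\wt(\Inv(s)\setminus\Inv(S\setminus\{s\})) - \wt(s)\bigr) + \wt(B'_{\mathrm{shared}})$, where $B'_{\mathrm{shared}}$ is the shared part, the shared part is counted once while every $s$ contributes at least one to $\wt(S)$.

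\textbf{Step 3: bound the shared part per pair.} I would bound $\wt(\Inv(s_1)\cap\Inv(s_2))$ by $\wt(A)+k$ for each pair. By \Cref{claim:H-structure}, a nonempty intersection forces $s_1,s_2$ onto a common directed path of $T$, and every shared blocked edge spans the whole segment between them. If this shared weight exceeded $\wt(A)+k$, I would try the exchange that drops one of $s_1,s_2$ from $S$, say the later one in topological order. This loses only that edge's uniquely owned inverse edges plus at most $\wt(A)$, and I would compare it against the minimality in Step 2, in the spirit of \Cref{claim:drop-second-unprofitable-edge-on-path}. Summing over the at most $\binom{|S|}{2}$ pairs then gives $\wt(B')\le(\wt(A)+k)^2$.

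\textbf{Main obstacle.} The hard part is Steps 2 and 3 together. Dropping an edge from $S$ can destroy exactly the uniquely owned inverse edges, and it is not obvious that these are small; the individual $\wt(s)$ are unbounded rationals, so no counting on tree-edge weights alone works. If the pairwise exchange fails, my fallback is to order $S$ topologically and use remaining profits (\Cref{obs:rem-profit-props}). Each element after the first on a common path should then have remaining profit exceeding $-\wt(A)$, and the total $\sum \RemainingProfit = \Profit(S)$ should be at most $\wt(X)-\wt(T)+\wt(A)$. Together these should bound how much shared weight can accumulate.
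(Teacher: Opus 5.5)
\textbf{Gap: the bound $|S|\le \wt(A)+k$ in Step~2 is false.} For rational weights there is no bound on $|S|$ for a minimum-$|S|$ solution, and the paper stresses exactly this. A counterexample: take $A=\emptyset$ and $k=1$. Build $N$ gadgets joined by heavy tree edges. In gadget $i$, a tree edge $e_i$ of weight $2$ lies, together with heavy tree edges only, on the fundamental cycles of exactly two blocked edges, of weights $1$ and $1+\frac{1}{N}$. Then $\Profit(e_i)=\frac{1}{N}$, the sets $\Inv(e_i)$ are pairwise disjoint, and heavy tree edges have negative profit. Since $\Profit(S)\le\sum_{s\in S}\Profit(s)$, every solution removes at least $N$ tree edges.

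What Step~2 does give correctly is a per-element bound. Because $\Inv(s)\cap\Inv(S\setminus\{s\})=\Inv(s)\cap B'$, minimality yields $\wt(\Inv(s)\cap B')<\Profit(s)+\wt(A)<\wt(A)+k$ for every $s\in S$. Summing this over all of $S$ gives nothing when $|S|$ is unbounded. Step~3 also fails arithmetically: even if $|S|\le\wt(A)+k$ held, a per-pair bound of $\wt(A)+k$ summed over $\binom{|S|}{2}$ pairs gives a cubic bound, not $(\wt(A)+k)^2$. The remaining-profit fallback has the same defect. Knowing that each of $|S|$ remaining profits exceeds $-\wt(A)$ says nothing about $\wt(B')$ when $|S|$ is large.

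\textbf{Missing idea: sum only over a small part of $S$.} The paper takes a minimum-size $C\subseteq S$ with $B'\subseteq\Inv(C)$. Every element of $B'$ is inverse to at least two edges of $S$, so $C\neq S$ and $I:=S\setminus C$ is nonempty.
\begin{itemize}
\item If $|C|<\wt(A)+k$, your per-element bound applied to $C$ alone finishes. The paper instead averages to find one $c$ with $\wt(\Inv(c)\cap B')\ge\wt(A)+k$ and puts $c$ back into the tree.
\item If $|C|\ge\wt(A)+k$, a different exchange is needed. By minimality of $C$, each $c\in C$ owns a private $b_c\in B'$ that no other element of $C$ covers. So some element of $I$ covers $b_c$, giving $|\Inv(C)\cap\Inv(I)|\ge|C|\ge\wt(A)+k$. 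The upper bound $\wt(X)\le\wt(T)+\Profit(S)+\wt(A)$ (not the lower bound from your Step~1) gives $\Profit(S)\ge k-\wt(A)$. Hence $\Profit(C)+\Profit(I)=\Profit(S)+\wt(\Inv(C)\cap\Inv(I))\ge 2k$. So $T-C+\Inv(C)$ or $T-I+\Inv(I)$ is a solution with fewer removed tree edges, a contradiction.
\end{itemize}
This split of $S$ into two nonempty parts, rather than reinserting a single edge, is what your plan lacks.

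\textbf{Smaller issue: the second assertion.} You should minimize $|S|$ only over solutions of the required form whenever such solutions exist. All the exchanges produce solutions $T-S'+\Inv(S')$, so the argument still goes through. If you first minimize over all solutions and only then prefer the form, you may end up with no solution of that form.
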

\begin{proof}
	If $\mathcal{I}$ does not admit a solution of the required form, let $X$ be an arbitrary solution that minimizes $|S|$. 
	Otherwise, let $X$ be a solution of the required form with the minimum $|S|$. 
	Since $X$ is acyclic, by \Cref{obs:blocked-edges-in-acyclic-subgraph} we have that $X \subseteq T - S + \Inv(S) + A$.
	Consequently, $\wt(X) \le \wt(T) + \Profit(S) + \wt(A)$.
	Assume that $\wt(B') > (\wt(A) + k)^2$.
	We aim to derive a contradiction by constructing a solution of the form $T - S' + \Inv(S')$ with $|S'| < |S|$. 
	
	Consider a subset $C \subseteq S$ of minimum size such that $B' \subseteq \Inv(C)$. 
	Since $\wt(B') > (\wt(A) + k)^2 \ge 0$, $B'$ is not empty, and hence we have $|C| \ge 1$.
	Additionally, by definition, every element of $B'$ is inverse to at least two edges from $S$.
	Hence for each $s \in S$, we have $B' \subseteq \Inv(S \setminus \{s\})$.
	Therefore, $|C| < |S|$. 
	Let $I := S \setminus C$.
	We now consider two cases.
	
	Suppose that $|C| < \wt(A) + k$. 
	Since $B' \subseteq \Inv(C)$, we have $B' = \bigcup_{c \in C} (\Inv(c) \cap B')$ which implies that 
	\begin{equation*}
		\sum_{c \in C} \wt(\Inv(c) \cap B') \ge \wt(B') > (\wt(A) + k)^2
	\end{equation*}
	Let $c \in C$ be an element with the maximum value of $\wt(\Inv(c) \cap B')$. 
	This weight is at least 
	\begin{equation*}
		\frac{\sum_{c \in C} \wt(\Inv(c) \cap B')}{|C|} \ge \frac{(\wt(A) + k)^2}{\wt(A) + k} = \wt(A) + k
	\end{equation*}
	Let $S' := S \setminus \{c\}$.
	As shown above, $B' \subseteq \Inv(S')$.
	Consequently, $\Inv(c) \cap B' \subseteq \Inv(c) \cap \Inv(S')$. 
	Since $B'$ is not empty, $|S| \ge 2$, and thus the subgraph $T - c + \Inv(c)$ is not a feasible solution of $\mathcal{I}$. 
	Because \Cref{lemma:inverse-edges-independence} guarantees its acyclicity, we have that $\Profit(c) < k$.
	It follows that
	\begin{equation*}
		\begin{split}
			\RemainingProfit(S', c) &= \wt(\Inv(c) \setminus \Inv(S')) - \wt(c) \\ &= (\wt(\Inv(c)) - \wt(c)) - \wt(\Inv(c) \cap \Inv(S')) \\ &\le (\wt(\Inv(c)) - \wt(c)) - \wt(\Inv(c) \cap B') \\ &\le \Profit(c) - (\wt(A) + k) \\ &\le -\wt(A)
		\end{split}
	\end{equation*}
	Then $X' := T - S' + \Inv(S')$ is also a solution of $\mathcal{I}$.
	Indeed, by \Cref{lemma:inverse-edges-independence} it is acyclic. 
	Moreover, 
	\begin{equation*}
		\begin{split}
			\wt(X') &= \wt(T) + \Profit(S') = \wt(T) + (\Profit(S') + \RemainingProfit(S', s)) - \RemainingProfit(S', s) \\ &= \wt(T) + \Profit(S) - \RemainingProfit(S', s) \\ &= (\wt(T) + \Profit(S) + \wt(A)) - (\RemainingProfit(S', s) + \wt(A)) \\ &\ge \wt(X) - (-\wt(A) + \wt(A)) \\ &= \wt(X) \ge \wt(T) + k
		\end{split}
	\end{equation*}
	But this contradicts the choice of $X$ because $|S'| < |S|$.
	
	Now suppose that $|C| \ge \wt(A) + k$.
	We first show that $|\Inv(I) \cap \Inv(C)| \ge |C|$.
	Assume the contrary.
	We order the elements of $C$ arbitrarily, and then for each $e \in \Inv(I) \cap \Inv(C)$ we mark the first element $c \in C$ in this order such that $e \in \Inv(c)$.
	Since $|\Inv(I) \cap \Inv(C)| < |C|$, at least one $c \in C$ remains unmarked. 
	Define $C'$ as $C \setminus \{c\}$.
	Because $|C'| < |C|$, from the minimality of $C$ it follows that there exists $b \in B'$ such that $b \notin \Inv(C')$. 
	On the other hand, $b \in \Inv(C)$.
	Hence $c$ is the only element in $C$ with $b \in \Inv(c)$.
	Recall that every element of $B'$ is inverse to at least two edges in $S$.
	Consequently, there must be $i \in I$ with $b \in \Inv(i)$, and therefore $b \in \Inv(I) \cap \Inv(C)$.
	But then $c$ would have been marked when processing $b$, because no other element of $C$ contains $b$ in its inverse set. 
	This contradicts the choice of $c$.
	Therefore, $|\Inv(I) \cap \Inv(C)| \ge |C|$. 
	
	Since $X$ is a feasible solution of $\mathcal{I}$, we have $\wt(X) \ge \wt(T) + k$. 
	Additionally, recall that $\wt(X) \le \wt(T) + \Profit(S) + \wt(A)$, and therefore $\Profit(S) \ge k - \wt(A)$. 
	On the other hand, observe that
	\begin{equation*}
		\begin{split}
			\Profit(S) &= \wt(\Inv(S)) - \wt(S) \\ &= (\wt(\Inv(C)) - \wt(C)) + (\wt(\Inv(I) \setminus \Inv(C)) - \wt(I)) \\ &= \Profit(C) + (\wt(\Inv(I)) - \wt(I)) - \wt(\Inv(I) \cap \Inv(C)) \\ &=
			\Profit(C) + \Profit(I) - \wt(\Inv(I) \cap \Inv(C))
		\end{split}
	\end{equation*}
	Since $|\Inv(I) \cap \Inv(C)| \ge |C| \ge \wt(A) + k$ and all weights are at least one, we have 
	\begin{equation*}
		\Profit(C) + \Profit(I) = \Profit(S) + \wt(\Inv(I) \cap \Inv(C)) \ge (k - \wt(A)) + (\wt(A) + k) = 2k	
	\end{equation*}
	This implies that at least one subgraph of $T - C + \Inv(C)$ and $T - I + \Inv(I)$ is a solution of $\mathcal{I}$. 
	Indeed, \Cref{lemma:inverse-edges-independence} guarantees that both of them are acyclic. 
	Moreover, $\max\{\Profit(C), \Profit(I)\} \ge 2k / 2 = k$. 
	This again contradicts the choice of $X$, since $1 \le |C| < |S|$ implies that both $C$ and $I$ are strictly smaller than $S$. 
	
	In both cases, we derive a contradiction. 
	Consequently, $\wt(B') \le (\wt(A) + k)^2$.
	This completes the proof. 
\end{proof}

In combination, \Cref{lemma:restricted-max-IS} and \Cref{lemma:bounded-weight-of-B'} supply us with a powerful tool. 
For example, an instance $(G, \wt, k)$ with $A = \emptyset$ can now be solved in time $n^{\Oh(k^2)}$ as follows.
Enumerate all $B' \subseteq B$ of weight at most $(\wt(A) + k)^2 = k^2$. 
For each $B'$, consider all sets $R$ formed by picking, for every $b \in B'$, one tree edge that breaks the cycle in $T + b$.
For every such pair $(B', R)$, apply the algorithm from \Cref{lemma:restricted-max-IS} with the given $B'$, $R$, and $F = \emptyset$, and obtain a maximum-profit set $S$ under these restrictions. 
If a solution exists, some such $S$ will yield $\Profit(S) \ge k$.

\subsection{Dealing with allowed edges}\label{sec:comb-with-allowed-edges}

When allowed edges come into play, we encounter a further complication: the structure of acyclic subgraphs no longer admits the same clean characterization as before. 
Given a set of removed tree edges $S$, the blocked edges of any acyclic subgraph are still confined to $\Inv(S)$. 
However, once the subgraph includes allowed edges, we lose the guarantee that the whole $\Inv(S)$ can be taken without producing cycles. 
In this section, we unravel this difficulty. 

Our approach, however, will differ significantly from the one used in the integral case, so let us first compare two versions of the problem (integral- and rational-weighted) in the context of allowed edges and discuss why the methods we developed for our \classFPT-algorithm for integral weights cannot be directly applied to rational weights.

Recall that, at a high level, our \classFPT-algorithm consisted of two parts: (i) finding a small family of tree edge sets that are candidates for removal, and (ii) for a fixed such set, reducing the problem to \probWDFAS.
In this scheme, allowed edges posed no major issue: we simply retained all of them in the reduced instance and let the black-box algorithm for \probWDFASshort deal with them.

Although the second part of the scheme above can be easily modified to work with rational weights, the first part's building blocks (\Cref{lemma:directed-paths-cover} and \Cref{lemma:good-solution-set}) rely heavily on the integrality of profits.
Since we could not come up with any analogues of these lemmas for rational weights, we had to abandon the entire scheme and deal with allowed edges by other means. 

On the positive side, we note that by \Cref{lemma:allowed-edges}, the total weight of allowed edges is still bounded by $3k$. 

\subsubsection{$\Inv$-respecting subgraphs}

As a starting point, we aim to mimic the setting without allowed edges. 
Although subgraphs of the form $T - S + \Inv(S) + A'$ with $A' \subseteq A$ are not necessarily acyclic, it remains tempting to search for solutions of this kind, because their weight equals $\wt(T) + \Profit(S) + \wt(A')$. 
This would enable us to reuse the ideas tied to edge profits. 
We call such subgraphs \emph{$\Inv$-respecting}, and extend this terminology to instances that admit a solution of this structure.

Fortunately, we can always seek a solution that is almost $\Inv$-respecting, in the sense that the total weight of the inverse edges it discards is at most $\wt(A)$. 
\begin{claim}{claimAlmostInvResp}\label{claim:inv-respecting-sol}
	If $\mathcal{I} = (G, \wt, k)$ is a yes-instance of $\probMASMSTshort(\mathbb{Q}_{\ge 1})$, there exists a solution $X \subseteq G$ to $\mathcal{I}$ with $S = E(T) \setminus E(X)$ such that $\wt(\Inv(S) \setminus E(X)) \le \wt(A)$.
\end{claim}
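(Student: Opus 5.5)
We will argue by an exchange argument, in the same spirit as \Cref{claim:drop-all-allowed-edges-instead-of-very-unprofitable}. Among all solutions, pick $X$ that maximizes $|E(T)\cap E(X)|$, i.e.\ that minimizes $|S|$ for $S = E(T)\setminus E(X)$. If several remain, pick the one of maximum weight. By \Cref{obs:blocked-edges-in-acyclic-subgraph}, $X \subseteq T - S + \Inv(S) + A$. Write $D := \Inv(S)\setminus E(X)$ for the discarded inverse edges. The goal is to show $\wt(D)\le \wt(A)$. Suppose instead that $\wt(D) > \wt(A)$.

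Compare $X$ with the subgraph $X' := T - S + \Inv(S)$. It drops every allowed edge and restores every discarded inverse edge. By \Cref{lemma:inverse-edges-independence}, $X'$ is acyclic. It has the same set $S$ of removed tree edges. The edges of $X$ other than allowed edges all lie in $X'$, so
\[
\wt(X') \ge \wt(X) - \wt(A\cap E(X)) + \wt(D) > \wt(X) - \wt(A) + \wt(A) = \wt(X) \ge \wt(T)+k .
\]
Hence $X'$ is a solution with the same $S$ and strictly larger weight. This contradicts the tie-breaking choice of $X$, and more simply it contradicts choosing $X$ of maximum weight among solutions with this $S$. In $X'$ we have $\Inv(S)\setminus E(X') = \emptyset$, so the bound holds there anyway.

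This suggests an even simpler direct proof. Start from any solution $X$ with $S = E(T)\setminus E(X)$. If $\wt(\Inv(S)\setminus E(X))\le \wt(A)$, we are done. Otherwise $X' = T - S + \Inv(S)$ is acyclic by \Cref{lemma:inverse-edges-independence}, has weight at least $\wt(X)$ by the inequality above, and is therefore a solution. It also satisfies $E(T)\setminus E(X') = S$ and $\Inv(S)\setminus E(X') = \emptyset$, so the claimed bound holds for $X'$.

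There is no real obstacle here. The only points to check are the following.
\begin{itemize}
\item $X'$ contains all of $T - S$ and does not contain $S$. This holds because inverse edges are blocked edges, not tree edges.
\item The weight accounting is correct. Edges of $X$ fall into three groups: tree edges in $T-S$, blocked edges in $\Inv(S)$, and allowed edges. Only the allowed edges are lost in passing to $X'$, and all of $D$ is gained.
\end{itemize}
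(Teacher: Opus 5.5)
Your proof is correct and matches the paper's argument: if the bound fails, replace the solution by $T - S + \Inv(S)$, which is acyclic by \Cref{lemma:inverse-edges-independence}, keeps the same $S$, and gains $\wt(D)$ while losing at most $\wt(A)$. The extremal choice in your first paragraph is valid but unnecessary, as your direct version already shows.
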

\begin{claimproof}
	Consider an arbitrary solution $X'$, and let $S' := E(T) \setminus E(X')$.
	Define $B'$ as $\Inv(S') \setminus E(X')$.
	If $\wt(B') \le \wt(A)$, we are done.
	Otherwise, consider the subgraph $X := X' + B' - A$.
	Note that $X$ contains the same set of tree edges as $X'$.
	Moreover, since $X'$ is acyclic, \Cref{obs:blocked-edges-in-acyclic-subgraph} guarantees that $X' \subseteq T - S' + \Inv(S') + A$.
	Consequently, $X \subseteq T - S' + \Inv(S')$.
	From the definition of $B'$ it follows that $\Inv(S') \subseteq X$.
	Therefore, $X = T - S' + \Inv(S')$.
	
	Since $B' \cap E(X') = \emptyset$, we have $\wt(X) \ge \wt(X') + \wt(B') - \wt(A)$.
	Moreover, as we know that $X = T - S' + \Inv(S')$, 
	by \Cref{lemma:inverse-edges-independence} it is acyclic.
	Therefore, $X$ is also a feasible solution of $\mathcal{I}$.
	
	Additionally, $X$ satisfies the needed property.
	Indeed, let $S := E(T) \setminus E(X)$.
	As shown above, we have $S = S'$.
	Thus from $\Inv(S') \subseteq X$ it follows that $\wt(\Inv(S) \setminus E(X)) = 0 \le \wt(A)$.
\end{claimproof}

This allows us to reduce an arbitrary instance to an $\Inv$-respecting one in time $m^{\wt(A)}$: we simply guess which inverse edges are discarded by a solution and remove them from the graph.
If the guess is correct, the resulting instance admits an $\Inv$-respecting solution. 

\subsubsection{Compressed representation of acyclic subgraphs}

Note that an $\Inv$-respecting solution is defined by two edge sets: $S$ and $A'$. 
Moreover, due to \Cref{lemma:allowed-edges}, the number of allowed edges is at most $3k$.
Therefore, we can enumerate all $A' \subseteq A$, and for each choice, search for a suitable set $S$. 
To ensure that $T - S + \Inv(S) + A'$ is acyclic for a fixed $A'$, we rely on the following alternative characterization. 

Let $V'$ be the set of vertices incident to $A'$.
Since $X = T - S + \Inv(S)$ is acyclic, any cycle in $X + A'$ must intersect $V'$.
Furthermore, such a cycle can be decomposed into directed paths in $X$ and edges from $A'$, where the endpoints of each fragment lie in $V'$.
Consequently, instead of working with $X + A'$ directly, we may analyze the compressed graph on $V'$ whose edges represent directed paths in $X$ and the edges of $A'$.
We formalize this idea as follows. 
\begin{claim}\label{claim:forbid-paths-for-acyclicity}
	Let $X \subseteq T + B$ be an acyclic subgraph. For $A' \subseteq A$, denote by $V'$ the set of vertices incident to $A'$. 
	Then $X + A'$ is acyclic if and only if there exists a set $\overline{P} \subseteq V' \times V'$ such that:
	\begin{itemize}
		\item for each $(u, v) \in \overline{P}$, there is no directed path from $u$ to $v$ in $X$;
		\item let $G'$ be the directed graph on $V'$ where for two distinct $u, v \in V'$, we have $(u, v) \in E(G')$ if $(u, v) \in A'$ or $(u, v) \notin \overline{P}$. 
		Then $G'$ is acyclic. 
	\end{itemize}
\end{claim}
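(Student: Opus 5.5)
We prove the two directions separately. In both, the key object is the "reachability closure" of $X$ restricted to $V'$. Since $X$ is acyclic, any directed cycle in $X+A'$ must use at least one edge of $A'$. Cutting such a cycle at the endpoints of its $A'$-edges decomposes it into a cyclic sequence of fragments. Each fragment is either a single edge of $A'$ or a directed path of $X$, and every fragment starts and ends in $V'$.

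\emph{($\Rightarrow$)} Assume $X+A'$ is acyclic. Take $\overline{P}$ to be the set of all pairs $(u,v)\in V'\times V'$ such that $X$ has no directed $u$--$v$ path. The first condition then holds by definition. In the resulting $G'$, every edge $(u,v)$ either belongs to $A'$ or corresponds to a directed $u$--$v$ path of $X$. Suppose $G'$ contains a directed cycle. Replace each of its edges by the corresponding $A'$-edge or $X$-path. This gives a closed directed walk in $X+A'$ of positive length. Any closed directed walk of positive length contains a directed cycle, which contradicts the acyclicity of $X+A'$. Hence $G'$ is acyclic.

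\emph{($\Leftarrow$)} Let $\overline{P}$ satisfy both conditions, and suppose $X+A'$ contains a directed cycle $C$. As noted above, $C$ uses some edge of $A'$. List the endpoints of $A'$-edges in the order they appear along $C$; these vertices lie in $V'$. Consecutive listed vertices $u,v$ are joined along $C$ either by an $A'$-edge $(u,v)$, which is an edge of $G'$, or by a nontrivial subpath of $C$ lying in $X$. In the second case $X$ has a directed $u$--$v$ path, so $(u,v)\notin\overline{P}$ by the first condition. Moreover $u\neq v$, because the subpath is part of a simple cycle and has positive length, except when the whole cycle is a single segment, which we treat below. So $(u,v)\in E(G')$. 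We then obtain a closed walk in $G'$ whose number of edges equals the number of fragments, which is at least one.

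It remains to make sure this closed walk really yields a directed cycle in $G'$, and this is the only delicate point. First, $G'$ has no loops, and every edge we used joins distinct vertices. The only problematic case would be a closed walk with a single fragment, which would have to be an $A'$-edge from a vertex to itself. That is impossible, since $A'$-edges are allowed edges of $G$ and are not loops; indeed an allowed loop would itself be a cycle in $T+e$. Second, the vertices along the simple cycle $C$ are distinct, so the listed $V'$-vertices are distinct. Therefore the closed walk is in fact a directed cycle of $G'$ of length at least $2$. This contradicts the second condition, so $X+A'$ is acyclic.
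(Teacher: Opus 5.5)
Your proposal is correct and follows essentially the same argument as the paper. For ($\Rightarrow$), both take $\overline{P}$ to be all non-reachable pairs in $V'\times V'$ and lift a cycle of $G'$ to a positive-length closed walk in $X+A'$. For ($\Leftarrow$), both cut a cycle of $X+A'$ at its $V'$-vertices into single $A'$-edges and $X$-paths, which yields a cycle in $G'$; your extra remarks on loops and on distinctness of the listed vertices only make explicit what the paper leaves implicit.
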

\begin{claimproof}
	Suppose that $\overline{P}$ satisfies the required properties, but $X + A'$ has a cycle $C$. 
	Since $X$ is acyclic, $C$ must contain at least one edge from the set $A'$ and hence at least one vertex from $V'$. 
	Let $v_1, \ldots, v_t$ be a sequence of vertices from $V'$ that lie on $C$ in the order $C$ visits them, where $v_t = v_1$.
	For each pair of consecutive vertices $v_i$ and $v_{i + 1}$, $C$ contains a path $P$ from $v_i$ to $v_{i + 1}$. 
	By the choice of $v_1, \ldots, v_t$, none of the internal vertices of $P$ belong to $V'$.
	Observe that if the length of $P$ is more than one, then every edge on $P$ has at least one endpoint outside $V'$.
	In this case, $P$ cannot contain edges from $A'$ because all endpoints of $A'$ are in $V'$. 
	Thus, $P$ consists either of a single edge from $A'$ or entirely of edges from $X$.
	The latter case implies that $X$ contains a path from $v_i$ to $v_{i + 1}$, and hence $(v_i, v_{i + 1}) \notin \overline{P}$. 
	Since $X$ is acyclic and $P$ has length at least one, we also have $v_i \neq v_{i + 1}$. 
	Consequently, in both cases we obtain that $(v_i, v_{i + 1}) \in E(G')$.
	Then $v_1, \ldots, v_t$ is a cycle in $G'$, which contradicts the choice of $\overline{P}$. 
	
	Conversely, suppose that $X + A'$ is acyclic. 
	Consider the set $\overline{P}$ consisting of all pairs $(u, v) \in V' \times V'$ for which $X$ does not contain a directed path from $u$ to $v$. 
	Clearly, we need to verify only the second property. 
	Suppose, for a contradiction, that $G'$ has a cycle $v_1, \ldots, v_t$, where $v_t = v_1$. 
	For each pair of consecutive vertices $v_i, v_{i + 1}$, we know that $(v_i, v_{i + 1}) \in E(G')$. 
	Hence either $(v_i, v_{i + 1}) \in A'$, or $v_i \neq v_{i + 1}$ and $(v_i, v_{i + 1}) \notin \overline{P}$. 
	We claim that in both cases, $X + A'$ contains a directed path of length at least one from $v_i$ to $v_{i + 1}$. 
	In the former case, this path consists of the single edge $(v_i, v_{i + 1})$.
	In the latter case, the choice of $\overline{P}$ guarantees that there is a directed path from $v_i$ to $v_{i + 1}$ in $X$.
	Since $v_i \neq v_{i + 1}$, this path contains at least one edge. 
	Then $X + A'$ contains a directed cycle composed of these paths, contradicting the acyclicity of $X + A'$. 
\end{claimproof}

Therefore, we can ensure the acyclicity of $T - S + \Inv(S) + A'$ by choosing $\overline{P}$ that makes $G'$ acyclic and then prohibiting directed paths in $T - S + \Inv(S)$ for every $(u, v) \in \overline{P}$.

\subsubsection{Forbidding paths via constraint sets}
Surprisingly, subgraphs of the form $T - S + \Inv(S)$ that do not contain a given directed path can be described by a small collection of the same restrictions as used in \Cref{lemma:restricted-max-IS}. 
We call a triple $(F, R, B')$ from the statement of that lemma a \emph{restriction triple}, and any set of such triples a \emph{constraint set}.
A subgraph $X$, with $S = E(T) \setminus E(X)$, satisfies $(F, R, B')$ if the lemma conditions hold for $S$, and a constraint set if it satisfies at least one of its triples.

Recall that each blocked edge corresponds to a path in $T$ (see \Cref{obs:basic-inverse-edge-property}).
In our approach, for every edge from $B'$, we will need to identify the first and the last edge of the corresponding path that are absent from the subgraph.
We call a restriction triple \emph{strict} if its sets $F$ and $R$ uniquely determine these edges.
Formally, $(F, R, B')$ is called strict if for each $(u, v) \in B'$ there exist two edges $e_1, e_2 \in R$ (not necessarily distinct) such that:
\begin{itemize}
	\item $(u, v) \in \Inv(e_1) \cap \Inv(e_2)$;
	\item every edge of the path in $T$ from $v$ to the start of $e_1$ belongs to $F$;
	\item every edge of the path in $T$ from the end of $e_2$ to $u$ belongs to $F$.
\end{itemize}

If a subgraph $X = T - S + \Inv(S)$ satisfies such a strict triple, then $e_1, e_2 \in R \subseteq S$, so these two edges are absent from $X$.
Conversely, edges of the path before $e_1$ and after $e_2$ must belong to $X$ because they are present in $F$ and $S \cap F = \emptyset$. 

Given a strict restriction triple, we can filter the family of subgraphs $T - S + \Inv(S)$ satisfying it: a constraint set, constructed in the next lemma, keeps exactly those subgraphs that avoid the forbidden path.
This is by far the most difficult part of our \classXP-algorithm. 
\begin{lemma}\label{lemma:ban-single-path}
	Let $(G, \wt)$ be a digraph with edge weights. 
	For every $u, v \in V(G)$ and every strict restriction triple $(F_0, R_0, B_0')$, there exists a constraint set $\mathcal{C}$ such that:
	\begin{itemize}
		\item For every $S \subseteq E(T)$, let $X = T - S + \Inv(S)$. 
		If $X$ satisfies $\mathcal{C}$, then $X$ contains no directed path from $u$ to $v$. 
		Conversely, if $X$ satisfies $(F_0, R_0, B_0')$ and contains no directed path from $u$ to $v$, then $X$ satisfies $\mathcal{C}$.
		\item For each $(F, R, B') \in \mathcal{C}$, it holds that $B' = B_0'$, $F_0 \subseteq F$, $R_0 \subseteq R$, and each of the sets $F \setminus F_0$ and $R \setminus R_0$ can be represented as a union of at most $|B_0'| + 1$ directed paths in $T$.
	\end{itemize}
\end{lemma}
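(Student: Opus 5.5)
We would prove \Cref{lemma:ban-single-path} by describing explicitly, for a fixed strict triple $(F_0,R_0,B_0')$, how a directed $u$--$v$ path in $X=T-S+\Inv(S)$ can look, and then letting $\mathcal{C}$ enumerate all ways in which every such path is destroyed. Fix $S$ such that $X$ satisfies $(F_0,R_0,B_0')$. Then every blocked edge of $X$ outside $B_0'$ is inverse to exactly one edge of $S$, by the pairwise-intersection condition. For every $b\in B_0'$, strictness already fixes the first and last edges $e_1,e_2\in R_0$ of its tree path that lie in $S$. By \Cref{obs:basic-inverse-edge-property}, a blocked edge $b=(a,c)\in X$ is a ``backward jump'' along the directed tree path from $c$ to $a$, and this path contains some edge of $S$. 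Let $Q$ be the undirected tree path from $u$ to $v$. Any directed $u$--$v$ path in $X$ has to cross the cut of every edge $e$ of $Q$ in the direction from $u$ to $v$. By \Cref{claim:inverse-edges-structure}, blocked edges never go in the direction of their tree edge, so there are two cases.
\begin{itemize}
\item If $e$ is oriented along $Q$, the cut can only be crossed by $e$ itself, which requires $e\notin S$.
\item If $e$ is oriented against $Q$, the cut can only be crossed by a blocked edge of $\Inv(e)$ that lies in $X$. Such an edge exists iff the corresponding tree path meets $S$.
\end{itemize}

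The first step is a structural claim: $X$ contains a $u$--$v$ path iff a canonical walk exists. The walk follows $Q$, uses tree edges on forward-oriented stretches, and on each maximal backward-oriented stretch of $Q$ uses a chain of blocked jumps. Each jump goes from the current vertex's reachable tree-out-path to the corresponding tree-in-path. We will prove this by the cut argument above, together with the fact that vertices reachable from a point by tree edges form a directed subpath when $S$ edges are removed. Using this claim, ``no $u$--$v$ path'' becomes a disjunction over the first place where the walk gets stuck. Either some forward edge of $Q$ lies in $S$, or for some backward stretch the jumps cannot continue. The latter means there is a maximal segment of directed tree paths around the current frontier that contains no edge of $S$, so only blocked edges outside $\Inv(S)$ span it, except those from $B_0'$.

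Each disjunct is then encoded as one restriction triple. Keep $B'=B_0'$. Add to $R$ the guessed edge of $S$ that stops the walk. Add to $F$ the directed tree paths that must avoid $S$: these are the prefix along which the walk progressed, and the stretch beyond the frontier that no active jump reaches. Since $T$ is a tree, each such stretch is a directed path. The $B_0'$ edges are the only blocked edges that can jump over several $S$-edges. Strictness tells us exactly where their jumps land, so each of them contributes at most one extra ``break point'', hence one extra path in $F\setminus F_0$ and in $R\setminus R_0$. This gives the bound of $|B_0'|+1$ paths. Soundness (every $X$ satisfying a triple of $\mathcal{C}$ has no $u$--$v$ path) follows from the cut argument: a triple that fixes an $S$-edge together with a surrounding $S$-free stretch yields a cut of $X$ with no edge leaving the $u$-side.

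The main obstacle is the exhaustiveness direction. We must show that the greedy canonical walk really captures every $u$--$v$ path: a real path might use jumps out of order, or leave $Q$ far away and come back, in particular via $B_0'$ edges that skip many $S$-edges. We would first try to prove a monotonicity/exchange statement: any $u$--$v$ path in $X$ can be rerouted so that its intersection pattern with the cuts of $Q$ is monotone. For edges outside $B_0'$ this uses the uniqueness of the $S$-edge they span. For $B_0'$ edges it uses the pinned endpoints $e_1,e_2$ from strictness. The counting of paths in $F\setminus F_0$ and $R\setminus R_0$ then comes from the number of times the walk can switch behaviour, which is at most once per $B_0'$ edge plus one.
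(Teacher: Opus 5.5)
\paragraph{Where the proposal stands.}
Your outer decomposition is the same as the paper's. You cut the undirected tree path $Q$ at the endpoints of its forward-oriented edges, handle a forward edge by adding it to $R$ (justified by the cut argument from \Cref{claim:banned-edges-structure}), and take the union over segments. The gap is the backward case, where $T$ has a directed path $P$ from $v$ to $u$. Essentially all the work of the lemma lies there, and you leave it open; the plan you sketch also cannot give the bound, for the following reasons.
\begin{itemize}
\item On $P$ the walk is never stopped by an edge of $S$. An $S$-edge on $P$ is exactly what puts the blocked edges spanning it into $X$ and lets the walk descend.
\item So there is no ``guessed edge of $S$ that stops the walk'' to add to $R$. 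In the paper $R\setminus R_0$ is empty or the single forward edge.
\item For the same reason, a fixed $S$-edge with an $S$-free surrounding stretch yields no cut for soundness there.
\item Certifying the walk's progress (one $S$-edge or one traversed tree piece per jump) costs one path per jump. The number of jumps is unbounded, so $|B_0'|+1$ cannot come out.
\item With $B_0'$ edges present, several landing points on $P$ can be reachable at once. ``No path'' is then a conjunction over a whole frontier, not a disjunction over a single first stuck point.
\item Your claim that vertices reachable by tree edges form a directed subpath is false in a general oriented tree: the reachable set is an out-branching, so jumps may start off $Q$.
\end{itemize}

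\paragraph{The missing chain of ideas.}
\begin{enumerate}
\item[(i)] For $x$ on $P$ and a blocked edge $e=(a,b)$, let $x'(e)$ be the first vertex of the tree path from $b$ to $a$ that lies on $P$. Then $X$ has an $x$--$v$ path iff there is some $e\in\Inv(S)$ with $x'(e)$ below $x$ such that $T-S$ contains paths $x\to a$ and $b\to x'(e)$, and $X$ has an $x'(e)$--$v$ path. This is the rerouting statement you were after. Its proof applies \Cref{claim:banned-edges-structure} to the edges of the tree path from $b$ to $x'(e)$, showing that after its first blocked edge any $x$--$v$ path passes through $x'(e)$. Hence landing points do not depend on $S$.
\item[(ii)] For $e\in B_0'$, whether $e$ is usable from $x$ is determined by $R_0$ alone, by strictness.
\item[(iii)] For $e\notin B_0'$, $e$ is inverse to only one $S$-edge. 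So $e$ is usable iff $S$ meets the subpath of $P$ from $x'(e)$ to $x$. Among the $x$-active $e\notin B_0'$, only the one with the lowest landing, $e^*$, matters.
\item[(iv)] Recurse on a set $U$ of vertices that must not reach $v$. Replace the highest $x\in U$ by the landings of the usable $B_0'$ edges, then branch:
\begin{itemize}
\item either add the subpath of $P$ from $x'(e^*)$ to $x$ to $F$;
\item or add $x'(e^*)$ to $U$ \emph{with no constraint on $S$ at all}.
\end{itemize}
The second branch is sound by (iii) whether or not $e^*$ is actually usable. This over-approximation of the reachable frontier is what removes any need to certify progress.
\end{enumerate}
Finally, take the potential $|U|$ plus the number of $B_0'$ edges whose landing has not yet entered $U$. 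It never increases and drops by at least one with every addition to $F$, which gives the $|B_0'|+1$ bound.
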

\begin{proof}
	We first show how to construct a constraint set with the needed properties if $T$ contains a directed path $P$ from $v$ to $u$.
	This case is essential for the general construction. 
	For two vertices on $P$, we say that the higher of them is the one that appears later on $P$ (i.e., closer to $u$).
	The second vertex is then lower. 
	
	Let $x \neq v$ be a vertex lying on $P$. 
	Our approach is based on the fact that if a subgraph of the form $T - S + \Inv(S)$ contains a path from $x$ to $v$, then there always exists such a path of the specific structure: it has a prefix that leads to some vertex $x'$ on $P$ that is lower than $x$ and consists of tree edges and \textit{exactly one blocked edge}.
	Moreover, given a blocked edge on this prefix, we can identify the corresponding $x'$. 
	
	For a blocked edge $e = (a, b)$, let $x'(e)$ be the first vertex on the path in $T$ from $b$ to $a$ that belongs to $P$.
	If this path does not contain vertices from $P$, we leave $x'(e)$ undefined. 
	We say that $e$ is $x$-active if $a$ is reachable from $x$ in $T$, $x'(e)$ is defined and lower than $x$.
	Intuitively, an $x$-active edge is a candidate for the unique blocked edge in the path prefix described above.
	Note that if $e$ is $x$-active, then the path in $T$ from $b$ to $a$ consists of paths from $b$ to $x'(e)$, from $x'(e)$ to $x$ (such a path exists because $x'(e)$ is lower than $x$ on $P$), and from $x$ to $a$. 
	
	When we consider a specific subgraph $T - S + \Inv(S)$, not every $x$-active edge can be used in the path prefix: indeed, some $x$-active edges may be not present in the subgraph. Even if an edge is present, some needed tree edges may be outside $T - S$. 
	We say that $e$ is $(x, S)$-active if $e$ is $x$-active, $e \in \Inv(S)$, and $T - S + e$ contains a directed path from $x$ to $x'(e)$. 
	
	We then formalize the discussion above as follows.  
	\begin{claim}\label{claim:descend-P-using-single-blocked-edge}
		Consider a vertex $x \neq v$ lying on $P$ and a subgraph $X = T - S + \Inv(S)$ with $S \subseteq E(T)$. 
		$X$ contains a directed path from $x$ to $v$ if and only if there exists $e \in B$ such that:
		\begin{itemize}
			\item $e$ is $(x, S)$-active;
			\item $X$ contains a directed path from $x'(e)$ to $v$.
		\end{itemize}
	\end{claim}
	\begin{claimproof}
		If $e$ is $(x, S)$-active, then $T - S + e \subseteq X$ because $e \in \Inv(S)$.
		By the definition of the $(x, S)$-active edge, $T - S + e$ contains a path from $x$ to $x'(e)$, and thus $X$ contains this path as well. 
		Together with a path from $x'(e)$ to $v$ in $X$, this yields a path from $x$ to $v$. 
		
		We now show the converse implication. 
		Consider an arbitrary path $L$ in $X$ from $x$ to $v$.
		Note that $L$ must contain at least one blocked edge. 
		Otherwise, $v$ would be reachable from $x$ in $T - S$, and together with the subpath of $P$ from $v$ to $x$, this would yield a cycle in $T$, which is a contradiction. 
		Let $e = (a, b)$ be the first blocked edge on $L$. 
		The prefix of $L$ before $e$ leads from $x$ to $a$ and contains no blocked edges.
		Hence, it lies entirely in $T - S$.  
		Consequently, $T$ contains the path $P'$ from $v$ to $a$ formed by concatenating the subpath of $P$ from $v$ to $x$ with the path from $x$ to $a$.
		
		Since $e$ is blocked, \Cref{obs:basic-inverse-edge-property} guarantees that $T$ contains a path from $b$ to $a$.
		Let $x''$ be the first vertex on it that lies on $P'$. 
		Since $a$ lies on $P'$, $x''$ is always defined. 
		Suppose that $x'' \neq b$, and pick any edge $e'$ on the path in $T$ from $b$ to $x''$.
		Let $R$ and $Q$ be the weakly connected components of $T - e'$ labeled so that $b \in R$ and $x'' \in Q$, and let $q$ be the endpoint of $e'$ that lies in $R$. 
		By the choice of $x''$, we know that $q$ does not belong to $P'$. 
		Since $T$ is a tree, this implies that $P'$ lies entirely within $Q$.
		Indeed, suppose some vertex $r$ from $P'$ was in $R$. 
		Then the unique path in $T$ between $r \in R$ and $x'' \in Q$ (which is a subpath of $P'$) must contain the edge $e'$, and hence the vertex $q$. 
		Thus $q$ would lie on $P'$, a contradiction. 
		Therefore, all vertices of $P'$ are in $Q$, and hence so is the vertex $v$. 
		By \Cref{claim:banned-edges-structure}, the only edge in $T + B$ that leads from $R$ to $Q$ is $e'$.
		Note that the suffix of $L$ after the edge $e$ leads from $b \in R$ to $v \in Q$, and hence it contains $e'$. 
		Then $e' \in E(X)$, and since it is a tree edge, we have $e' \in E(T - S)$.  
		Therefore, the entire path from $b$ to $x''$ lies in $T - S$.
		Combined with a path from $x$ to $a$ in $T - S$ and the edge $e$, it forms a path from $x$ to $x''$ in $T - S + e$.  
		
		We now consider two cases based on the position of $x''$ on $P'$. 
		First, assume that $x''$ lies on the subpath of $P'$ from $x$ to $a$ (which lies entirely in $T - S$).
		Then there is also a path from $x''$ to $a$ in this subgraph. 
		Combined with the path from $b$ to $x''$ in $T - S$, this implies that $T - S$ contains a path from $b$ to $a$. 
		By definition, $a \neq b$. 
		Thus, this path together with the edge $e$ yields a cycle in $T - S + e \subseteq X$.
		But \Cref{lemma:inverse-edges-independence} states that $X$ is acyclic, a contradiction. 
		Therefore, $x''$ cannot lie on this part of $P'$.
		
		It follows that $x''$ lies on $P$ and is lower than $x$. 
		We claim that $x'(e) = x''$. 
		Assume the contrary. 
		Then $x'(e)$ appears earlier than $x''$ on the path from $b$ to $a$. 
		By the choice of $x''$, this would mean that $x'(e)$ does not lie on $P'$.
		Since $P'$ contains the subpath of $P$ from $v$ to $x$, we obtain that $x'(e)$ is higher than $x$. 
		But in this case, $T$ contains a path from $x'(e)$ to $x''$ (a part of the path from $b$ to $a$ between them) and a path from $x''$ to $x'(e)$ (a subpath of $P$ between them), which contradicts its acyclicity. 
		
		We conclude by verifying that $e$ satisfies all the required properties. 
		As shown above, $x'(e) = x''$ is lower than $x$ and $T - S$ contains a path from $x$ to $a$.
		Hence, $e$ is $x$-active. 
		Moreover, $e \in E(X) = E(T - S + \Inv(S))$, and $T - S + e$ contains a path from $x$ to $x'' = x'(e)$.
		Therefore, $e$ is $(x, S)$-active. 
		Finally, we need to show that $X$ contains a path from $x''$ to $v$. 
		If $x'' = b$, then the suffix of $L$ after the edge $e$ goes from $x''$ to $v$. 
		Otherwise, the path from $b$ to $x''$ in $T - S$ is non-empty. 
		Consider the last edge on this path. 
		As discussed above, $L$ must contain this edge, and hence $x''$ belongs to $L$ as well. 
		Then we again have a suffix of $L$ which leads from $x''$ to $v$. 
		Since $L \subseteq X$, this completes the proof. 
	\end{claimproof}
	
	This allows us to forbid a path from $x$ to $v$ in $T - S + \Inv(S)$ in a recursive manner: for each $x$-active edge, we either impose some restrictions on $S$ to ensure that this edge is not $(x, S)$-active, or forbid a path from the corresponding $x'$ to $v$. 
	Fortunately, if the subgraph satisfies $(F_0, R_0, B_0')$, this can be done efficiently. 
	
	We first show that $(x, S)$-activeness of edges from $B_0'$ does not actually depend on $S$ and is determined by the set $R_0$. 
	\begin{claim}\label{claim:(x-S)-activeness-for-B0}
		Consider a vertex $x$ lying on $P$.
		For each $x$-active edge $e \in B_0'$ and each subgraph $X = T - S + \Inv(S)$ satisfying $(F_0, R_0, B_0')$, $e$ is $(x, S)$-active if and only if $T - R_0 + e$ contains a path from $x$ to $x'(e)$. 
	\end{claim}
	\begin{claimproof}
		Since $X$ satisfies $(F_0, R_0, B_0')$, we have $e \in B_0' \subseteq \Inv(R_0)$ and $R_0 \subseteq S$, which implies that $e \in \Inv(S)$. 
		Therefore, it remains to check that there is a path from $x$ to $x'(e)$ in $T - S + e$ if and only if the same path is present in $T - R_0 + e$. 
		One implication is trivial: if such a path exists in $T - S + e$, then $T - R_0 + e$ also contains this path because $R_0 \subseteq S$ and hence $T - S + e \subseteq T - R_0 + e$. 
		We now show the converse implication. 
		
		Suppose that there is a path from $x$ to $x'(e)$ in $T - R_0 + e$.
		Since $e$ is $x$-active, $x'(e)$ is lower than $x$ on $P$, and hence $x$ is reachable from $x'(e)$ in $T$. 
		Then the path from $x$ to $x'(e)$ must use the edge $e$. 
		Let $a$ and $b$ be the starting and ending vertices of $e$, respectively.
		It follows that $T - R_0$ contains paths from $x$ to $a$ and from $b$ to $x'(e)$. 
		
		Since $(F_0, R_0, B_0')$ is a strict restriction triple and $e \in B_0'$, the path in $T$ from $b$ to $a$ contains edges $e_1, e_2 \in R_0$. 
		Moreover, it is guaranteed that all edges on the paths from $b$ to the start of $e_1$ and from the end of $e_2$ to $a$ belong to $F_0$. 
		
		Since $e$ is $x$-active, $x$ lies on the path from $b$ to $a$. 
		Then $x$ appears later than $e_2$ on this path because there is a path from $x$ to $a$ in $T - R_0$. 
		Consequently, all edges of the path from $x$ to $a$ are in $F_0$. 
		Since $X$ satisfies $(F_0, R_0, B_0')$, $S \cap F_0 = \emptyset$, and therefore the path from $x$ to $a$ is present in $T - S$.
		Symmetrically, we can show that the path from $b$ to $x'(e)$ is also present in $T - S$. 
		Then these paths together with $e$ form a path from $x$ to $x'(e)$ in $T - S + e$. 
	\end{claimproof}
	
	It follows that for each $x$-active edge $e \in B_0'$ such that $T - R_0 + e$ contains a path from $x$ to $x'(e)$, we must forbid a path from $x'(e)$ to $v$. 
	
	For the edges from $B \setminus B_0'$, we use a different idea. 
	We first characterize their $(x, S)$-activeness using only the interaction between $P$ and $S$. 
	\begin{claim}\label{claim:(x-S)-activeness-outside-B0'}
		Let $X = T - S + \Inv(S)$ with $S \subseteq E(T)$ be a subgraph that satisfies $(F_0, R_0, B_0')$. 
		Consider a vertex $x$ lying on $P$. 
		An $x$-active edge $e \in B \setminus B_0'$ is $(x, S)$-active if and only if the subpath of $P$ from $x'(e)$ to $x$ contains an edge from $S$. 
	\end{claim}
	\begin{claimproof}
		Let $a$ and $b$ be the starting and ending vertices of $e$, respectively.
		Since $e$ is $x$-active, the path from $b$ to $a$ in $T$ consists of paths from $b$ to $x'(e)$, from $x'(e)$ to $x$, and from $x$ to $a$. 
		
		Suppose that $e$ is $(x, S)$-active. 
		Then $e \in \Inv(s)$ for some $s \in S$.
		By \Cref{obs:basic-inverse-edge-property}, all such $s$ lie on the path from $b$ to $a$ in $T$. 
		On the other hand, there must be a path in $T - S + e$ from $x$ to $x'(e)$.
		Since $x'(e)$ is lower than $x$, this path must use the edge $e$.
		Consequently, the paths in $T$ from $x$ to $a$ and from $b$ to $x'(e)$ do not contain edges from $S$. 
		Then $s$ must lie on the path from $x'(e)$ to $x$. 
		
		Conversely, assume that there is an edge $s \in S$ on the path from $x'(e)$ to $x$. 
		Then this edge also lies on the path from $b$ to $a$ in $T$. 
		By \Cref{obs:basic-inverse-edge-property}, it follows that $e \in \Inv(s)$.
		Since $X$ satisfies $(F_0, R_0, B_0')$ and $e \in B \setminus B_0'$, $e$ cannot be an inverse edge to multiple elements of $S$, and hence $s$ is the only edge on the path from $b$ to $a$ that lies in $S$. 
		Then $T - S$ contains the paths from $x$ to $a$ and from $b$ to $x'(e)$.
		This implies that $T - S + e$ contains a path from $x$ to $x'(e)$. 
		Therefore, $e$ is $(x, S)$-active. 
	\end{claimproof}
	
	As we show below, this implies that if some edge from $B \setminus B_0'$ can be used as a unique blocked edge in the prefix of the path from $x$ to $v$, then the edge $e \in B \setminus B_0'$ with the lowest $x'(e)$ can also be used on such a prefix. 
	
	\begin{claim}\label{claim:can-descend-through-lowest-outside-B0}
		Let $X = T - S + \Inv(S)$ with $S \subseteq E(T)$ be a subgraph that satisfies $(F_0, R_0, B_0')$. 
		Consider a vertex $x$ lying on $P$. 
		If there exists $e' \in B \setminus B_0'$ such that $e'$ is $(x, S)$-active and $X$ contains a path from $x'(e')$ to $v$, then $e \in B \setminus B_0'$ with the lowest $x'(e)$ is also $(x, S)$-active and $X$ contains a path from $x'(e)$ to $v$. 
	\end{claim}
	\begin{claimproof}
		Since $e'$ is $(x, S)$-active, \Cref{claim:(x-S)-activeness-outside-B0'} guarantees that the subpath of $P$ from $x'(e')$ to $x$ contains an edge $s \in S$. 
		By the choice of $e$, the vertex $x'(e)$ is not higher than $x'(e')$ on $P$, and hence $s$ also belongs to the subpath of $P$ from $x'(e)$ to $x$.
		Consequently, by the same claim, $e$ is $(x, S)$-active.
		
		We now aim to show that a subpath of $P$ from $x'(e)$ to $x'(e')$ does not contain edges from $S$. 
		Assume the contrary, and let $s'$ be such an edge. 
		Note that $s \neq s'$. 
		It follows that the subpath of $P$ from $x'(e)$ to $x$ contains two edges from $S$. 
		Since $e$ is $x$-active, these edges also lie on the path from $b$ to $a$ in $T$.
		By \Cref{obs:basic-inverse-edge-property}, we have $e \in \Inv(s) \cap \Inv(s')$. 
		However, this contradicts the fact that $X$ satisfies $(F_0, R_0, B_0')$ because $e \notin B_0'$.
		Therefore, $T - S \subseteq X$ contains the path from $x'(e)$ to $x'(e')$. 
		Consequently, a path from $x'(e)$ to $v$ in $X$ can be obtained by concatenating the path from $x'(e)$ to $x'(e')$ with the path from $x'(e')$ to $v$. 
		
	\end{claimproof}
	
	We now construct a constraint set that forbids a path from $u$ to $v$. 
	To obtain such a set, we use a recursive algorithm $\mathcal{A}$ that takes as an input $U$ and $F \supseteq F_0$, where $U$ is a set of vertices lying on $P$ and $F \subseteq E(T)$. 
	The constraint set returned by this algorithm will have the following properties: for every $S \subseteq E(T)$, define $X := T - S + \Inv(S)$.
	If $X$ satisfies $\mathcal{A}(U, F)$, then $X$ contains no directed path from $u'$ to $v$ for any $u' \in U$.
	Conversely, if $X$ satisfies $(F, R_0, B_0')$ and contains no directed path from $u'$ to $v$ for any $u' \in U$, then $X$ satisfies $\mathcal{A}(U, F)$.
	Moreover, for each $(F', R', B') \in \mathcal{A}(U, F)$, we will have $F \subseteq F', R' = R_0, B' = B_0'$, and $F' \setminus F$ will be a union of at most $|B_0'| + |U|$ directed paths in $T$. 
	
	We output the result of the call $\mathcal{A}(\{u\}, F_0)$.
	Note that if $\mathcal{A}$ is correct, this matches the lemma statement. 
	
	Let us start with two trivial cases.
	If $v \in U$, then $\mathcal{A}(U, F) = \emptyset$.
	This output is correct because the antecedents of both implications are false: no subgraph satisfies the empty constraint set by definition, and every subgraph contains a path from $v \in U$ to $v$. 
	If $U = \emptyset$, then there are no paths to forbid, so we simply set $\mathcal{A}(U, F) = \{(F, R_0, B_0')\}$. 
	From now on, we assume that $U \neq \emptyset$ and $v \notin U$.
	
	Let $x$ be the highest vertex in $U$. 
	Note that $x \neq v$.
	We first find all edges $e \in B_0'$ such that $e$ is $x$-active and $T - R_0 + e$ contains a path from $x$ to $x'(e)$. 
	Let $U'$ be the set of vertices $x'(e)$ for all such edges. 
	We define $U_1$ as $(U \cup U') \setminus \{x\}$.
	
	If $B \setminus B_0'$ does not contain $x$-active edges, we define $U_2 := U_1$ and $F' := \emptyset$. 
	Otherwise, let $e \in B \setminus B_0'$ be an $x$-active edge with the lowest $x'(e)$.
	We define $U_2$ as $U_1 \cup \{x'(e)\}$. 
	Since $e$ is $x$-active, $x'(e)$ is lower on $P$ than $x$. 
	Let $F'$ be the set of edges on the subpath of $P$ from $x'(e)$ to $x$. 
	Finally, we set $\mathcal{A}(U, F) = \mathcal{A}(U_1, F \cup F') \cup \mathcal{A}(U_2, F)$. 
	
	Note that the highest vertices in $U_1$ and $U_2$ are strictly lower than $x$. 
	Indeed, by construction, $U_1 \setminus U$ and $U_2 \setminus U$ contain only vertices of the form $x'(e)$ for $x$-active edges $e$. 
	By the definition of $x$-activeness, this implies that $x'(e)$ is lower than $x$.
	Additionally, $U_1$ and $U_2$ do not contain $x$. 
	
	Therefore, the computation graph is acyclic. 
	We now prove the correctness of $\mathcal{A}$. 
	Let us start with the structure of a fixed triple $(\tilde{F}, \tilde{R}, \tilde{B}') \in \mathcal{A}(U, F)$. 
	By construction, we have $\tilde{R} = R_0$ and $\tilde{B}' = B_0'$. 
	Moreover, in both recursive calls the set $F$ can only expand. At a leaf, $\mathcal{A}$ generates a restriction triple with the given set $F$. 
	Hence, $F \subseteq \tilde{F}$. 
	We prove the remaining structural property in a separate claim. 
	\begin{claim}
		For each $(\tilde{F}, \tilde{R}, \tilde{B}') \in \mathcal{A}(U, F)$, the set $\tilde{F} \setminus F$ is a union of at most $|B_0'| + |U|$ directed paths in $T$. 
	\end{claim}
	\begin{claimproof}
		For a fixed triple $(\tilde{F}, \tilde{R}, \tilde{B}') \in \mathcal{A}(U, F)$, consider the branch of the recursion that produced it. 
		Let the sequence of arguments of $\mathcal{A}$ along this branch be $(U_1', F_1'), \ldots, (U_t', F_t')$, where $(U_1', F_1') = (U, F)$ and $F_t' = \tilde{F}$ (the last equation follows from the form of produced constraint sets in the leaves of $\mathcal{A}$).
		The structure of recursive calls of $\mathcal{A}$ implies that $F'_i \subseteq F'_{i + 1}$ for each $i$. 
		We denote the highest vertex in $U_i'$ by $x_i$. 
		As discussed above, $x_{i + 1}$ is always lower than $x_i$. 
		Making a recursive call, $\mathcal{A}$ adds to $U$ only the elements of the form $x'(e)$ for some $e \in B$. 
		Moreover, if $x'(e) \in U'_{i + 1} \setminus U'_i$, then the edge $e$ is always $x_i$-active.  
		
		For a fixed $e \in B$, let $I(e) \subseteq \{1, \ldots, t\}$ be a set of indices $i$ with $x'(e) \in U_i'$. 
		We claim that $I(e)$ forms a contiguous subsegment. 
		Assume the contrary. 
		Then there exist indices $\ell, r$ with $r > \ell + 1$ such that $\ell, r \in I(e)$, but $\{\ell + 1, \ldots, r - 1\} \cap I(e) = \emptyset$. 
		Hence, $x'(e) \in U'_{\ell} \setminus U'_{\ell + 1}$. 
		Recall that in each recursive call, $\mathcal{A}$ removes only the highest vertex of the set. 
		Therefore, $x'(e) = x_{\ell}$. 
		On the other hand, since $x'(e) \in U'_{r} \setminus U'_{r - 1}$, the edge $e$ must be $(x_{r - 1})$-active. 
		By the definition of $x$-activeness, this implies that $x_{\ell} = x'(e)$ is lower than $x_{r - 1}$, which leads us to a contradiction because $\ell < r - 1$. 
		
		For an edge $e$, let $\ell(e)$ be the minimum element of $I(e)$. If $I(e) = \emptyset$, we set $\ell(e) = \infty$. 
		Also, let $\ell_i$ be the number of $e \in B_0'$ with $\ell(e) = i$, and let $p_i := \sum_{j \le i} \ell_j$. 
		To bound the growth of $F_i'$, we define the potential function $\varphi(i) := |U_i'| + (|B_0'| - p_i)$. 
		Note that $\varphi(1) = |U| + (|B_0'| - p_1) \le |B_0'| + |U|$. 
		Also, since $|U_t'| \ge 0$ and $p_t \le |B_0'|$, we have $\varphi(t) \ge 0$. 
		
		We now analyze how $\varphi$ changes during a single transition. First, note that 
		\begin{equation*}
			\varphi(i + 1) - \varphi(i) = (|U_{i + 1}'| - |U_i'|) + (p_i - p_{i + 1}) = |U_{i + 1}' \setminus U_i'| - |U_i' \setminus U_{i + 1}'| - \ell_{i + 1}
		\end{equation*}
		Recall that $U_{i + 1}' \setminus U_i'$ consists of elements of the form $x'(e)$, where $e \in B$. 
		Consider $x'(e) \in U_{i + 1}' \setminus U_i'$ such that $e \in B_0'$. 
		Since $I(e)$ is a contiguous subsegment, $i \notin I(e)$, and $i + 1 \in I(e)$, we have $\ell(e) = i + 1$. 
		Consequently, we can associate every such $x'(e)$ with the edge $e$ that increases $\ell_{i + 1}$. 
		Let $U'' := U_{i + 1}' \setminus U_i' \setminus \{x'(e) \mid e \in B_0'\}$. 
		We obtain that
		\begin{equation*}
			\varphi(i + 1) - \varphi(i) \le |U''| - |U_i' \setminus U_{i + 1}'|
		\end{equation*}
		Furthermore, since in each recursive call $\mathcal{A}$ removes exactly one element of the set, we have $|U_i' \setminus U_{i + 1}'| = 1$, and hence the difference above is at most $|U''| - 1$. 
		
		We now distinguish two cases based on the type of transition between $(U_i', F_i')$ and $(U_{i + 1}', F_{i + 1}')$. 
		Recall that $\mathcal{A}$ makes two recursive calls. 
		In the first of them (with parameters $U_1$ and $F \cup F'$), it adds to $U$ only elements $x'(e)$ with $e \in B_0'$.
		Hence, $|U''| = 0$, and therefore $\varphi(i + 1) - \varphi(i) \le -1$. 
		Moreover, the new set $F$ in this recursive call is obtained by adding one subpath of $P$ (and therefore one directed path in $T$). 
		
		In the second recursive call, $\mathcal{A}$ adds to $U$ at most one element $x'(e)$ with $e \notin B_0'$.
		Consequently, $|U''| \le 1$ and $\varphi(i + 1) - \varphi(i) \le 0$. 
		Additionally, in this recursive call the set $F$ remains unchanged. 
		
		Therefore, the potential is never increases and each growth of $F'$ by a single directed path in $T$ is associated with a decrease of the potential by at least one. 
		Combining this with the fact that $\varphi(1) - \varphi(t) \le |B_0'| + |U|$, we derive that $\tilde{F} \setminus F$ is a union of at most $|B_0'| + |U|$ directed paths in $T$. 
	\end{claimproof}
	
	For the remaining properties of $\mathcal{A}$, we apply an induction. 
	The base cases ($U = \emptyset$ and $v \in U$) are considered above. 
	The following claim shows the correctness of a single transition. 
	\begin{claim}\label{claim:recursive-constraint-correctness}
		If $\mathcal{C}_1 := \mathcal{A}(U_1, F \cup F')$ and $\mathcal{C}_2 := \mathcal{A}(U_2, F)$ are correct, then so is $\mathcal{C} := \mathcal{A}(U, F)$. 
	\end{claim}
	\begin{claimproof}
		Assume, for a contradiction, that $X = T - S + \Inv(S)$ satisfies $\mathcal{C}$ but contains a path from $x$ to $v$ for some $x \in U$.	
		Since $\mathcal{C} = \mathcal{C}_1 \cup \mathcal{C}_2$, it follows that $X$ satisfies either $\mathcal{C}_1$ or $\mathcal{C}_2$. 
		Because $U_1 \subseteq U_2$, the inductive hypothesis implies in both cases that $X$ contains no path from any $u' \in U_1$ to $v$.
		By the definition of $U_1$, $U \setminus U_1$ consists of a single element, which is the highest vertex in $U$. 
		Therefore, $x$ must be exactly this vertex.  
		
		Since $v \notin U$, we know that $x \neq v$. 
		\Cref{claim:descend-P-using-single-blocked-edge} then provides an edge $e \in B$ such that $e$ is $(x, S)$-active and $X$ contains a path from $x'(e)$ to $v$.
		Also, as discussed above, each triple $(\tilde{F}, \tilde{R}, \tilde{B}') \in \mathcal{C}$ obeys $\tilde{R} = R_0$, $\tilde{B}' = B_0'$, and $F \subseteq \tilde{F}$. 
		Recall that $\mathcal{A}$ works only with sets $F$ such that $F_0 \subseteq F$. 
		Thus $F_0 \subseteq \tilde{F}$. 
		By our assumption, $X$ must satisfy some restriction triple from $\mathcal{C}$.
		Then it also satisfies $(F_0, R_0, B_0')$.
		We now distinguish two cases based on whether $e$ belongs to $B_0'$.
		
		\textbf{Case $e \in B_0'$.}
		According to \Cref{claim:(x-S)-activeness-for-B0}, $T - R_0 + e$ contains a path from $x$ to $x'(e)$. 
		By construction, $x'(e) \in U_1 \subseteq U_2$.
		But since $X$ contains a path from $x'(e)$ to $v$, this contradicts the correctness of either $\mathcal{C}_1$ or $\mathcal{C}_2$.
		
		\textbf{Case $e \in B \setminus B_0'$.} 
		Recall that $e$ is $(x, S)$-active and $X$ contains a path from $x'(e)$ to $v$.  
		\Cref{claim:can-descend-through-lowest-outside-B0} then tells us that the edge $e' \in B \setminus B_0'$ with the lowest $x'(e')$ enjoys the same two properties.
		By definition, $x'(e') \in U_2$.
		Therefore, $X$ cannot satisfy $\mathcal{C}_2$ because there is a path in $X$ from $x'(e')$ to $v$. 
		Given that $\mathcal{C} = \mathcal{C}_1 \cup \mathcal{C}_2$, $X$ must then satisfy $\mathcal{C}_1 = \mathcal{A}(U_1, F \cup F')$. 
		Together with the structure of restriction triples that $\mathcal{A}$ can produce, this implies that for each $(\tilde{F}, \tilde{R}, \tilde{B}) \in \mathcal{C}_1$, it holds that $F \cup F' \subseteq \tilde{F}$. 
		Hence, $S \cap F' = \emptyset$ (as $S$ is disjoint from $\tilde{X}$ of the satisfied triple). 
		Because we have an $x$-active edge from $B \setminus B_0'$, the set $F'$ is defined as a subpath of $P$ from $x'(e')$ to $x$.
		However, then \Cref{claim:(x-S)-activeness-outside-B0'} states that $e'$ cannot be $(x, S)$-active, which yields a contradiction again. 
		
		In both cases, we obtain a contradiction. 
		Therefore, if $X$ satisfies $\mathcal{C}$, it contains no path from any $x \in U$ to $v$. 
		
		We now establish the converse direction. 
		Assume that $X$ satisfies $(F, R_0, B_0')$ (and therefore also $(F_0, R_0, B_0')$) and that $X$ contains no path from any vertex of $U$ to $v$.
		Let $x$ be the highest vertex in $U$ (as before, $x \neq v$).  
		In particular, $X$ does not contain a path from $x$ to $v$. 
		\Cref{claim:descend-P-using-single-blocked-edge} then guarantees that for every edge $e \in B$, either $e$ is not $(x, S)$-active, or $X$ contains no path from $x'(e)$ to $v$. 
		
		We first examine edges from $B_0'$.
		By definition, $U_1 \setminus U$ consists of such $x'(e)$ that $e \in B_0'$, $e$ is $x$-active, and $T_0 - R_0 + e$ contains a path from $x$ to $x'(e)$. 
		By \Cref{claim:(x-S)-activeness-for-B0}, all such edges $e$ are $(x, S)$-active.
		This implies that there is no path in $X$ from $x'(e)$ to $v$.  
		Consequently, $X$ contains no path from any vertex of $U_1$ to $v$. 
		We now consider two cases. 
		
		\textbf{Case $F' \cap S = \emptyset$.}
		Then $X$ satisfies $(F \cup F', R_0, B_0')$.  
		By inductive hypothesis, $X$ satisfies $\mathcal{C}_1$, and therefore $\mathcal{C}$ as well. 
		
		\textbf{Case $F' \cap S \neq \emptyset$.}
		It follows that $F'$ is nonempty. 
		Recall that $\mathcal{A}$ defines $F'$ this way only if there exists an $x$-active edge from $B \setminus B_0'$. 
		In this case, $F'$ consists of edges lying on the subpath of $P$ from $x'(e)$ to $x$, where $e \in B \setminus B_0'$ is an $x$-active edge with the lowest $x'(e)$. 
		\Cref{claim:(x-S)-activeness-outside-B0'} then tells us that $e$ is $(x, S)$-active.
		Returning to the earlier dichotomy, we conclude that $X$ contains no path from $x'(e)$ to $v$. 
		Since $U_2 = U_1 \cup \{x'(e)\}$ and we have already shown that no path exists from any vertex of $U_1$ to $v$, the inductive hypothesis for $U_2$ applies.  
		Thus $X$ satisfies $\mathcal{C}_2$, and consequently $\mathcal{C}$.
		
		In both cases, $X$ satisfies $\mathcal{C}$, which completes the converse direction.
	\end{claimproof}
	
	Therefore, the constraint set $\mathcal{C} := \mathcal{A}(\{u\}, F_0)$ satisfies all the properties required by the lemma. 
	
	Finally, we generalize this construction of $\mathcal{C}$ to arbitrary vertices $u$ and $v$.
	Consider a path from $u$ to $v$ in the underlying graph of $T$. 
	Let $L := v_1, \ldots, v_t$ be the sequence of vertices on this path, where $v_1 = u$ and $v_t = v$.  
	For each pair of consecutive vertices $v_i$, $v_{i + 1}$ such that the edge in $T$ between them is directed from $v_i$ to $v_{i + 1}$, we mark $v_i$ and $v_{i + 1}$. 
	Additionally, we mark $v_1$ and $v_t$. 
	
	We first show that if $X$ is of the form $T - S + \Inv(S)$ for some $S \subseteq T$, then any path from $u$ to $v$ in $X$ must visit all marked vertices in the order they appear in $L$.
	Obviously, any such path contains $v_1$.  
	Now, suppose that some path from $u$ to $v$ contains all marked vertices up to $v_\ell$ in the required order and does not contain $v_r$, where $v_\ell$ and $v_r$ are marked and all vertices between them are unmarked. 
	Clearly, $v_r \neq v_1$ and $v_r \neq v_t$. 
	Then $T$ contains either an edge $(v_{r - 1}, v_r)$ or an edge $(v_r, v_{r + 1})$. 
	Let $e$ be any of these edges that is present in $T$, and let $R$ and $Q$ be the weakly connected components of $T - e$ labeled so that $v_\ell \in R$ and $v_r \in Q$.
	Note that $v_t$ is then also in $Q$. 
	\Cref{claim:banned-edges-structure} states that no blocked edge goes from $R$ to $Q$. 
	Additionally, the only tree edge that goes between these components is $e$. 
	Since $X \subseteq T + B$, the only edge in $X$ leading from $R$ to $Q$ is $e$.
	Consequently, any path from $v_\ell \in R$ to $v_t \in Q$ must contain the edge $e$, and therefore the vertex $v_r$, which contradicts the definition of $v_r$.  
	
	For each pair $v_\ell$ and $v_r$ of consecutive marked vertices in $L$ (such that $v_{\ell + 1}, \ldots, v_{r - 1}$ are unmarked), we construct a separate constraint set $\mathcal{C}_{\ell, r}$, and then define $\mathcal{C}$ as their union. 
	For a given pair $v_\ell$ and $v_r$, we distinguish two cases. 
	\begin{enumerate}
		\item 
		If $\ell + 1 = r$ and $T$ contains an edge $(v_\ell, v_r)$, we define $\mathcal{C}_{\ell, r}$ as $\{(F_0, R, B_0')\}$, where $R = R_0 \cup \{(v_\ell, v_r)\}$. 
		Note that this restriction triple obeys the second property of the lemma statement. 
		
		\item 
		Otherwise, $T$ contains a path from $v_r$ to $v_\ell$. 
		Indeed, if $\ell + 1 = r$ and the previous case does not apply, then $(v_r, v_\ell) \in E(T)$. 
		If $\ell + 1 < r$, then all vertices $v_i$ with $\ell < i < r$ are unmarked, and hence the tree $T$ contains the edges $(v_i, v_{i - 1})$ and $(v_{i + 1}, v_i)$. 
		Concatenating these edges yields a path from $v_r$ to $v_\ell$.
		
		We now aim to forbid a path from $v_\ell$ to $v_r$ in $X$, having the opposite path in $T$.
		The algorithm $\mathcal{A}$ provides the corresponding constraint set. 
	\end{enumerate}
	
	It remains to verify that $\mathcal{C}$ has the first property stated by the lemma. 
	Suppose that $X = T - S + \Inv(S)$ satisfies $\mathcal{C}$. 
	According to the definition of $\mathcal{C}$, it follows that $X$ then satisfies one of the constraint sets $\mathcal{C}_{\ell, r}$.
	We claim that $X$ does not contain a path from $v_\ell$ to $v_r$.
	Given that they are consecutive marked vertices in $L$, this would imply that there is also no path from $u$ to $v$ in $X$.  
	
	If $\mathcal{C}_{\ell, r}$ comes from the second case considered above, then the absence of the path from $v_\ell$ to $v_r$ is guaranteed by the \Cref{claim:recursive-constraint-correctness}. 
	Otherwise, $X$ satisfies the only restriction triple $(F_0, R, B_0')$. 
	Consequently, $R \subseteq S$, and hence $(v_\ell, v_r) \in S$, implying that this edge is not present in $X$. 
	We denote this edge by $e$. 
	Similarly to the discussion above, $e$ is the only edge in $T + B$ that leads from the weakly connected component of $v_\ell$ to the weakly connected component of $v_r$ in $T - e$. 
	Since $e \notin E(X)$ and $X \subseteq T + B$, $X$ then does not contain a path from $v_\ell$ to $v_r$.
	
	We move on to the converse direction. 
	Suppose that $X$ satisfies $(F_0, R_0, B_0')$ and contains no path from $u$ to $v$. 
	Then for some pair of consecutive marked vertices $v_\ell$ and $v_r$, the path from $v_\ell$ to $v_r$ is also not present in $X$ (as otherwise the path from $u$ to $v$ can be assembled from these paths). 
	If $\mathcal{C}_{\ell, r}$ is obtained using $\mathcal{A}$, \Cref{claim:recursive-constraint-correctness} implies that $X$ will satisfy $\mathcal{C}_{\ell, r}$, and therefore $\mathcal{C}$. 
	Otherwise, since $X$ contains no path from $v_\ell$ to $v_r$, it cannot contain the edge $(v_\ell, v_r)$.
	Hence, $(v_\ell, v_r) \in S$. 
	Given that $X$ satisfies $(F_0, R_0, B_0')$, it follows that $X$ satisfies $(F_0, R, B_0')$ as well, which belongs to $\mathcal{C}$. 
	
	The proof is completed. 
\end{proof}

Finally, we combine this section's results into the main tool for handling allowed edges. 

\begin{lemma}\label{lemma:constraint-set}
	If $\mathcal{I} = (G, \wt, k)$ is an $\Inv$-respecting instance of $\probMASMSTshort(\mathbb{Q}_{\ge 1})$, then there exists a family of constraint sets $\{\mathcal{C}_{A'} \mid A' \subseteq A\}$ with the following properties:
	\begin{itemize}
		\item for each $S \subseteq E(T)$ and $A' \subseteq A$, if $T - S + \Inv(S) + A'$ satisfies $\mathcal{C}_{A'}$, then it is acyclic; 
		\item there exist $S \subseteq E(T)$ and $A' \subseteq A$ such that $T - S + \Inv(S) + A'$ is a solution to $\mathcal{I}$ that satisfies $\mathcal{C}_{A'}$; 
		\item for each triple $(F, R, B') \in \mathcal{C}_{A'}$, it holds that $\wt(B') \le (\wt(A) + k)^2$, and each of the sets $F$ and $R$ can be represented as a union of at most $4(\wt(A) + k + 1)^4$ directed paths in $T$.
	\end{itemize}
\end{lemma}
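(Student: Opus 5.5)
We build $\mathcal{C}_{A'}$ for a fixed $A'\subseteq A$ by combining \Cref{lemma:bounded-weight-of-B'}, \Cref{claim:forbid-paths-for-acyclicity} and \Cref{lemma:ban-single-path}. Let $V'$ be the set of endpoints of $A'$, so $|V'|\le 2|A'|\le 2\wt(A)$ (all weights are at least one). We first enumerate \emph{base} strict triples. For every $B'_0\subseteq B$ with $\wt(B'_0)\le(\wt(A)+k)^2$, so $|B'_0|\le(\wt(A)+k)^2$, and for every choice of two tree edges $e_1(b),e_2(b)$ on the $T$-path of each $b\in B'_0$, we do the following. Put $R_0=\{e_1(b),e_2(b)\mid b\in B'_0\}$ and let $F_0$ consist of the prefixes and suffixes of these paths before $e_1(b)$ and after $e_2(b)$. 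If $F_0\cap R_0\ne\emptyset$, we discard the choice. Each resulting triple is strict, and $F_0,R_0$ are unions of at most $2|B'_0|$ directed paths; here a single edge counts as a path.

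Next we fix such a base triple and a set $\overline P\subseteq V'\times V'$ for which the graph $G'$ of \Cref{claim:forbid-paths-for-acyclicity} is acyclic. We then forbid the pairs of $\overline P$ one at a time. We apply \Cref{lemma:ban-single-path} to the first pair $(u,v)\in\overline P$ with the base triple. This gives a constraint set, and each of its triples is again strict, because $B'$ is unchanged and $F,R$ only grow. We therefore apply the lemma to each of these triples for the next pair, and continue. The final constraint set $\mathcal{C}_{A'}$ is the union, over all base triples and all admissible $\overline P$, of the triples obtained after processing every pair.

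The three properties follow from this construction.

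\emph{Soundness.} Suppose $X=T-S+\Inv(S)$ satisfies some triple of $\mathcal{C}_{A'}$. By the first point of \Cref{lemma:ban-single-path}, applied at every stage, $X$ has no $u\to v$ path for any $(u,v)\in\overline P$. The graph $X$ is acyclic by \Cref{lemma:inverse-edges-independence}, so \Cref{claim:forbid-paths-for-acyclicity} shows that $X+A'$ is acyclic.

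\emph{Completeness.} Since $\mathcal{I}$ is $\Inv$-respecting, \Cref{lemma:bounded-weight-of-B'} gives a solution $T-S+\Inv(S)+A'$ whose overlap set $B'_0$ satisfies $\wt(B'_0)\le(\wt(A)+k)^2$. For each $b\in B'_0$, take $e_1(b)$ and $e_2(b)$ to be the first and last edges of $S$ on the path of $b$; such edges exist because $b$ is inverse to at least two edges of $S$. Then $X$ satisfies this base triple: $S$ avoids $F_0$ by the choice of first and last edges, and every pair of distinct elements of $S$ shares inverse edges only inside $B'_0$. Take $\overline P$ to be all pairs in $V'\times V'$ not connected by a path in $X$; this is exactly the converse direction of \Cref{claim:forbid-paths-for-acyclicity}. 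The converse part of \Cref{lemma:ban-single-path}, applied inductively pair by pair, shows that $X$ satisfies some final triple.

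\emph{Size.} The bound $\wt(B')\le(\wt(A)+k)^2$ holds since $B'$ never changes. Each application of \Cref{lemma:ban-single-path} adds at most $|B'_0|+1$ paths to $F$ and to $R$. There are at most $|V'|^2\le 4\wt(A)^2$ pairs. Hence $F$ is a union of at most $2|B'_0|+4\wt(A)^2(|B'_0|+1)$ directed paths. With $|B'_0|\le(\wt(A)+k)^2$ this is at most $4(\wt(A)+k+1)^4$, and the same bound holds for $R$.

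The main obstacle is the completeness step, specifically showing that the edges $e_1(b),e_2(b)$ taken as first and last elements of $S$ give a \emph{strict} triple that $X$ satisfies. Strictness as defined requires exactly the first and last removed edges, so this choice should work. Pairwise overlap of $\Inv(s)$ only within $B'_0$ holds by the definition of $B'_0$. The remaining routine check is that iterating \Cref{lemma:ban-single-path} preserves strictness, which is immediate since $F_0\subseteq F$ and $R_0\subseteq R$.
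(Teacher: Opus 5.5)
Your proof is correct and follows the paper's construction: the same base strict triple built from the first and last removed edges on the path of each overlapping blocked edge, the same choice of $\overline{P}$ via \Cref{claim:forbid-paths-for-acyclicity}, and the same path count $2|B'|+4\wt(A)^2(|B'|+1)$. The only difference is in how the pairs of $\overline{P}$ are combined. The paper applies \Cref{lemma:ban-single-path} to each pair independently with the same base triple $(F_0,R_0,B')$ and then takes the componentwise union of one chosen triple per pair, whereas you chain the applications pair by pair; this is equally valid, given your correct observation that strictness and $B'\subseteq\Inv(R)$ are preserved as $F$ and $R$ grow.
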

\begin{proof}
	For each $A'$, we construct $\mathcal{C}_{A'}$ as follows.
	Let $V'$ be the set of vertices incident to $A'$. 
	We first fix a set $B' \subseteq B$ of weight at most $(\wt(A) + k)^2$ and a set $\overline{P}$ such that the graph $G'$ from \Cref{claim:forbid-paths-for-acyclicity} constructed by $A'$ and $\overline{P}$ is acyclic. 
	Additionally, for each $(a, b) \in B'$, we fix two edges on the directed path in $T$ from $b$ to $a$ (such a path exists by \Cref{obs:basic-inverse-edge-property}).
	
	For a fixed choice of $\overline{P}$, $B'$, and pairs of edges, we start by constructing sets $F_0$ and $R_0$. 
	For each $(a, b) \in B'$, let $e_1$ and $e_2$ be the chosen edges on the path from $b$ to $a$ in $T$. 
	Assume that $e_1$ appears earlier on this path than $e_2$.
	Then we add $e_1$ and $e_2$ to $R_0$, and we add to $F_0$ all edges lying on the paths in $T$ from $b$ to the start of $e_1$ and from the end of $e_2$ to $a$.
	By construction, $(F_0, R_0, B')$ is a strict restriction triple. 
	Note that the edge sets added to $R_0$ and $F_0$ for a fixed element of $B'$ are the unions of two directed paths in $T$.
	Then $R_0$ and $F_0$ can each be represented as a union of at most $2|B'|$ directed paths in $T$.
	
	For each $(u, v) \in \overline{P}$, we apply \Cref{lemma:ban-single-path} to the vertices $u, v$ and the triple $\mathcal{R} := (F_0, R_0, B')$ and obtain a constraint set $\mathcal{C}_{\mathcal{R}, u, v}$. 
	We note that the lemma guarantees that for each $(F, R, B'') \in \mathcal{C}_{\mathcal{R}, u, v}$, we have $B'' = B'$.
	For each choice of triples $(F_{u, v}, R_{u, v}, B') \in \mathcal{C}_{\mathcal{R}, u, v}$ for all $(u, v) \in \overline{P}$, we add the triple $\left(\bigcup F_{u, v}, \bigcup R_{u, v}, B'\right)$ to $\mathcal{C}_{A'}$. 
	This completes the construction. 
	
	We now verify the required properties one by one. 
	First, suppose that a subgraph $X := T - S + \Inv(S) + A'$ with $S \subseteq E(T)$ and $A' \subseteq A$ satisfies $\mathcal{C}_{A'}$. 
	Then $X' := X - A'$ also satisfies this constraint set.
	Consider a satisfied restriction triple $\left(\bigcup F_{u, v}, \bigcup R_{u, v}, B'\right) \in \mathcal{C}_{A'}$ and the corresponding sets $\overline{P}$, $F_0$ and $R_0$.
	It follows that for each $(u, v) \in \overline{P}$, $X'$ satisfies $(F_{u, v}, R_{u, v}, B')$. 
	Let $\mathcal{R} := (F_0, R_0, B')$.
	Since $(F_{u, v}, R_{u, v}, B') \in \mathcal{C}_{\mathcal{R}, u, v}$, $X'$ also satisfies this constraint set. 
	Consequently, by \Cref{lemma:ban-single-path}, $X'$ does not contain a directed path from $u$ to $v$ for every $(u, v) \in \overline{P}$. 
	We are now in a position to apply \Cref{claim:forbid-paths-for-acyclicity}. 
	Indeed, by \Cref{lemma:inverse-edges-independence}, $X'$ is acyclic. 
	Moreover, by the choice of $\overline{P}$, the graph $G'$ constructed by $\overline{P}$ and $A'$ is also acyclic.
	Then we obtain that $X = X' + A$ is acyclic too. 
	
	We proceed to the second point of the lemma. 
	By \Cref{lemma:bounded-weight-of-B'}, there exists a solution $X$ of $\mathcal{I}$ with $S := E(T) \setminus E(X)$ such that the weight of
	\begin{equation*}
		B' := \bigcup\limits_{\substack{s_1, s_2 \in S \\ s_1 \neq s_2}} (\Inv(s_1) \cap \Inv(s_2))
	\end{equation*}
	is at most $(\wt(A) + k)^2$. 
	Moreover, since $\mathcal{I}$ is $\Inv$-respecting, we can assume that $X$ is also $\Inv$-respecting.
	Hence, $X = T - S + \Inv(S) + A'$ for some $A' \subseteq A$.
	Since $X$ is a solution of $\mathcal{I}$, it is acyclic (and hence so is $X' := X - A'$).
	Then \Cref{claim:forbid-paths-for-acyclicity} guarantees that there exists a set $\overline{P} \subseteq V' \times V'$ such that: 
	\begin{itemize}
		\item the graph $G'$ constructed by $\overline{P}$ and $A'$ is acyclic;
		\item for each $(u, v) \in \overline{P}$, $X'$ has no directed path from $u$ to $v$.
	\end{itemize}
	
	By the definition of $B'$, for each $(a, b) \in B'$, we have at least two edges $s \in S$ such that $(a, b) \in \Inv(s)$.
	Moreover, by \Cref{obs:basic-inverse-edge-property}, all such $s$ lie on the path from $b$ to $a$ in $T$.
	For each $(a, b) \in B'$, take the first and the last edge on this path that belong to $S$.
	Note that during the construction of $\mathcal{C}_{A'}$, we have considered these $\overline{P}$, $B'$, and these pairs of edges for each $(a, b) \in B'$.
	Let $F_0$ and $R_0$ be the sets constructed by $B'$ and these pairs of edges. 
	Then $X'$ satisfies $\mathcal{R} := (F_0, R_0, B')$.
	Indeed, by construction, each edge from $F_0$ does not belong to $S$, while each edge from $R_0$ is in $S$.
	The remaining property follows directly from the definition of $B'$.
	Additionally, recall that for each $(u, v) \in \overline{P}$, $X'$ does not contain a path from $u$ to $v$. 
	Then by \Cref{lemma:ban-single-path}, $X'$ satisfies $\mathcal{C}_{\mathcal{R}, u, v}$. 
	Let $(F_{u, v}, R_{u, v}, B'') \in \mathcal{C}_{\mathcal{R}, u, v}$ be a satisfied triple.
	Again, this lemma guarantees that $B'' = B'$.
	Consequently, $X'$ (and therefore $X$) satisfies $\left( \bigcup F_{u, v}, \bigcup R_{u, v}, B' \right)$, which belongs to $\mathcal{C}_{A'}$.
	
	Finally, we prove the last point of the lemma. 
	Consider a restriction triple $\left(\bigcup F_{u, v}, \bigcup R_{u, v}, B'\right) \in \mathcal{C}_{A'}$ and the corresponding sets $\overline{P}$, $F_0$ and $R_0$.
	By the choice of $B'$, we have $\wt(B') \le (\wt(A) + k)^2$.
	We show the needed property for the set $F := \bigcup F_{u, v}$, the argument for $\bigcup R_{u, v}$ is identical. 
	Let $\mathcal{R} := (F_0, R_0, B')$.
	Since for each $(u, v) \in \overline{P}$, we know that $(F_{u, v}, R_{u, v}, B') \in \mathcal{C}_{\mathcal{R}, u, v}$, where $\mathcal{C}_{\mathcal{R}, u, v}$ is constructed using \Cref{lemma:ban-single-path}, it follows that $F_0 \subseteq F_{u, v}$. 
	If $\overline{P} = \emptyset$, then $F = \emptyset$ as well. 
	Otherwise, we have 
	\begin{equation*}
		F = \left(\bigcup (F_{u, v} \setminus F_0)\right) \cup F_0
	\end{equation*}
	By \Cref{lemma:ban-single-path}, each $F_{u, v} \setminus F_0$ can be represented as a union of at most $|B'| + 1$ directed paths in $T$. 
	Since $\overline{P} \subseteq V' \times V'$, it holds that $|\overline{P}| \le |V'|^2 \le (2|A'|)^2$ (last inequality follows from the definition of $V'$).
	Recall that $F_0$ is a union of at most $2|B'|$ directed paths in $T$.
	This implies that $F$ is a union of at most
	\begin{equation*}
		4|A'|^2 \cdot (|B'| + 1) + 2|B'|
	\end{equation*}
	directed paths in $T$. 
	Because all edge weights in $G$ are at least one, we have $|B'| \le \wt(B') \le (\wt(A) + k)^2$ and $|A'| \le |A| \le \wt(A)$. 
	Consequently, we obtain the following bound:
	\begin{equation*}
		\begin{split}
			4|A'|^2 \cdot (|B'| + 1) + 2|B'| & \le 4\wt(A)^2 \cdot ((\wt(A) + k)^2 + 1) + 2(\wt(A) + k)^2 \\ & \le 4(\wt(A) + k)^2 \cdot ((\wt(A) + k)^2 + 1) + 4(\wt(A) + k)^2 \\ & \le 4((\wt(A) + k)^2 + 1)^2 \\ & \le 4(\wt(A) + k + 1)^4
		\end{split}
	\end{equation*}
	The proof is completed. 
\end{proof}

Although the proofs of the two preceding lemmas are constructive, the final algorithm does not need to build the corresponding constraint sets.
Instead, because the sets $B'$, $F$ and $R$ enjoy the special structure, we can enumerate all triples with the required properties. In particular, this enumeration covers every element of the constraint set.

\subsection{Summing up}

We are ready to prove our second main result, \Cref{thm:rational-xp}. For convenience, we restate the theorem below. 

\thmXPRat*
\begin{proof}
	Let $\mathcal{I} = (G, \wt, k)$ be an instance of $\probMASMSTshort_{\mathbb{Q}_{\ge 1}}$. 
	If $\wt(A) \ge 3k$, then the algorithm from  \Cref{lemma:allowed-edges} constructs a solution to $\mathcal{I}$ in polynomial time. 
	From now on, we assume that $\wt(A) < 3k$.
	
	We iterate over all possible choices of a set $D \subseteq B$ of weight at most $\wt(A)$, a set $B' \subseteq B \setminus D$ of weight at most $(\wt(A) + k)^2$, and sets $P_F, P_R \subseteq V \times V$, each of size at most $4(\wt(A) + k + 1)^4$. 
	Additionally, we require that for every $(u, v)$ from $P_F \cup P_R$, $T$ contains a directed path from $u$ to $v$. 
	For a fixed choice of $D$, $B'$, $P_F$ and $P_R$, we begin by constructing the sets 
	$$
	\begin{aligned}
		& F = \bigcup_{(u, v) \in P_F} E_{u, v},\\ 
		& R = \bigcup_{(u, v) \in P_R} E_{u, v},
	\end{aligned}
	$$
	where $E_{u, v}$ denotes the set of edges on the path from $u$ to $v$ in $T$. 
	
	Let $G' = G - D$, and let $\wt'$ be the restriction of $\wt$ to $E(G')$.
	Since $D$ consists only of blocked edges, we know that $T \subseteq G'$, and hence $\MSF(G', \wt') = T$.
	Therefore, $F, R \subseteq E(\MSF(G', \wt'))$ by the definition of these sets. 
	
	We check whether $B' \subseteq \Inv(G', \wt', R)$, and if so, apply the algorithm from \Cref{lemma:restricted-max-IS} to $(G', \wt')$ and the triple $(F, R, B')$, which returns a set $S \subseteq E(T)$.
	For each $A' \subseteq A$, we check whether the subgraph $T - S + \Inv(S) + A'$ is a solution to $\mathcal{I}$.
	
	If no check succeeds over all choices of $D$, $B'$, $P_F$, $P_R$, and $A'$, we report that $\mathcal{I}$ is a no-instance.
	
	\medskip\noindent\textbf{Running time.}
	We note that for a fixed choice of the sets $D$, $B'$, $P_F$, $P_R$, and $A'$, our algorithm works in polynomial time. 
	Indeed, we first construct $F$, $R$, and $G'$, then we check that $B' \subseteq \Inv(G', \wt', R)$, apply the algorithm from \Cref{lemma:restricted-max-IS}, and finally verify that $T - S + \Inv(S) + A'$ is a feasible solution to $\mathcal{I}$.
	All these parts are polynomial. 
	Therefore, it remains to bound the number of choices of the sets $D$, $B'$, $P_F$, $P_R$ and $A'$. 
	
	Recall that all edge weights are at least one.
	Hence, if the weight of a set is bounded by some value, then the same bound applies for its size.
	Since $|B| \le m \le n^2$, there are at most $n^{2\wt(A)}$ ways to choose $D$ and at most $n^{2(\wt(A) + k)^2}$ possible sets $B'$. 
	Similarly, because $|E(T)| = n - 1$, we have at most $n^{4(\wt(A) + k + 1)^4}$ ways to choose each of the sets $P_F$ and $P_R$. 
	Finally, $|A| \le \wt(A)$, and hence there are at most $2^{\wt(A)}$ possible sets $A'$. 
	Combining these bounds with $\wt(A) < 3k$, we obtain that there are $n^{\mathcal{O}\left(k^4\right)}$ ways to choose these sets. 
	
	\medskip\noindent\textbf{Correctness.}
	By construction, our algorithm outputs only feasible solutions. 
	Therefore, it remains to check that it does not report false negatives. 
	Suppose that $\mathcal{I}$ is a yes-instance.
	In this case, by \Cref{claim:inv-respecting-sol}, it has a solution $X$ with $S = E(T) \setminus E(X)$ and $D = \Inv(G, \wt, S) \setminus E(X)$ such that $\wt(D) \le \wt(A)$. 
	Let $G' = G - D$, and let $\wt'$ be the restriction of $\wt$ to $E(G')$. 
	By the definition of $D$, it holds that $E(X) \cap D = \emptyset$, and hence $X$ is a subgraph of $G'$. 
	
	Since $D \subseteq B$, we have $\MSF(G', \wt') = T$. 
	Because allowed, blocked, and inverse edges are defined based on the maximum spanning tree, it follows that $A(G', \wt') = A(G, \wt)$, $B(G', \wt') = B(G, \wt) \setminus D$, and $\Inv(G', \wt', e) = \Inv(G, \wt, e) \setminus D$ for each $e \in E(T)$. 
	We claim that $X$ is an $\Inv$-respecting subgraph in $G'$. 
	Since $X$ is a solution to $\mathcal{I}$, it is acyclic. 
	By applying \Cref{obs:blocked-edges-in-acyclic-subgraph} to the graph $G'$, we know that $X$ can contain blocked edges only from the set $\Inv(G', \wt', S) = \Inv(G, \wt, S) \setminus D$.
	Additionally, by the definition of $D$, we have $\Inv(G, \wt, S) \setminus E(X) = D$, which implies $\Inv(G', \wt', S) \subseteq E(X)$.
	Therefore, $X = T - S + \Inv(G', \wt', S) + A'$ for some $A' \subseteq A$. 
	Note that $X$ is also a solution to $\mathcal{I}' = (G', \wt', k)$. 
	Consequently, $\mathcal{I}'$ is an $\Inv$-respecting instance. 
	
	\Cref{lemma:constraint-set} provides a family of constraint sets $\{\mathcal{C}_{A'} \mid A' \subseteq A\}$ for the instance $\mathcal{I}'$. 
	Moreover, there exist $S' \subseteq E(T)$ and $A' \subseteq A$ such that $X' = T - S' + \Inv(G', \wt', S') + A'$ is a solution to $\mathcal{I}'$ that satisfies the constraint set $\mathcal{C}_{A'}$. 
	Let $(F, R, B') \in \mathcal{C}_{A'}$ be the satisfied constraint triple. 
	By definition, we have $B' \subseteq B(G', \wt') = B(G, \wt) \setminus D$.
	Additionally, the lemma guarantees that $\wt(B') \le (\wt(A) + k)^2$, and each of the sets $F$ and $R$ equals the union of at most $4(\wt(A) + k + 1)^4$ directed paths in $T$. 
	Let $P_F$ and $P_R$ be the sets of endpoints of such paths for $F$ and $R$, respectively. 
	
	From the definition of $D$, $B'$, $P_F$, and $P_R$, it follows that our algorithm must have considered these sets. 
	Moreover, given $P_F$ and $P_R$, it constructed exactly the sets $F$ and $R$.
	Since $(F, R, B')$ is a restriction triple, we have $B' \subseteq \Inv(G', \wt', R)$.
	Hence, our algorithm applied \Cref{lemma:restricted-max-IS} to $(G', \wt')$ and constraints $(F, R, B')$ and obtained a set $S''$. 
	Since $S'$ is also feasible under these constraints and $S''$ has a maximum profit, we have $\Profit(G', \wt', S'') \ge \Profit(G', \wt', S')$.
	
	Consider the subgraph $X'' = T - S'' + \Inv(G', \wt', S'') + A'$. 
	Since $S''$ is feasible under the constraints given by $(F, R, B')$, $X''$ satisfies this restriction triple, and therefore it also satisfies $\mathcal{C}_{A'}$.
	Hence, \Cref{lemma:constraint-set} guarantees that $X''$ is acyclic. 
	Note that 
	$$
	\wt(X'') = \wt(T) + \Profit(G', \wt', S'') + \wt(A') \ge \wt(T) + \Profit(G', \wt', S') + \wt(A') = \wt(X')
	$$
	Since $X'$ is a solution to $\mathcal{I}'$, we have $\wt(X') \ge \wt(T) + k$, and therefore $\wt(X'') \ge \wt(T) + k$. 
	Consequently, $X''$ is a solution to $\mathcal{I}$, which is found by our algorithm. 
\end{proof}

\section{Conclusion}\label{sec:concl}

We conclude with several open questions and directions for further research.
First of all, the parameterized complexity of \probMASMSTshort remains unsettled, when we deal with rational weights.

\begin{problem}
	Is $\probMASMSTshort(\mathbb{Q}_{\ge 1})$ \classW{1}-hard or fixed-parameter tractable?
\end{problem}

Another standalone question concerns \Cref{lemma:allowed-edges}, and it might be of independent interest.
\Cref{lemma:allowed-edges} in fact shows that a solution of weight at least $\wt(T)+ \lceil\wt(A)/3 \rceil$ is guaranteed for integer weights and can be found in polynomial time, using known constructions of linear orderings for oriented trees.
Can we improve the denominator $3$ in this guarantee using more specific constructions?
If yes, can we find the required subset of allowed edges in polynomial time?
If no, is there a no-instance with integer weights and $\wt(A)=3k-3$?

\begin{problem}
		Can the constant $3$ in \Cref{lemma:allowed-edges} be improved?
\end{problem}

Our positive results for the $\MaxST$ guarantee suggest that \probMAS above the Poljak-Turz\'{i}k guarantee could also admit efficient algorithms.
We re-formulate the corresponding open question, which was already posed in  \cite{etscheid2018linear} for integral weights and for a broader class of problems.

\begin{problem}
	Can we find an acyclic subgraph of weight at least $$\wt(G)/2+\wt(\MinST(G,\wt))/4+k$$ in a directed graph $G$ with positive integer edge weights $\wt: E(G)\to \mathbb{Z}_{\ge 1}$ in \classFPT-time?
\end{problem}

We note that for the \textsc{Maximum Cut} problem, the Poljak--Turz\'{i}k guarantee also applies, and in \cite{MaxCutPT} Lill, Petrova and Weber have shown that a cut of weight at least $\wt(E(G))/2+\wt(\MinST(G,\wt))/4+k$ can be found in $2^{\Oh(k)}\cdot |G|$ time, if $G$ has integer edge weights.
Similarly, the weight of the maximum spanning tree is also a viable lower bound for the maximum cut weight in $G$.
This gives rise to the following natural open problem.

\begin{problem}
	Can we find a cut of weight at least $\wt(\MaxST(G,\wt))+k$ in an undirected graph $G$ with positive integer edge weights $\wt: E(G)\to \mathbb{Z}_{\ge 1}$ in \classFPT-time?
\end{problem}

For the last two open questions, the weights could as well be replaced with rationals not less than one.

One can also study the existence of efficient kernelization algorithms.
The main result of \cite{etscheid2018linear} demonstrates that many problems parameterized above the Poljak--Turz\'{i}k bound admit linear kernels, in the unweighted (oriented graph) case.
In \cite{MaxCutPT}, the authors show that \textsc{Maximum Cut} parameterized above Poljak--Turz\'{i}k is in \classFPT in the case of integral weights.
Can we improve this result and prove that it admits polynomial kernels?
Can we do the same for \Cref{thm:integral-fpt} and show that $\probMASMSTshort(\mathbb{Z}_{\ge 1})$ admits polynomial kernels?
We summarize this direction in the following question.

\begin{problem}
	Do polynomial kernels exist for any relevant weighted problem parameterized above the Poljak--Turz\'{i}k or the maximum spanning tree guarantees?
\end{problem}

Obtaining a negative answer to this question for some weighted parameterized problem, that, on the other hand, admits an \classFPT-algorithm, would also be very interesting.

\bibliographystyle{alpha}
\bibliography{ref}

\end{document}